\newcommand{\mb}{\bm}
\newcommand{\mr}{\mathrm}
\newcommand{\BE}{\begin{equation}}
\newcommand{\EE}{\end{equation}}
\newcommand{\BS}{\begin{subequations}}
\newcommand{\ES}{\end{subequations}}
\newcommand{\UT}{\mathsf{T}}   
\newcommand{\Mydef}{\overset{  \scriptscriptstyle \Delta  }{=}}
\newtheorem{theorem}{Theorem}
\newtheorem{definition}{Definition}
\newtheorem{remark}{Remark}
\newtheorem{lemma}{Lemma}
\newcommand{\mmse}{\mathsf{mmse}}
\newcommand{\snr}{\mathsf{snr}}
\newcommand{\var}{\mathrm{var}}
\newcommand{\Tr}{\mathsf{Tr}}
\newcommand{\Alg}{{GLM-EP}}
\newcommand{\perfect}{\delta^{\textsf{p}}}
\DeclarePairedDelimiter\floor{\lfloor}{\rfloor}
\title{Towards Designing Optimal Sensing Matrices for Generalized Linear Inverse Problems}
\author{Junjie Ma,~Ji Xu,~Arian Maleki

\thanks{This paper was presented at the Thirty-fifth Conference on Neural Information Processing Systems, NeurIPS 2021.}

\thanks{J.~Ma was with the Department of Statistics of Columbia University, New York, USA. He is now with the Institute of Computational Mathematics and Scientific/Engineering Computing, Academy of
Mathematics and Systems Science, Chinese Academy of Sciences, Beijing, China. (e-mail:majunjie@lsec.cc.ac.cn). J.~Xu was with the Department of Computer Science, Columbia University, New York, USA. (e-mail: jixu@cs.columbia.edu). A. Maleki is with the Department of Statistics, Columbia University, New York, USA. (e-mail: arian@stat.columbia.edu).}

\thanks{J. Ma was partially supported by National Natural Science Foundation of China (Grant NO. 12101592) and the Key Research Program of the
Chinese Academy of Sciences (Grant NO. XDPB15, NO. XDA27010102). A. Maleki was partially supported by a Google Faculty Research Award and the office of naval research (Grant N00014-23-1-2371).}
}
\begin{document}

\maketitle

\begin{abstract}

We consider an inverse problem $\bm{y}= f(\bm{Ax})$, where $\bm{x}\in\mathbb{R}^n$ is the signal of interest, $\bm{A}$ is the sensing matrix, $f$ is a nonlinear function and $\bm{y} \in \mathbb{R}^m$ is the measurement vector. In many applications, we have some level of freedom to design the sensing matrix $\bm{A}$, and in such circumstances we could optimize $\bm{A}$ to achieve better reconstruction performance. As a first step towards optimal design, it is important to understand the impact of the sensing matrix on the difficulty of recovering $\bm{x}$ from $\bm{y}$.

In this paper, we study the performance of one of the most successful recovery methods, i.e., the expectation propagation (EP) algorithm. We define a notion of spikiness for the spectrum of $\bm{A}$ and show the importance of this measure for the performance of EP. We show that whether a spikier spectrum can hurt or help the recovery performance depends on $f$. Based on our framework, we are able to show that, in phase-retrieval problems, matrices with spikier spectrums are better for EP, while in 1-bit compressed sensing problems, less spiky spectrums lead to better performance. Our results unify and substantially generalize existing results that compare Gaussian and orthogonal matrices, and provide a platform towards designing optimal sensing systems.
\end{abstract}

\section{Introduction}
\subsection{Problem statement and contributions}
Consider the problem of estimating a signal $\bm{x}\in\mathbb{R}^n$ from the nonlinear measurements:
\BE\label{Eqn:model}
\bm{y} = f\big(\bm{Ax}\big),
\EE
where $\bm{A}\in\mathbb{R}^{m\times n}$ is a sensing matrix and $f:\mathbb{R}\mapsto\mathcal{Y}$ is a function accounting for possible nonlinear effect of the measuring process. Here, the function $f(\cdot)$ is applied to $\bm{Ax}$ in a component-wise manner. The above model arises in many applications of signal processing \cite{Candes2013,boufounos20081,Rangan11}, communications \cite{wang2010optimized,bereyhi2019rls,Genzel19}, and machine learning \cite{shinzato2008perceptron,plan2016generalized}. For instance, the phase retrieval problem, which is a special case of \eqref{Eqn:model} with $f(z)=|z|$, has received significant interest in recent years \cite{Candes2013, CaLiSo15, ChenCandes17, Wang2016,Zhang2016reshaped,gao2020perturbed, Goldstein2016phasemax, Aahmani2016, Schniter2015,MXM19_IT,MiladIT}. In this paper, we assume that the signal is generic and prior information such as sparsity is not explored.

This work is motivated by the problem of optimizing the sensing matrix for the nonlinear inverse problem. Towards this goal, here we seek to understand the impact of the sensing matrix, or more specifically the spectrum of the sensing matrix, on the difficulty of recovering the signal $\bm{x}$ from its measurements $\bm{y}$. In many applications, one has certain level of freedom in designing the sensing matrix (e.g., transmitter design in communications or the masks used in phase retrieval application) and hence understanding the impact of the sensing matrix on the recovery algorithms is the first step toward the optimal design of such systems. Rather than studying the information theoretic limits, where the computational complexity of the recovery algorithm is ignored, we would like to study the impact of the spectrum of the sensing matrix on efficient algorithms that are used in applications. For this reason, we consider one of the most successful recovery algorithms that has received substantial attention in the last few years, i.e. expectation propagation (EP) \cite{Minka2001,opper2005} (referred to as {\Alg} in this paper\footnote{The name {\Alg} is chosen because the model \eqref{Eqn:model} is an instance of generalized linear models (GLM).}), and study the impact of the spectrum of the sensing matrix on the performance of this algorithm. The EP algorithm studied here is an instance of the algorithm introduced in \cite{fletcher2016,He2017} and is closely related to the orthogonal AMP (OAMP) \cite{Ma2016} and vector AMP (VAMP) \cite{Rangan17} algorithms (in that all these algorithms use divergence-free denoising functions \cite{Ma2016}).

Similar to the approximate message passing (AMP) algorithm \cite{DoMaMo09}, {{\Alg}} has two distinguishing features: (i) Its asymptotic performance could be characterized exactly by a simple dynamical system (with very few states) called the state evolution (SE). (ii) It is conjectured that AMP or {{\Alg}} achieve the optimal performance among polynomial time algorithms \cite{aubin2019committee,celentano2020estimation}. Based on the SE framework, we investigate the impact of the spectrum of the sensing matrix $\bm{A}$ on the performance of {{\Alg}}. It turns out that the ``spikiness'' (or conversely ``flatness'') of the spectrum  of the sensing matrix spectrum has a major impact on the performance of {{\Alg}}. To formalize this statement, we first define a measure of ``spikiness'' of the spectrum based on Lorenz partial order \cite{Arnold2018}. We show that whether the spikiness of the spectrum benefits or hurts GLM-EP depends on the choice of the nonlinear mapping $f$ (as well as the sampling ratio). For instance, spikier spectrums help the performance of phase retrieval problem (where $f(x) = |x|$) but hurt the performance of 1-bit compressed sensing (where $f(x)=\text{sign}(x)$). We will characterize the classes of functions on which spikiness hurts or helps {\Alg} based on the monotonicity of a function (which is related to the scalar minimum mean square error) that will be defined in this paper. As a byproduct of our studies, we will also show that when the spectrum is spiky enough, the number of measurements required by {{\Alg}} to achieve perfect recovery approaches the information theoretical lower bound.

\subsection{Related Work.}
Message passing algorithms \cite{DoMaMo09,Bayati&Montanari11,Bayati&Montanari12,Rangan11,Barbier2019,rush2018finite,sur2019modern,bu2020algorithmic,ZhouFanAOS,feng2021unifying,Minka2001,opper2005,ma2015performance,Ma2016,fletcher2016,schniter2016vector,Rangan17,KeigoV2,He2017,takeuchi2019unified,fletcher2017inference,takahashi2020macroscopic} have been used extensively for solving the estimation problems similar to the one we have in (\ref{Eqn:model}). As a result of such studies, it is known that partial orthogonal matrix is better than iid Gaussian matrix for noisy compressed sensing \cite{ma2015performance}, and the spectral methods for phase retrieval perform better with iid Gaussian sensing matrices than coded diffraction pattern matrices \cite{Lu17,Mondelli2017,ma2021spectral,dudeja2020analysis}. However, studying the impact of spectrum of the sensing matrix in the generality of our paper has not been done to the best of our knowledge. Recently, \cite{maillard2020phase} considered the phase retrieval problem and a sensing matrix which can be written as the product of Gaussian and another matrix. They reached the conclusion that the weak recovery threshold with this type of matrices can be made arbitrarily close to zero. As a special case of our results, we will also show that if we make the spectrum of the sensing matrix spiky, {\Alg} can reach the information theoretic lower bounds in the phase retrieval problem. \cite{aubin2020exact} considered the phase retrieval problem with generative priors in the form of deep neural networks with random weight matrices, and showed that it yields smaller statistical-to-algorithmic gap than sparse priors.

Another venue of research that is also related to our work is the derivation of the information theoretic limits for analog compression schemes. Analog compression framework was first introduced in \cite{Wu10,Wu12} for compressed sensing. It was shown in \cite{Wu10,Wu12} that the minimum number of measurements required for successful signal reconstruction in an information theoretic framework is related to the R\'{e}nyi information dimension of the signal distribution. \cite{Riegler15} studied the phase retrieval problem using the analog compression framework and proved that (real-valued) phase retrieval has the same fundamental limit as that of compressed sensing. In order to compare the performance of {{\Alg}}  on matrices with different spectral, we generalize the work of \cite{Wu10,Wu12} and \cite{Riegler15} and obtain information theoretic limit for our sensing model. Note that while we are using such information theoretic tools, the problem we are studying in this paper is fundamentally different from the one studied in \cite{Wu10,Wu12,Riegler15}. Here we are interested in the impact of the spectrum of the sensing matrix on the performance of {\Alg}, and information theoretic limits are mainly derived for comparison purposes (and evaluating the optimality of {\Alg}).

\subsection{Definitions}\label{Sec:Renyi}

In this section, we mention some definitions that will be frequently used throughout this paper. We first start with the R\'{e}nyi information dimension of a random variable.

\begin{definition}[Information dimension \cite{Renyi,Wu10}]\label{Def:Renyi_ID}
Let $X$ be a real-valued random variable, and $\langle X\rangle_M=\lfloor MX\rfloor/M$ be a quantization operator.\footnote{The notation $\lfloor z \rfloor$ denotes the largest integer that is smaller than $z$. } Suppose the following limit exists
\[
d(X)=\lim_{M\to\infty}\frac{H\left(\langle X\rangle_M\right)}{\log M},
\]
where $H(\cdot)$ is the entropy of a discrete random variable. The limit $d(X)$ is called the information dimension of $X$. Further, if $H(\lfloor X\rfloor)<\infty$, then $0\le d(X) \le 1$.
\end{definition}

As will be discussed later, $d(X)$ plays a critical role in the information theoretic lower bounds we derive for the recovery algorithms. The next lemma shows how $d(X)$ can be calculated for the simple distributions we observe in our applications.

\begin{lemma}[Information dimension of mixed distribution \cite{Renyi,Wu10}]\label{Lem:ID_property}
Let $X$ be a random variable such that $H(\lfloor X\rfloor)$ is finite. Suppose the distribution of $X$ can be represented as
\[
P_X = (1-\rho)P_{d}+\rho P_c,
\]
where $P_d$ is a discrete measure and $P_c$ is an absolutely continuous measure with respect to Lebesgue, and $0\le\rho\le1$. Then,
\[
d(X)=\rho.
\]
\end{lemma}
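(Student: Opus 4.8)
The plan is to reduce the mixed case to the two pure extremes---a purely discrete law has information dimension $0$ and an absolutely continuous law has information dimension $1$---and then glue these together with an auxiliary label variable. Concretely, I would introduce a Bernoulli random variable $B$ with $\Pr(B=1)=\rho$ and couple it to $X$ so that, conditionally on $B=0$, $X$ is distributed as the discrete part $P_d$ (call this variable $X_d$), while conditionally on $B=1$, $X$ is distributed as the absolutely continuous part $P_c$ (call it $X_c$). By construction the marginal law of $X$ is exactly $P_X=(1-\rho)P_d+\rho P_c$. The label $B$ is a hidden variable that need not be a function of $X$ (the two supports may overlap), but it will be harmless precisely because it is binary.

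The engine of the proof is to expand the joint entropy $H(\langle X\rangle_M,B)$ in two ways, where $\langle X\rangle_M=\lfloor MX\rfloor/M$ is the quantizer of Definition~\ref{Def:Renyi_ID}. Expanding one way,
\[
H(\langle X\rangle_M,B)=H(\langle X\rangle_M)+H(B\mid\langle X\rangle_M),
\]
and since $B$ is binary, $0\le H(B\mid\langle X\rangle_M)\le H(B)\le\log 2$; hence $H(\langle X\rangle_M,B)$ and $H(\langle X\rangle_M)$ differ by an $O(1)$ amount that disappears after dividing by $\log M$. Expanding the other way, and using that conditioning on $B$ pins down the component,
\[
H(\langle X\rangle_M,B)=H(B)+(1-\rho)\,H(\langle X_d\rangle_M)+\rho\,H(\langle X_c\rangle_M).
\]
Applying the same binary-label bound to $\lfloor X\rfloor$ shows that $(1-\rho)H(\lfloor X_d\rfloor)+\rho H(\lfloor X_c\rfloor)\le H(\lfloor X\rfloor)+\log 2<\infty$, so the finiteness hypothesis passes to each component and every term above is finite for $\rho\in(0,1)$ (the endpoints $\rho\in\{0,1\}$ are immediate).

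It then remains to invoke the two extreme cases. For the discrete component, $\langle X_d\rangle_M$ is a deterministic function of $X_d$ that eventually separates distinct atoms, and the finiteness of $H(\lfloor X_d\rfloor)$ forces $H(\langle X_d\rangle_M)=o(\log M)$, i.e.\ $d(X_d)=0$ (Rényi's theorem). For the continuous component, a Riemann-sum argument gives $H(\langle X_c\rangle_M)=\log M+h(X_c)+o(1)$, where $h$ is the differential entropy, so $H(\langle X_c\rangle_M)=\log M+o(\log M)$ and $d(X_c)=1$. Substituting both into the second identity, dividing by $\log M$, and using the $O(1)$ sandwich from the first identity,
\[
\frac{H(\langle X\rangle_M)}{\log M}=\frac{H(\langle X\rangle_M,B)}{\log M}+o(1)\;\longrightarrow\;(1-\rho)\cdot 0+\rho\cdot 1=\rho \quad\text{as } M\to\infty,
\]
which is the claim $d(X)=\rho$.

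The genuinely substantive work lies entirely in the two base cases, and the hard part will be the continuous one: one must show that the quantized entropy of an absolutely continuous variable grows \emph{exactly} like $\log M$ (not merely $\le\log M+O(1)$) and that the limit truly exists rather than only a $\limsup$/$\liminf$. This is where the hypothesis $H(\lfloor X\rfloor)<\infty$ earns its keep---it controls the tails so that $h(X_c)$ is well defined and the Riemann approximation of $-\int p\log p$ by $H(\langle X_c\rangle_M)-\log M$ converges; without an integrability condition the quantized entropy can grow faster than $\log M$ and $d$ can fail to equal $1$. The corresponding care for the discrete case (ruling out slowly-accumulating atoms whose quantized entropy is not $o(\log M)$) is the other place the finiteness hypothesis is indispensable. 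Everything else is elementary entropy bookkeeping.
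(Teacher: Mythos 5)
The paper never proves Lemma~\ref{Lem:ID_property}: it is imported verbatim from R\'enyi and Wu--Verd\'u, so there is no in-paper argument to compare against. Your proof is precisely the standard one from those sources --- the Bernoulli label $B$, the two chain-rule expansions of $H(\langle X\rangle_M,B)$, the $\log 2$ sandwich, and the reduction to the pure discrete and pure absolutely continuous base cases --- and the reduction itself is airtight as written. The one place the sketch over-claims is the continuous base case: $H(\lfloor X\rfloor)<\infty$ does \emph{not} guarantee that the differential entropy $h(X_c)$ is finite (a compactly supported density with sufficiently sharp peaks has $H(\lfloor X_c\rfloor)=0$ yet $h(X_c)=-\infty$), so the exact asymptotic $H(\langle X_c\rangle_M)=\log M+h(X_c)+o(1)$ is not available in general; what the hypothesis does give, and what suffices, is the weaker statement $H(\langle X_c\rangle_M)=\log M-o(\log M)$, i.e.\ $d(X_c)=1$, which is the form established in Wu--Verd\'u. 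Likewise, in the discrete base case, the observation that the quantizer eventually separates the atoms only yields $H(\langle X_d\rangle_M)\to H(X_d)$, which is vacuous when $H(X_d)=\infty$ --- possible even with $H(\lfloor X_d\rfloor)<\infty$ (countably many atoms in $[0,1)$ with masses of infinite entropy). The standard repair, which you gesture at but do not supply, is to place mass $1-\epsilon$ on finitely many atoms, bound the contribution of the residual event by $\epsilon\log(M+1)$ plus a tail term controlled by $H(\lfloor X_d\rfloor)$, and let $\epsilon\to0$. With those two base cases supplied or cited, your argument is complete and coincides with the proof in the cited references.
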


The minimum mean squared error (MMSE) defined below is an important notion in our analysis of \Alg.

\begin{definition}[MMSE for AWGN channel \cite{guo2005mutual}] Let $(Z,U)$ be a pair of random variables. The MMSE $\mmse(Z,\snr)$ and the conditional MMSE $\mmse(Z,\snr|U)$ given $U$ are defined as
\BE\label{Eqn:cond_MMSE}
\begin{split}
\mmse(Z,\snr)&=\mathbb{E}\left[ \left(Z-\mathbb{E}[Z|\sqrt{\snr}Z+N]\right)^2 \right],\\
\mmse(Z,\snr|U)&=\mathbb{E}\left[ \left(Z-\mathbb{E}[Z|\sqrt{\snr}Z+N,U]\right)^2 \right],
\end{split}
\EE
where $N\sim\mathcal{N}(0,1)$ is independent of $(Z,U)$, and the outer expectations are taken over all random variables involved.
\end{definition}

More properties of the MMSE function and the MMSE dimension are detailed in Appendix \ref{App:SE_maps}.

\section{Information-theoretic limit for signal recovery}

As we discussed earlier, our main objective is to evaluate the impact of the spectrum of the sensing matrix on the performance of {\Alg}. However, it is still useful to compare what {\Alg} achieves (for different spectral) with the information theoretic lower bounds, which we derive in this section.

\subsection{Assumptions}
Before we proceed to the technical part of the paper, let us review the assumptions we make throughout this paper.

\begin{itemize}
\item [(A.1)] The elements of $\bm{x}$ are independently drawn from $P_X$, which is an absolutely continuous distribution with respect to the Lebesgue measure. Further, $\mathbb{E}[X^2]=1$.
\item [(A.2)] Let the SVD of $\bm{A}\in\mathbb{R}^{m\times n}$ ($m\ge n$) be $\bm{A}=\bm{U\Sigma V}^\UT$, where $\bm{U}\in\mathbb{R}^{m\times m}$ and $\bm{V}\in\mathbb{R}^{n\times n}$ are independent Haar matrices, which are further independent of $\bm{\Sigma}$. Let $\{\sigma_i\}_{i=1}^n$ be the diagonal entries of $\bm{\Sigma}$ and $\Lambda_i\Mydef \sigma_i^2$. We assume that the empirical distribution of $\{\Lambda_i\}_{i=1}^n$ converges almost surely to a deterministic limit $P_\Lambda$ with a compact support bounded away from zero,  as $m,n\to\infty$ with $m/n\to\delta\in(1,\infty)$. Further, $\frac{1}{n}\sum_{i=1}^n\Lambda_i^2\overset{a.s.}{\longrightarrow}\mathbb{E}[\Lambda^2]<\infty$, where $\Lambda\sim P_\Lambda$. Without loss of generality, we assume $\mathbb{E}[\Lambda]=\delta$.
\item [(A.3)] $f:\mathbb{R}\mapsto\mathcal{Y}$ is a piecewise smooth function. Specifically, the domain $\mathbb{R}$ can be decomposed into $K\in\mathbb{N}_+$ non-overlapping intervals, and $f$ is continuously differentiable and monotonic on each sub-interval. Furthermore, we assume $|f^{-1}(y)|<\infty$ for all $y$, where $f^{-1}(y):=\{x: f(x)=y\}$, and $H(\lfloor f(Z)\rfloor)<\infty$ where $Z\sim\mathcal{N}(0,1)$.
\end{itemize}

Note that Assumption (A.2) is a standard assumption in theoretical analysis of {\Alg} \cite{Rangan17,KeigoV2,ZhouFanAOS}.
Furthermore, all the nonlinearities that we observe in applications satisfy Assumption (A.3). We consider generic signal and do not impose any structural assumption (e.g., sparsity). Finally, the independence assumption we have made in the prior of $\bm{x}$ is again standard in the literature of approximate message passing and expectation propagation \cite{Rangan11,fletcher2016,Bayati&Montanari11,weng2018overcoming,wang2020bridge}. One may relax this assumption and consider correlated signals at the expense of making more assumptions about the recovery algorithm.

\subsection{Perfect reconstruction in a noiseless setting}

In this section, we derive the information theoretic lower bound on the number of measurements required by a Lipschitz recovery scheme to achieve vanishing error probability. Note that the computational complexity of the recovery algorithm is {\em not} of any concern in these lower bounds. We will later compare our results for {\Alg} with these information theoretic lower bounds.

\begin{theorem}[Perfect reconstruction under Lipschitz decoding]\label{Lem:converse2}
Suppose Assumptions (A.1)-(A.3) hold. Suppose that there exists a limiting eigenvalue distribution $P_\Lambda$ and a Lipschitz continuous decoder $g:\mathcal{Y}^m\mapsto \mathbb{R}^n$ such that $\mathbb{P}\{\bm{x}\neq g(f(\bm{Ax}))\}\to0$ as $m,n\to\infty$ and $m/n\to\delta\in(1,\infty)$,
then necessarily we have
\BE\label{Eqn:converse2_condition}
\delta\ge\frac{1}{d(Y)}  
\EE
where $d(Y)$ is the information dimension of $Y:=f(Z)$, $Z\sim\mathcal{N}(0,1)$. Here, the error probability is taken with respect to both $\bm{x}$ and $\bm{A}$.
\end{theorem}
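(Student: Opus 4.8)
The plan is to prove this converse by the information-dimension method of \citep{Wu10,Wu12}, extending it from the linear observation model to the nonlinear one $\bm{y}=f(\bm{Ax})$ as \citep{Riegler15} did for phase retrieval. The whole argument rests on a dimension-counting chain: recoverability by a Lipschitz map forces the (upper) information dimension carried by the $m$ observations to dominate the information dimension of the $n$-dimensional signal. Writing $\bar d(\cdot)$ for the upper information dimension (the $\limsup$ in Definition~\ref{Def:Renyi_ID}), the target chain is $n=d(\bm{x})\le\bar d(\bm{y})\le m\,d(f(Z))$, which rearranges to $\delta\ge 1/d(f(Z))$.

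First I would pin down the two ends. Because the entries of $\bm{x}$ are i.i.d.\ and absolutely continuous with $H(\lfloor X\rfloor)<\infty$, Lemma~\ref{Lem:ID_property} gives $d(X)=1$, and independence of the coordinates yields $d(\bm{x})=n\,d(X)=n$. For the observation end, set $\bm{z}=\bm{Ax}$. The rotational invariance in Assumption (A.2) makes the marginal law of each $z_i$ converge to $\mathcal{N}(0,1)$: indeed $\mathbb{E}\|\bm{Ax}\|^2=\Tr(\bm{A}^\UT\bm{A})=\sum_i\lambda_i\approx n\,\mathbb{E}[\lambda]=m$, so the per-coordinate variance tends to one, and the Haar factor $\bm{U}$ Gaussianizes the marginals. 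Under (A.3), $f$ is piecewise smooth, so $f(z_i)$ decomposes into a discrete part (atoms from the intervals on which $f$ is constant) plus an absolutely continuous part (images of the strictly monotone pieces); Lemma~\ref{Lem:ID_property} then gives $d(f(z_i))=\mathbb{P}\{z_i\in S\}$, where $S$ is the union of the strictly monotone intervals, and $\mathbb{P}\{z_i\in S\}\to\mathbb{P}\{Z\in S\}=d(f(Z))$. Subadditivity of information dimension across coordinates ($H(\langle\bm{y}\rangle_M)\le\sum_i H(\langle y_i\rangle_M)$) then gives $\bar d(\bm{y})\le\sum_{i=1}^m d(f(z_i))=m\,d(f(Z))(1+o(1))$.

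The middle link is the decoder. The structural fact from \citep{Wu10} is that an $L$-Lipschitz map cannot increase information dimension: for fixed $n$, points in a common resolution-$M'$ cell of $\bm{y}$ are sent by $g$ into $O(1)$ resolution-$M$ cells once $M'\sim LM\sqrt{m/n}$, whence $H(\langle g(\bm{y})\rangle_M)\le H(\langle\bm{y}\rangle_{M'})+O_n(1)$ and $\bar d(g(\bm{y}))\le\bar d(\bm{y})$. Since exact recovery holds on an event $E$ with $\mathbb{P}(E)\ge1-\epsilon_n$ and $\epsilon_n\to0$, on $E$ the quantized signal $\langle\bm{x}\rangle_M$ equals $\langle g(\bm{y})\rangle_M$ and is thus almost a function of $\langle\bm{y}\rangle_{M'}$; a Fano/conditioning bound gives $H(\langle\bm{x}\rangle_M)\le H(\langle\bm{y}\rangle_{M'})+\epsilon_n\,n\log M+O_n(1)$. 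Dividing by $\log M$ and sending $M\to\infty$ (so $\log M'/\log M\to1$) yields $n(1-\epsilon_n)\le\bar d(\bm{y})$; combined with the previous paragraph, $n(1-\epsilon_n)\le m\,d(f(Z))(1+o_n(1))$, and letting $n\to\infty$ gives $\delta\ge 1/d(f(Z))$.

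The main obstacle is of two kinds. The milder one is bookkeeping of the nested limits---$M\to\infty$ at fixed $n$ before $n\to\infty$---together with making the marginal convergence $z_i\Rightarrow\mathcal{N}(0,1)$ hold uniformly in $i$ and almost surely in the Haar draw of $\bm{A}$, so that the per-coordinate dimensions may be summed. The sharper difficulty is promoting ``$\ge$'' to the strict ``$>$'' asserted in the theorem: pure dimension counting is tight at equality, and the phase-retrieval case $f(z)=|z|$, where $d(f(Z))=1$ and the claim reads $\delta>1$, shows why equality must be excluded---at the critical rate the map $\bm{x}\mapsto f(\bm{Ax})$ is generically many-to-one, so no continuous (hence no Lipschitz) decoder can drive the error to zero. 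I would therefore close the boundary $\delta=\perfect_{\mr{opt}}$ by a separate argument: either a strong-converse sharpening showing that at equality the error probability stays bounded away from zero, or a non-injectivity argument showing that the observation map collapses a positive-measure family of signals that a Lipschitz inverse cannot separate. This boundary step is where I expect the real work to lie.
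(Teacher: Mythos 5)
Your outline is sound and reaches the right bound, but it takes a genuinely different route from the paper. You work entirely with R\'enyi information dimension: you lower-bound $H(\langle \bm{x}\rangle_M)$ by $n\log M$, upper-bound $H(\langle \bm{y}\rangle_{M'})$ coordinatewise by $m\,d(f(Z))\log M'(1+o(1))$ using the flat/non-flat decomposition of $f$, and connect the two ends with a quantized data-processing step for the Lipschitz decoder plus a Fano correction for the vanishing error probability. The paper instead runs the Wu--Verd\'u machinery through \emph{Minkowski} dimension: it counts mesh cubes covering the set $\mathcal{B}_\alpha\cap\mathcal{B}_r$ of observation vectors whose generalized support fraction is at most $\alpha$ (getting $\overline{\mr{dim}}_B\le\alpha m$), uses that Minkowski dimension cannot increase under a Lipschitz map, and then invokes the concentration result that any set carrying $1-\epsilon$ of the mass of an absolutely continuous i.i.d.\ source has Minkowski dimension at least $(1-\epsilon)n$; the Gaussianization of $\bm{Ax}$ (your marginal-convergence step) appears there as convergence of the \emph{empirical} distribution of the coordinates, which pins $\alpha$ near $d(f(Z))$. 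Your entropy version of the counting step is essentially the same bookkeeping in logarithmic form, and arguably cleaner; what the paper's route buys is that the restriction to the high-probability good set is handled by a single off-the-shelf concentration theorem rather than by Fano.

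Two caveats. First, your Fano step $H(\langle\bm{x}\rangle_M\mid\langle g(\bm{y})\rangle_M)\le \epsilon_n\,n\log M+O_n(1)$ is false as written for the unbounded alphabet $\mathbb{Z}^n/M$: without first conditioning on $\|\bm{x}\|_\infty\le r$ the conditional entropy on the error event cannot be bounded by $n\log(CM)$, and the naive bound loses the factor $\epsilon_n$ entirely. You must truncate first --- this is exactly the role of the set $\mathcal{A}_r$ in the paper's argument --- after which the conditional law is still absolutely continuous and the two ends of your chain survive. Second, your worry about promoting $\ge$ to the strict $>$ is well placed but you should not expect to find the missing boundary argument in the paper: its proof also concludes only $\delta\ge 1/d(f(Z))$, so the strict inequality in the statement of the theorem is not actually delivered by either route.
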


The proof of this result can be found in Appendix \ref{App:Low_bound}. The Lipschitz regularity condition on the decoder is natural for robustness considerations. It is interesting future work to study whether the converse result still holds with the Lipschitz condition removed or relaxed.

Intuitively speaking, $d(Y)\in[0,1]$ may be interpreted as a measurement discount factor and the total number of effective measurements is $m\cdot d(Y)$\footnote{In the rest of this paper, we will use $Y$ to denote the random variable $f(Z)$, where $Z\sim\mathcal{N}(0,1)$.} .
{
\begin{remark}[1-bit CS]
For the 1-bit compressed sensing (CS) problem, we have $f(z)=\mr{sign}(z)$ and $d(Y)=0$. In this case, the condition $\delta\ge1/d(Y)=+\infty$ implies that perfect recovery is impossible in the regime $m,n\to\infty$ and $m/n\to\delta\in(1,\infty)$. Notice that our result does not contradict with existing 1-bit CS  results \cite{dirksen2017one,plan2016generalized}. For instance, \cite{dirksen2017one} analyzes the number of random measurements required by a convex minimization algorithm to achieve a non-zero target distortion $\rho$, and the bound blows up to infinity as $\rho\to0$.
\end{remark}
}

\subsection{Stable reconstruction in the noisy setting}\label{Sec:noisy_converse_1}
Theorem \ref{Lem:converse2} focuses on signal reconstruction for model \eqref{Eqn:model} without any noise. For practical considerations, it is desirable to make sure that a small amount of measurement noise does not cause major performance degradation. In this paper, we consider the following noisy model\footnote{Other types of noisy models are possible, e.g., $\bm{y}=f(\bm{Ax})+\bm{w}$. For such noisy models, we expect that the fundamental noise sensitivity result in Theorem \ref{The:noise_opt} still holds, but the noise sensitivity result of {\Alg} in Theorem \ref{Lem:noise_sensitivity} may require new analysis. Extending our results to these models is beyond the aim of the current paper.}
\BE\label{Eqn:model_noisy}
\bm{y}=f(\bm{Ax}+\bm{w}),
\EE
where $\bm{w}\sim\mathcal{N}(\mathbf{0},\sigma_w^2\bm{I})$ is independent of $\bm{A}$ and $\bm{x}$. Define the \textit{noise sensitivity} \cite{DMM_11,Wu12} of the minimum mean square error (MMSE) estimator by
\BE
M^\ast(X,f,\Lambda,\delta)\Mydef  \sup_{\sigma_w}\, \limsup_{n\to\infty}\frac{\frac{1}{n}\mmse(\bm{x}|\bm{y},\bm{A})}{\sigma^2_w},
\EE
where $\mmse(\bm{x}|\bm{y},\bm{A})\Mydef \mathbb{E}\big[\left(\bm{x}-\mathbb{E}[\bm{x}|\bm{y},\bm{A}]\right)^2\big]$ is the MMSE of estimating $\bm{x}$ from $\bm{y}$. In the above definition, the limit $n\to\infty$ is understood as $n\to\infty$ and $m/n\to\delta$. Theorem \ref{The:noise_opt} below shows that to achieve bounded noise sensitivity, one needs $\delta\ge1/d(Y)$, the same necessary condition for achieving vanishing error probability in the noiseless setting. Its proof can be found in Appendix \ref{App:noise_opt}.

\begin{theorem}[Noise sensitivity]\label{The:noise_opt}
Suppose Assumptions (A.1)-(A.3) hold. Additionally, assume $\bm{x}\sim\mathcal{N}(\mathbf{0},\bm{I})$. A necessary condition for achieving bounded noise sensitivity, namely $M^\ast(X,f,\Lambda,\delta)<\infty$, is $\delta\ge1/d(Y)$.
\end{theorem}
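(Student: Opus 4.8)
The plan is to sandwich the normalized conditional mutual information $\tfrac{1}{n} I(\bm{x};\bm{y}\,|\,\bm{A})$ between two bounds and then let the noise vanish. Writing $\snr\Mydef 1/\sigma_w^2$, bounded noise sensitivity will force this quantity to grow like $\tfrac12\log\snr$ (the full $n$ continuous degrees of freedom of a Gaussian $\bm{x}$), whereas the nonlinearity $f$ will cap its growth at $\delta\, d(f(Z))\cdot\tfrac12\log\snr$. These are compatible only if $\delta\, d(f(Z))\ge 1$, which is the threshold $\perfect_{\mr{opt}}$. The extra assumption $\bm{x}\sim\mathcal{N}(\mathbf{0},\bm{I})$ (absent in Theorem~\ref{Lem:converse2}) is exactly what makes the lower-bound half clean.

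For the lower bound I would exploit Gaussianity: since the Gaussian maximizes differential entropy for a given covariance, bounding $h(\bm{x}\,|\,\bm{y},\bm{A})$ by the log-determinant of the conditional covariance and applying AM--GM to its trace gives
\[
\tfrac{1}{n} I(\bm{x};\bm{y}\,|\,\bm{A}) \;\ge\; \tfrac12\log\frac{1}{\tfrac{1}{n}\mmse(\bm{x}\,|\,\bm{y},\bm{A})}.
\]
If $M^\ast<\infty$, then for every fixed $\sigma_w$ we have $\tfrac1n\mmse(\bm{x}\,|\,\bm{y},\bm{A})\le M^\ast\sigma_w^2(1+o(1))$ as $n\to\infty$, so the right-hand side is at least $\tfrac12\log\snr - O(1)$.

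For the upper bound I would pass to $\bm{z}=\bm{Ax}$ (so $I(\bm{x};\bm{y}\,|\,\bm{A})=I(\bm{z};\bm{y}\,|\,\bm{A})$, since $\bm{A}$ has rank $n$) and introduce a genie $\bm{s}$ that reveals, for each coordinate, which of the $K$ pieces of $f$ the noisy input $\tilde z_i=z_i+w_i$ lands in; this side information costs only $H(\bm{s})\le m\log K=O(n)$ bits. Subadditivity of the conditional entropy of $\bm{y}$ gives $I(\bm{z};\bm{y}\,|\,\bm{s},\bm{A})\le\sum_i I(z_i;y_i\,|\,s_i)$, where each summand vanishes on a constant piece (there $y_i$ is deterministic) and equals $I(z_i;\tilde z_i\,|\,s_i)\sim\tfrac12\log\snr$ on a strictly monotonic piece (there $y_i=f(\tilde z_i)$ is a bijection of $\tilde z_i$). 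The crucial step is counting the informative coordinates: because $\bm{U}$ is Haar and $n^{-1}\sum_i\lambda_i\to m/n$, each $z_i$ concentrates to $\mathcal{N}(0,1)$ for \emph{any} $P_\lambda$, so the fraction of coordinates whose $\tilde z_i$ lands in a monotonic piece converges to $\mathbb{P}(Z\in\{\text{monotonic pieces}\})$, which by Lemma~\ref{Lem:ID_property} is exactly $d(f(Z))$. Hence $\tfrac1n I(\bm{x};\bm{y}\,|\,\bm{A})\le \delta\, d(f(Z))\cdot\tfrac12\log\snr + O(1)$.

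Combining the two bounds and letting $\sigma_w\to 0$ yields $(1-\delta\, d(f(Z)))\cdot\tfrac12\log\snr\le O(1)$, forcing $\delta\,d(f(Z))\ge 1$. The main obstacle is upgrading this to the strict inequality $\delta>\perfect_{\mr{opt}}$: at the boundary $\delta\, d(f(Z))=1$ the leading $\tfrac12\log\snr$ terms cancel and the above argument is inconclusive, so ruling out equality requires a second-order analysis showing that the effective system then has exactly as many continuous degrees of freedom as unknowns and behaves like a square linear system, whose noise sensitivity diverges (the analogue of the $\delta=1$ singularity in linear compressed sensing). I also expect the rigorous justification of the scalar high-$\snr$ expansion $I(z_i;\tilde z_i\,|\,s_i)=\tfrac12\log\snr+O(1)$ on each monotonic piece, together with the uniform control of the $O(1)$ remainders needed to interchange the $n\to\infty$ and $\sigma_w\to0$ limits, to account for the bulk of the technical work.
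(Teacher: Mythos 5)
Your proposal follows essentially the same route as the paper's proof in Appendix~\ref{App:noise_opt}: a rate--distortion (Gaussian max-entropy) lower bound $n\cdot R_X(D_n)\le I(\bm{x};\hat{\bm{x}}|\bm{A})$ with $R_X(D)\sim\frac12\log(1/D)$, paired with a single-letterized upper bound $I(\bm{z};\bm{y}|\bm{A})\le m\cdot I(z;y|\bm{a})$ in which a genie $Q=\mathbb{I}(z+w\in f^{-1}(\mathcal{Q}_f))$ (your ``which piece'' side information) shows $I(z;y|\bm{a})\le d(f(Z))\cdot\frac12\log\sigma_w^{-2}+O(1)$, giving $\delta\,d(f(Z))\ge1$ as $\sigma_w\to0$. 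The strictness issue you flag as the main obstacle is real but is equally present in the paper's own argument, which as written also only delivers $\delta\ge 1/d(f(Z))$.
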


Note that the same fundamental limit $1/d(Y)$ appears for both noiseless recovery (Theorem \ref{Lem:converse2}) and noise sensitivity (Theorem \ref{The:noise_opt}) converse results. The situation is similar to the pioneering work \cite{Wu10} which established the information theoretical limits for compressed sensing. 

We would also like to mention that the asymptotic MMSE (and so the noise sensitivity) may be calculated using the replica method \cite{maillard2019high}. However, since the correctness of the replica predictions has not been proved for the current setting, we do not pursue it in this paper and leave it as possible future work.

\subsection{Discussion of Theorems \ref{Lem:converse2} and \ref{The:noise_opt} }\label{Sec:MMSE_dim_A}

Theorems \ref{Lem:converse2} and \ref{The:noise_opt} show that the quantity $d(Y)$ determines the fundamental limit for signal recovery from the nonlinear model \eqref{Eqn:model}. Notice that $Y:=f(Z)$ is a mixed discrete-continuous distribution (by Assumption (A.3)), where the discrete component in $Y$ corresponds to ``flat'' sections of $f$; see Figure \ref{Fig:finite_f} for illustration. According to Lemma \ref{Lem:ID_property}, $d(Y)$ is simply the weight in the continuous component of the distribution of $Y$, which is the probability of $Z\sim\mathcal{N}(0,1)$ falling into the non-flat sections of $f$. For illustration, Figure \ref{Fig:finite_f} shows three representative examples of $f$.

\textbf{Type I:} $f$ is a piece-wise smooth function without flat sections; see the left panel of Figure~\ref{Fig:finite_f} for illustration. This type of functions includes the absolute value function $f=|z|$, which appears in phase retrieval problems. For such functions, $f(Z)$ has an absolutely continuous distribution when $Z\sim\mathcal{N}(0,1)$, and hence $d(Y)=1$ according to Lemma \ref{Lem:ID_property}.

\textbf{Type II:}  $f$ consists of purely flat sections. A special case is the quantization function. Clearly, $Y$ has a discrete distribution and $d(Y)=0$.

\textbf{Type III:} $f$ consists of both flat and non-flat sections, e.g., the function shown on the right panel of Figure~\ref{Fig:finite_f}. Such scenarios happen, for instance, when sensors saturate in the phase retrieval application. In this case, $Y$ has a mixed discrete-continuous distribution and $0<d(Y)<1$.

\begin{figure}[htbp]
\begin{center}
\includegraphics[width=\linewidth]{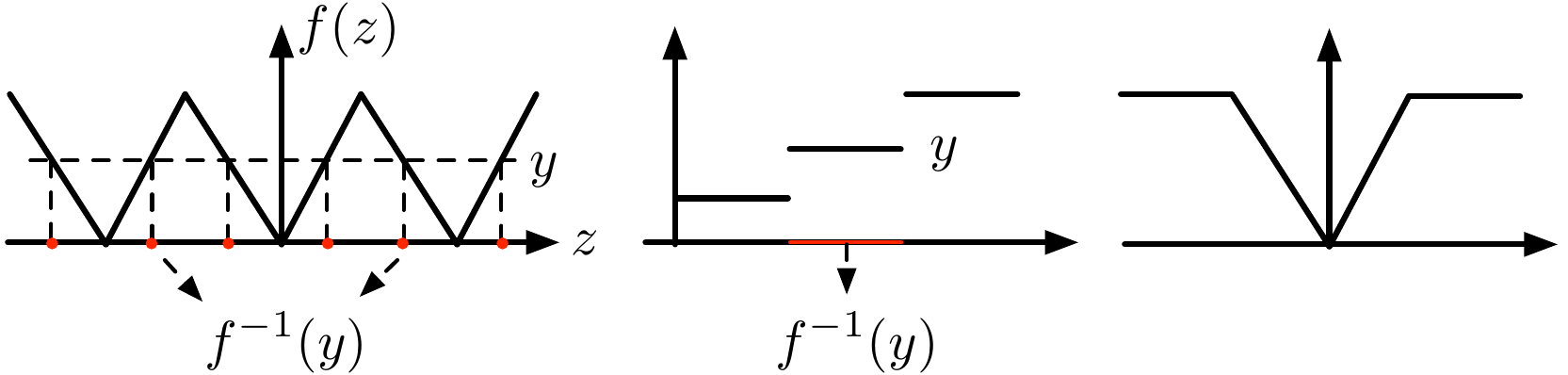}
\caption{Three types of $f$. Left: $d(Y)=1$. Center: $d(Y)=0$. Right: $0<d(Y)<1$.}\label{Fig:finite_f}
\end{center}
\end{figure}

\section{{\Alg} algorithm and performance analysis}

In this section, we introduce an expectation propagation (EP) \cite{Minka2001,opper2005} type algorithm, referred to as {\Alg}, for solving our nonlinear inverse problem and derive its state evolution (SE). We then study the impact of the spectrum of the sensing matrix on the performance of this algorithm.

\subsection{Summary of {\Alg}}

The {\Alg} algorithm is summarized below. We use superscripts to represent iteration indices, and subscripts `$l$' and `$r$' to distinguish different variables.

\noindent\textbf{Initialization:} $\bm{z}_r^{-1}=\mathbf{0}$, $v_r^{-1}=1$. For $t=0,\ldots$, execute the following steps iteratively:\vspace{-2pt}
\BS\label{Eqn:EP_Gaussian}
\begin{align}
\bm{z}_l^{t}&= \frac{1}{1- \left\langle \eta_z' (\bm{z}_r^{t-1},\bm{y},v_r^{t-1})\right\rangle}\cdot \\
&\ \Big(\eta_z(\bm{z}_r^{t-1},\bm{y},v_r^{t-1})-  \left\langle \eta_z' (\bm{z}_r^{t-1},\bm{y},v_r^{t-1})\right\rangle \cdot  \bm{z}_r^{t-1}\Big) ,\\
v_l^{t}&=v^{t-1}_r \cdot \frac{\left\langle \eta_z' (\bm{z}_r^{t-1},\bm{y},v_r^{t-1})\right\rangle}{ 1-\left\langle \eta_z' (\bm{z}_r^{t-1},\bm{y},v_r^{t-1})\right\rangle},\\
{\bm{R}}^{t}&\Mydef \bm{A} \left(v_l^t\bm{I}+\mb{A}^{\UT}\bm{A} \right)^{-1}\bm{A}^\UT,\\
\bm{z}_r^{t} &= \frac{1}{1-\frac{1}{m}\Tr(\bm{R}^{t})}\cdot\Big(\bm{R}^{t}-\frac{1}{m}\Tr(\bm{R}^{t})\cdot\bm{I} \Big)\cdot{\bm{z}}_l^t,\\
v_r^{t}&=v_l^{t}\cdot \frac{\frac{1}{m}\Tr(\bm{R}^{t})}{1-\frac{1}{m}\Tr(\bm{R}^{t})},
\end{align}
where $\eta_z$ is defined by
\BE\label{Eqn:eta_z_def}
\eta_z(z_r,y,v)\Mydef \frac{\int_{f^{-1}(y)}u\cdot \mathcal{N}(u;z_r,v)du}{\int_{f^{-1}(y)}\mathcal{N}(u;z_r,v)du},
\EE
\ES
and $\eta'_z$ denotes the derivative of $\eta_z$ with respect to the first argument. Here, $\mathcal{N}(x;m,v):=\frac{1}{\sqrt{2\pi v}}\exp(-(x-m)^2/(2v))$ denotes the Gaussian pdf function. When $f^{-1}(y):=\{z: f(z)=y\}$ is a discrete set, the integration in the above formula is simply replaced by a summation.
\\[3pt]
\noindent\textbf{Output:} $\hat{\bm{x}}_{\mr{out}}^t=v_l^t(\bm{I}+v_l^t\bm{A}^\UT\bm{A})^{-1}\bm{A}^\UT\bm{z}_l^t$.\vspace{5pt}

In the above descriptions of the algorithm, we adopted the convention commonly used in the AMP literature: $\eta_z(\bm{z}_r,\bm{y},v)$ denotes a vector with elements obtained by applying the scalar function $\eta_z$ to the corresponding elements of $\bm{z}_r$ and $\bm{y}$, and $\langle\cdot\rangle$ denotes the empirical mean of a vector.

\subsection{Asymptotic analysis}\label{Sec:SE_analysis}

The asymptotic performance of {\Alg} could be described by two scalar sequences $\{V_l^t,V_r^t\}_{t\ge0}$, defined recursively by
\BS\label{Eqn:SE_Gaussian}
\begin{align}
V_l^t &=\Bigg({ \frac{1}{\mmse_z\left(V_r^{t-1}\right)}-\frac{1}{V_r^{t-1}}}\Bigg)^{-1}\Mydef \phi(V_r^{t-1}), \label{Eqn:SE_Gaussian_a}\\
V_r^{t}&=\Bigg({\frac{1}{\frac{1}{\delta}\cdot\mathbb{E}\left[ \frac{ V_l^t\Lambda}{V_l^t + \Lambda} \right]}-\frac{1}{V_l^t}}\Bigg)^{-1}\Mydef \Phi(V_l^t),\label{Eqn:SE_Gaussian_b}
\end{align}
where $\left.V_r^{t-1}\right|_{t=0}=1$, $\mmse_z(V_r) \Mydef  \mmse\left(Z,V_r^{-1}-1|Y\right)$, and the expectation in \eqref{Eqn:SE_Gaussian_b} is w.r.t. the limiting eigenvalue distribution of $\bm{A}^\UT\bm{A}$. (Recall that $\mmse(Z,\snr|U)$ denotes a conditional MMSE; see \eqref{Eqn:cond_MMSE}). Equations \eqref{Eqn:SE_Gaussian_a} and \eqref{Eqn:SE_Gaussian_b} are known as the state evolution (SE) for {\Alg}. More properties of the functions $\phi(\cdot)$ and $\Phi(\cdot)$ are given in Appendix \ref{App:SE_maps}.

Roughly speaking, the deterministic sequences $\{V_l^t,V_r^t\}_{t\ge0}$ are expected to be accurate predictions of $\{v_l^t,v_r^t\}_{t\ge0}$ (which are generated by {\Alg}) asymptotically. We will formalize this claim later. Further, we will show that the per coordinate MSE of $\bm{x}_{\mr{out}}^t$ (see Lemma \ref{Eqn:LFD2} below) is characterized by
\BE\label{Eqn:MSE_lambda_def}
\mathsf{MSE}_\Lambda(V_l^t)\Mydef \mathbb{E}\left[ \frac{V_l^t}{V_l^t + \Lambda} \right].
\EE
The subscript emphasizes the fact that the MSE depends on the limiting eigenvalue distribution $P_\Lambda$.
\ES

Lemma \ref{Eqn:LFD2} below gives a formal statement of the accuracy of SE, and its proof is mainly based on that of \cite[Theorem 1]{fletcher2017inference}. Note that \cite{fletcher2017inference} requires both the continuity of $f$ and $\eta_z$. Similar to the analysis of the AMP for rotationally-invariant matrix in \cite{ZhouFanAOS,venkataramanan2021estimation} we expect the state evolution to hold if the composite function $\tilde{\eta}(z_r,z,v):=\eta_z(z_r,f(z),v)$ is Lipschitz-continuous with respect to the first two arguments except for sets of zero measure. Such a result would be general enough to cover many interesting applications, e.g., GLM-EP for 1-bit CS. However, a complete proof requires careful analysis and we leave it as possible future work. 

In this work, we employ a simple smoothing technique to get rid of the Lipschitz-continuity requirement on $\eta_z$. (Note that we still require the acquisition function $f$ to be Lipschitz-continuous.) Specifically, we construct a new algorithm, called {GLM-EP-app} hereafter, which satisfies the requirements of \cite{fletcher2017inference}. This allows us to use SE for predicting the performance of this algorithm. {GLM-EP-app} uses the following iterations:
\BS\label{Eqn:GLM_EP_app}
\begin{align}
\bm{z}_l^{t}&= C_t \Big(\tilde{\eta}_z(\bm{z}_r^{t-1},\bm{y},V_r^{t-1})-   \mathbb{E}\left[\tilde{\eta}_z' (Z_r^{t-1},Y,V_r^{t-1})\right]   \bm{z}_r^{t-1}\Big) , \label{Eqn:GLM_EP_app_a}\\
\bm{z}_r^{t} &= \frac{1}{1-\frac{1}{m}\Tr(\bm{R}^{t})}\cdot\Big(\bm{R}^{t}-\frac{1}{m}\Tr(\bm{R}^{t})\cdot\bm{I} \Big)\cdot{\bm{z}}_l^t, \label{Eqn:GLM_EP_app_b}
\end{align}
\ES
where $\tilde{\eta}$ is a function for which $\mathbb{E}\left[\tilde{\eta}_z' (Z_r^{t-1},Y,V_r^{t-1})\right]$ exists, ${\bm{R}}^{t}\Mydef \bm{A} \left(V_l^t\bm{I}+\mb{A}^{\UT}\bm{A} \right)^{-1}\bm{A}^\UT$, and $\{C_t\}$ is a sequence of fixed numbers. The choices we choose for $\tilde{\eta}$ and $C_t$ (to make them close enough to {\Alg}) is discussed in the proof of Lemma \ref{Eqn:LFD2}. Finally, similar to {\Alg} the output of  {GLM-EP-app} is given by $$\hat{\bm{x}}_{\mr{out}}^t=V_l^t(\bm{I}+V_l^t\bm{A}^\UT\bm{A})^{-1}\bm{A}^\UT\bm{z}_l^t
.$$

Lemma \ref{Eqn:LFD2} shows that the performance of {GLM-EP-app} could be arbitrarily close to the SE prediction. The details of the proof can be found in Appendix \ref{App:Smoothing}.

\begin{lemma}\label{Eqn:LFD2}
Suppose Assumptions (A.1)-(A.3) hold. Additionally, assume $f:\mathbb{R}\mapsto\mathcal{Y}$ to be Lipschitz continuous. Let $\{V_l^t,V_r^t\}_{t\ge0}$ be generated according to \eqref{Eqn:SE_Gaussian}. For any $\epsilon>0$, there exists $\tilde{\eta}_z$ and $\{C_t\}_{t\ge0}$ such that $\hat{\bm{x}}_{\mr{out}}^t$ of \textsf{GLM-EP-app} satisfies
\BE
\mathsf{MSE}_{\Lambda}(V_l^t)-\epsilon \leq \frac{1}{m}\left\| \hat{\bm{x}}_{\mr{out}}^t -\bm{x}\right\|^2  < \mathsf{MSE}_{\Lambda}(V_l^t)+\epsilon,
\EE
almost surely as $m,n\to\infty$ with $m/n\to \delta\in(1,\infty)$, where $\mathsf{MSE}_\Lambda$ is defined in \eqref{Eqn:MSE_lambda_def}.
\end{lemma}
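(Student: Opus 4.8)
The plan is to put the {GLM-EP-app} recursion \eqref{Eqn:GLM_EP_app} into the precise form covered by the state-evolution theorem \citep[Theorem 1]{fletcher2017inference}, and then to convert its $z$-domain conclusion into the $x$-domain error \eqref{Eqn:MSE_lambda_def}. Two design choices make this possible: the data-dependent coefficients $\langle\eta_z'\rangle$ and $\tfrac1m\Tr(\bm R^t)$ of the exact {\Alg} have been frozen to the deterministic state-evolution quantities $\mathbb{E}[\tilde\eta_z'(Z_r^{t-1},Y,V_r^{t-1})]$ and $V_l^t$, and the Bayes denoiser $\eta_z$ of \eqref{Eqn:eta_z_def} has been replaced by a surrogate $\tilde\eta_z$, which I am free to refine as the target accuracy $\epsilon$ shrinks. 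First I would construct $\tilde\eta_z$: starting from $\eta_z(\cdot,y,v)$ I truncate its first argument to a large compact interval and mollify, obtaining a function that is Lipschitz in its first argument with a bounded integrable derivative, so that $\mathbb{E}[\tilde\eta_z'(Z_r,Y,V_r)]$ is well defined; the divergence-free normalization is then restored by the deterministic choice $C_t=1/(1-\mathbb{E}[\tilde\eta_z'(Z_r^{t-1},Y,V_r^{t-1})])$, matching the Onsager form of \eqref{Eqn:EP_Gaussian}.

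With $\tilde\eta_z$ Lipschitz, the Haar hypothesis in Assumption (A.2) together with the second-moment control of Assumption (A.1) place the iteration inside the hypotheses of \citep[Theorem 1]{fletcher2017inference}. Invoking it, I get that, almost surely, the empirical joint law of $\bm{x}$ and the iterate $\bm z_l^t$ converges, when tested against pseudo-Lipschitz functions, to a deterministic law in which $\bm z_l^t$ equals $\bm{Ax}$ plus an additive Gaussian term that is asymptotically independent of $\bm x$ and whose variance is the scalar $V_l^t$; moreover the scalar recursion tracked by the theorem is exactly \eqref{Eqn:SE_Gaussian} with $\mmse_z$ replaced by the effective MMSE of the surrogate $\tilde\eta_z$.

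Next I would evaluate the output error. Writing $\bm A=\bm U\bm\Sigma\bm V^\UT$, so that $\bm A^\UT\bm A=\bm V\bm\Lambda\bm V^\UT$, the linear reconstruction $\hat{\bm x}_{\mr{out}}^t=V_l^t(\bm I+V_l^t\bm A^\UT\bm A)^{-1}\bm A^\UT\bm z_l^t$ is diagonal in the $\bm V$-basis, so $\tfrac1m\|\hat{\bm x}_{\mr{out}}^t-\bm x\|^2$ decouples into scalar contributions indexed by the eigenvalues $\lambda_i$. Substituting the effective Gaussian model for $\bm z_l^t$ from the previous step, this is the linear-MMSE reconstruction of a unit-variance source from a noisy, gain-modulated observation, whose $i$-th scalar error is $V_l^t/(V_l^t+\lambda_i)$; the signal--noise cross terms average to zero by the asymptotic independence. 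Averaging over $i$ and using the almost-sure convergence of the empirical spectrum to $P_\lambda$ (Assumption (A.2)) gives $\tfrac1m\|\hat{\bm x}_{\mr{out}}^t-\bm x\|^2\to\mathbb{E}[V_l^t/(V_l^t+\lambda)]=\mathsf{MSE}_\lambda(V_l^t)$. Since the error is a second-order polynomial, and hence pseudo-Lipschitz of order two, this is precisely the class of test functions for which the convergence of the previous step applies.

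The main obstacle, and the reason the statement carries $\forall\epsilon>0$, is controlling the bias introduced by using $\tilde\eta_z$ in place of $\eta_z$: the surrogate drives its own state evolution, so the conclusion above holds with the surrogate's scalars $\{V_l^t\}$ rather than the ones in the lemma. I would therefore show that, by refining the truncation and mollification, $\tilde\eta_z$ approaches $\eta_z$ strongly enough -- in $L^2$ together with its derivative against the relevant Gaussian measures -- that the maps $\phi$ and $\Phi$ of \eqref{Eqn:SE_Gaussian_a}--\eqref{Eqn:SE_Gaussian_b}, which enter only through the effective MMSE functional $\mmse_z$, are perturbed by at most $\epsilon$ at each of the finitely many iterations $0,\dots,t$. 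Propagating this through the recursion and using the continuity of $\mathsf{MSE}_\lambda(\cdot)$ in its argument bounds the resulting change in the limiting error by $\epsilon$. Establishing this finite-horizon continuity of the SE trajectory with respect to the denoiser, uniformly enough to survive the almost-sure fluctuations of the previous step, is the delicate part; once it is available, the eigenbasis decoupling and the bookkeeping of the Onsager coefficient $C_t$ are routine.
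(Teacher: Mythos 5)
Your overall architecture matches the paper's: replace $\eta_z$ by a mollified surrogate so that \citep[Theorem 1]{fletcher2017inference} applies, pass from the $z$-domain state evolution to the $x$-domain error $\mathbb{E}[V_l^t/(V_l^t+\lambda)]$ via the SVD decoupling, and then remove the surrogate bias by a limiting argument over finitely many iterations. However, there is a concrete gap in your construction of $\tilde\eta_z$: you truncate and mollify only in the \emph{first} argument $z_r$. The actual obstruction to invoking the state-evolution theorem is not regularity in $z_r$ (for fixed $y$ the posterior mean \eqref{Eqn:eta_z_def} is already smooth in $z_r$); it is the discontinuity of $\eta_z(z_r,\cdot)$ in $y$ at the atoms $\mathcal{Q}_f$ of the law of $f(Z)$ (where $f^{-1}(y)$ jumps from a finite set to a union of intervals) and at the finitely many $y$ where the cardinality of $f^{-1}(y)$ changes (e.g.\ $y=1$ for the function in \eqref{Eqn:f_example3}). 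Since the row-wise map the theorem must control is $(z_r,z)\mapsto\tilde\eta_z(z_r,f(z))$, Lipschitz continuity of $f$ does not rescue you: the composite remains discontinuous in $z$ at the boundaries of the flat sections. This is why the paper's construction first extends $\eta_z$ to $\xi$-neighborhoods of each $y_q\in\mathcal{Q}_f$ (equation \eqref{Eqn:eta_smooth_1}), truncates in \emph{both} variables to $[-M,M]^2$, and convolves with a two-dimensional Gaussian kernel in $(z_r,y)$; joint Lipschitz continuity is then proved in Lemma \ref{Lem:eta_smooth_pro}-(P.2). Without smoothing in $y$ your surrogate does not satisfy the hypotheses you invoke.

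Two smaller points. First, your final step treats the surrogate's state evolution as a perturbation of the two-scalar recursion \eqref{Eqn:SE_Gaussian}; but because $\tilde\eta_z$ is no longer the exact posterior mean, its state evolution tracks a full covariance (four scalars $\alpha_l^t,\tau_l^t,\alpha_r^t,\tau_r^t$), and one must show this covariance collapses to the Bayes-degenerate form in the limit $\sigma\to0$, $\xi\to0$, $M\to\infty$ — in the paper this is done moment by moment (the three limits \eqref{Eqn:app_smooth_goal2}) using dominated convergence, which in turn needs the linear-growth bound $|\eta_z(z_r,f(z))|\le C(1+\|(z_r,z)\|)$ of Lemma \ref{Lem:poly} as the dominating function; your "$L^2$ convergence of $\tilde\eta_z$ and its derivative" would need an analogous uniform envelope to be justified. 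Second, the paper deliberately sets $C_t$ using the derivative of the \emph{original} $\eta_z$ rather than the surrogate's (see \eqref{Eqn:choices_eta_C2}); your choice is workable but adds another quantity whose convergence must be tracked.
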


Note that we still require the acquisition function $f$ to be Lipschitz-continuous. Hence, Lemma \ref{Eqn:LFD2} does not apply to 1-bit CS. Nevertheless, we expect the state evolution of GLM-EP holds for 1-bit CS as well.

According to Lemma \ref{Eqn:LFD2}, the asymptotic MSE of {GLM-EP-app} in the large system limit as $t \rightarrow \infty$ can be obtained from the limiting value of $V_r^t$ (or $V_l^t$). Since this quantity is of particular importance to us, we will characterize it in the following lemma.

\begin{lemma}[MSE performance]\label{Lem:MSE_SE}
Suppose $\delta>1$. Define $V_r^{\star}$ by
\BE \label{Eqn:gamma_leftFP}
V_r^{\star}\Mydef\inf\Big\{v\in[0,1]\,:\,P(v_r)>0, \forall v_r\in[v,1]\Big\}.
\EE
where
\BE\label{Eqn:Pv_def}
P(v_r)\Mydef \mathbb{E}\left[ \frac{\phi(v_r)}{\phi(v_r) + \Lambda} \right] - \underbrace{\left[ 1-\delta\left(1-\frac{\mmse_z(v_r)}{v_r}\right)\right]}_{g(v_r)}.
\EE
In case $P(1)=0$, we define $V_r^{\star}=1$.
Let $\{V_l^t,V_r^t\}_{t\ge0}$ be sequences generated according to \eqref{Eqn:SE_Gaussian} with $\left.V_r^{t-1}\right|_{t=0}=1$. We have
\[
\lim_{t\to\infty} V_r^{t}=V_r^{\star}.
\]
Further, the final MSE is given by $\mathsf{MSE}^\star_\Lambda\Mydef \mathsf{MSE}_\Lambda(\phi(V_r^\star))$, where $\phi$ is defined in \eqref{Eqn:SE_Gaussian_a}.
\end{lemma}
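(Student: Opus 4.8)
The plan is to analyze the composed state-evolution map $F\Mydef\Phi\circ\phi$, so that the recursion \eqref{Eqn:SE_Gaussian} reads $V_r^t=F(V_r^{t-1})$ with $V_r^{-1}=1$, and to establish three things: (i) the sign of $P$ records the direction of a single step of $F$; (ii) $F$ is continuous and nondecreasing on $(0,1]$; and (iii) consequently the orbit started at $V_r^{-1}=1$ is monotone and converges to the boundary $V_r^\star$ of the positivity set of $P$. Throughout I write $v_l=\phi(v_r)$.

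First I would record the identities defining a single step. The definition of $\phi$ in \eqref{Eqn:SE_Gaussian_a} is equivalent to $\tfrac1{v_l}+\tfrac1{v_r}=\tfrac1{\mmse_z(v_r)}$, i.e. $\mmse_z(v_r)=\tfrac{v_l v_r}{v_l+v_r}$, while \eqref{Eqn:SE_Gaussian_b} gives $\tfrac1{\Phi(v_l)}=\tfrac{\delta}{\mathbb{E}[\lambda v_l/(v_l+\lambda)]}-\tfrac1{v_l}$. Substituting $\mathbb{E}[\lambda v_l/(v_l+\lambda)]=v_l\bigl(1-\mathsf{MSE}_\lambda(v_l)\bigr)$ and eliminating $\mmse_z$, a short computation yields, for every $v_r\in(0,1]$,
\[
F(v_r)<v_r\ \Longleftrightarrow\ \delta\,\frac{v_r}{v_l+v_r}>1-\mathsf{MSE}_\lambda(v_l)\ \Longleftrightarrow\ P(v_r)>0,
\]
with the inequalities reversing together and holding with equality together. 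Hence $P(v_r)=0$ characterizes the fixed points of $F$ in $(0,1]$, and $\mathrm{sign}\,P(v_r)$ tells whether one step strictly decreases, fixes, or increases $v_r$. This step is purely mechanical.

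The main work is proving that $F$ is continuous and nondecreasing, and it reduces to the monotonicity of each factor. For $\phi$ I would use that $\mmse_z(v_r)=\mmse(Z,\snr\mid f(Z))$ at $\snr=v_r^{-1}-1$ admits the representation $\mathbb{E}[\mathrm{Var}(Z\mid \sqrt{\snr}Z+N,f(Z))]$ with derivative $\tfrac{d}{d\snr}\mmse_z=-\mathbb{E}[\mathrm{Var}(Z\mid\cdot)^2]$; monotonicity of $\phi$ is then equivalent to $-\tfrac{d}{d\snr}\mmse_z\ge(\mmse_z)^2$, which is exactly Jensen's inequality $\mathbb{E}[V^2]\ge(\mathbb{E}[V])^2$ applied to the nonnegative conditional variance $V$. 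For $\Phi$, writing $a(\lambda)=\lambda/(v_l+\lambda)$, a direct differentiation reduces the sign of $\Phi'$ to that of $\delta\,\mathbb{E}[a^2]-(\mathbb{E}[a])^2$, which is positive because $\delta>1$ and $\mathbb{E}[a^2]\ge(\mathbb{E}[a])^2$ (Jensen again). Establishing the first of these, namely the derivative identity for the conditional AWGN-MMSE together with the regularity needed to differentiate under the expectation, is the step I expect to be the main obstacle; it is the one genuinely analytic ingredient, the rest being elementary once it is in place.

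With $F$ continuous, nondecreasing, and (in the regime of interest) satisfying $F(1)\le1$, the orbit $V_r^t=F(V_r^{t-1})$ from $V_r^{-1}=1$ is monotone nonincreasing by induction and bounded below, hence convergent, and its limit is a fixed point of $F$ by continuity. On $(V_r^\star,1]$ we have $P>0$, i.e. $F(v_r)<v_r$, so there is no fixed point there and $F$ maps $(V_r^\star,1]$ into itself (since $F(V_r^\star)=V_r^\star$ and $F$ is nondecreasing); the only fixed point the monotone orbit can reach is therefore $V_r^\star$, matching its definition as $\inf\{v:P>0\text{ on }[v,1]\}$ (the boundary case $V_r^\star=0$, perfect recovery, being the limit $V_r^t\downarrow0$). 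Finally, continuity of $\phi$ gives $V_l^t=\phi(V_r^{t-1})\to\phi(V_r^\star)$, and continuity of $\mathsf{MSE}_\lambda$ turns this into the stated final MSE $\mathsf{MSE}^\star_\lambda=\mathsf{MSE}_\lambda(\phi(V_r^\star))$.
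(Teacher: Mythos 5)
Your proposal is correct and follows essentially the same route as the paper's proof in Appendix~\ref{App:MSE_SE}: the composite map $\Phi\circ\phi$ is shown to be continuous and nondecreasing via the conditional-MMSE derivative identity plus Jensen (the paper's Lemma~\ref{Lem:phi}), the orbit from $V_r^{-1}=1$ is monotone and convergent, and the fixed-point/descent condition $F(v_r)<v_r$ is translated into $P(v_r)>0$ exactly as the paper does (you in fact make this last equivalence more explicit than the paper's ``straightforward to show''). The only point you gloss over is the degenerate case $\phi(1)=\infty$ (i.e.\ $\mmse_z(1)=1$, e.g.\ even $f$), where $P(1)=0$ and the infimum in \eqref{Eqn:gamma_leftFP} is over an empty set; the paper handles this by the convention $V_r^\star=1$.
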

The proof of this lemma can be found in Appendix \ref{App:MSE_SE}. A direct consequence of Lemma \ref{Lem:MSE_SE} is the perfect reconstruction condition stated in Lemma \ref{Lem:Gaussian_condition} below.

\begin{lemma}[Perfect reconstruction condition]\label{Lem:Gaussian_condition}
Let $\{V_l^t,V_r^t\}_{t\ge0}$ be a sequence generated through \eqref{Eqn:SE_Gaussian} with $\left.V_r^{t-1}\right|_{t=0}=1$, and let $\mathsf{MSE}^\star_\Lambda$ be the final MSE. Then, the following hold.
\begin{itemize}
\item [(i)] $\mathsf{MSE}^\star_\Lambda=0$ if and only if
 \begin{align}\label{Eqn:Gaussian_perfect2}
P(v_r) > 0, \quad\forall v_r\in(0,1],
\end{align}
where $P(v_r)$ is defined in \eqref{Eqn:Pv_def}. 
\item [(ii)] If there exists a spectrum $P_\Lambda$ such that $\mathsf{MSE}^\star_\Lambda=0$, then $\delta\ge1/d(Y)$. Conversely, if $\delta>1/d(Y)$ and $\mmse_z(1)<1$, then there exists a spectrum $P_\Lambda$ such that $\mathsf{MSE}^\star_\Lambda=0$.
\end{itemize}
\end{lemma}

The proofs of  Lemma \ref{Lem:Gaussian_condition} can be found in Appendix \ref{App:Proof_Condition}. It should be noted that to approach the lower bound using {\Alg}, the function $f$ has to satisfy the requirement $\mmse_z(1)<1$. This is a regularity condition that makes sure the SE equation \eqref{Eqn:SE_Gaussian} does not have a undesirable fixed point at $V_r=1$. Notably, this condition does not hold when $f$ is an even function (e.g., $f(z)=|z|$). For such functions, the achievability result is still valid if there is a small amount of side information about the signal. Alternatively, one might consider using the spectral method to initialize the {\Alg} algorithm \cite{MXM19_IT,Mondelli2021PCAIF,Mondelli2021ApproximateMP}.

\section{Impact of sensing matrix spectrum}\label{Sec:impact}

In this section, we use Lemmas \ref{Lem:MSE_SE} and \ref{Lem:Gaussian_condition} to study the impact of the sensing matrix on the MSE performance of GLM-EP-app. Before presenting our detailed analysis, we first discuss the so-called Lorenz order that compares the ``spikiness'' of different distributions.

\subsection{A measure of spikiness of distributions}
A natural tool to compare the spikiness of the distributions of two non-negative random variables is Lorenz partial order \cite{Arnold2018}. (Since it is a partial order, there exist distributions that are not comparable in the Lorenz sense.) Lorenz order is widely used to characterize wealth inequality, and is closely related to majorization, a tool that has been extensively studied for transceiver design in communication systems \cite{Palomar2003}.

\begin{definition}[Lorenz partial order \cite{Arnold2018}]\label{Def:Lorenz}
Consider a nonnegative random variable with cumulative density function $F(x)$. Let $F^{-1}(y)$ be the quantile function defined by
\BE
F^{-1}(y)=\sup\{x: F(x)\le y\},\quad 0<y<1.
\EE
The Lorenz curve corresponding to $F(x)$ is defined by
\[
L(u)= \frac{\int_0^u F^{-1}(y)dy}{\int_0^1 F^{-1}(y)dy}, \quad 0\le u\le1.
\]
Let $X$ and $Y$ be two nonnegative random variables, and $L_X(u)$ and $L_Y(u)$ be the corresponding Lorenz curves. We say $X$ is less spiky than $Y$ in the Lorenz sense, denoted as $X\preceq_L Y$, if $L_X(u)\ge L_Y(u)$ for every $u\in[0,1]$. Conversely, $X\succeq_L Y$ if $L_X(u)\le L_Y(u)$ for every $u\in[0,1]$.
\end{definition}

\begin{figure}[htbp]
\begin{center}
\includegraphics[width=.4\textwidth]{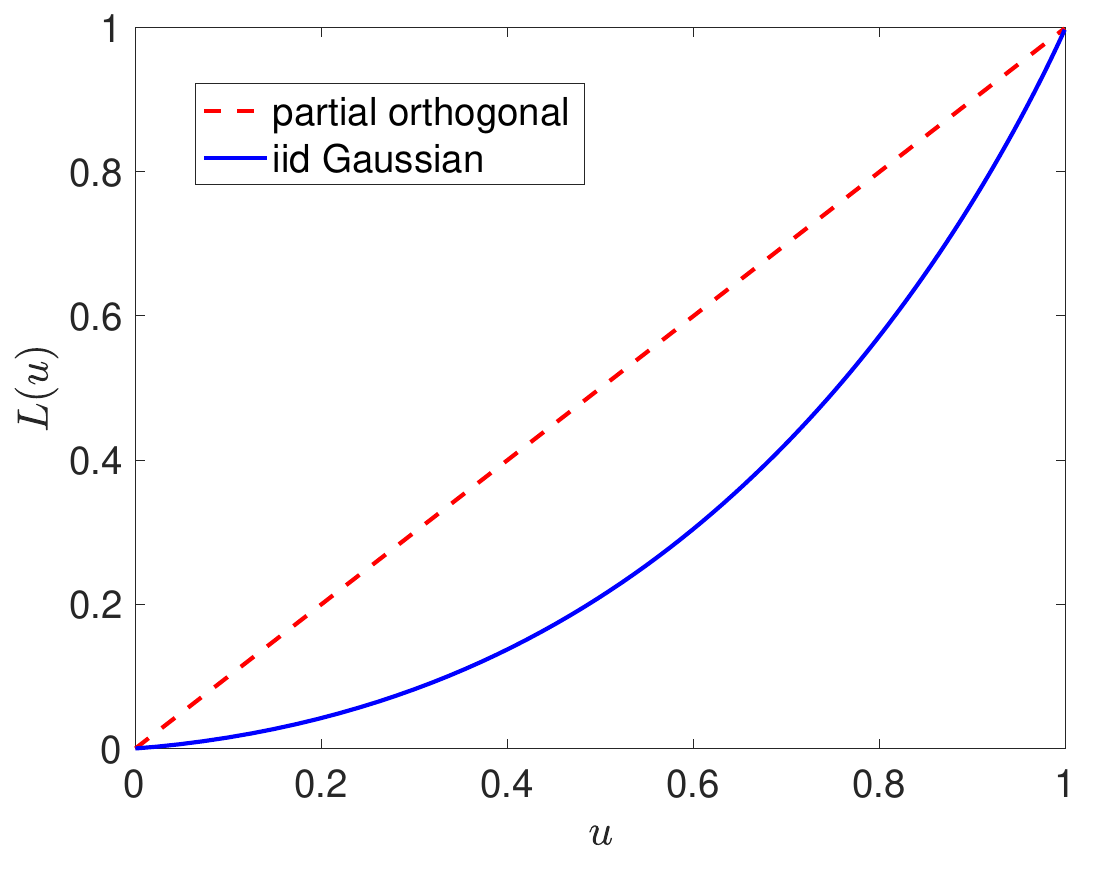}
\caption{Lorenz curves corresponding to the eigenvalue distributions of an i.i.d. Gaussian matrix and a partial orthogonal matrix. $\delta=2$.}\label{Fig:Lorenz}
\end{center}
\end{figure}

The use of Lorenz order to measure spikeness of distribution is very natural. In the context of income inequality, Lorenz curve has the following interpretation -- the poorest $100\times u$ percentage of the population contribute to $100\times L_X(u)$ percentage of the total wealth. Therefore, a larger $L_X(u)$ represents a more equal (or less spiky) income distribution. Fig.~\ref{Fig:Lorenz} demonstrates the Lorenz curves for the uniform distribution (corresponding to the spectrum of a column-orthonormal matrix) and the Marchenko-Pastur distribution (corresponding to the spectrum of an i.i.d. Gaussian matrix).

An important property of Lorenz partial ordering is the following.
\begin{lemma}[\cite{Arnold2018}]\label{Lem:convex_Lorenz}
Suppose $X\ge0$, $Y\ge0$ and $\mathbb{E}[X]=\mathbb{E}[Y]$. We have $X\preceq_L Y$ if and only if $\mathbb{E}[h(X)]\le \mathbb{E}[h(Y)]$ for every continuous convex function $h:\mathbb{R}_+\to\mathbb{R}$.
\end{lemma}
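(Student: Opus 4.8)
The plan is to pass to quantile functions and then exploit a Legendre-transform duality between the Lorenz curve and the stop-loss (put-option) transform; this is the continuous form of the Hardy--Littlewood--P\'olya equivalence between majorization and convex order. Write $\psi_X=F_X^{-1}$ and $\psi_Y=F_Y^{-1}$ for the nondecreasing quantile functions on $(0,1)$, and recall that if $U\sim\mathrm{Unif}(0,1)$ then $\psi_X(U)\overset{d}{=}X$, so $\mathbb{E}[h(X)]=\int_0^1 h(\psi_X(u))\,du$ for any $h$ for which the integral is defined. Set $S_X(u)\Mydef\int_0^u\psi_X(y)\,dy$ and $S_Y(u)\Mydef\int_0^u\psi_Y(y)\,dy$. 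Since $\psi_X$ is nondecreasing, $S_X$ is convex and nondecreasing on $[0,1]$ with $S_X(0)=0$ and $S_X(1)=\mathbb{E}[X]=\mathbb{E}[Y]=:\mu$. Because $L_X=S_X/\mu$, the equal-mean hypothesis lets me rewrite the Lorenz order purely as a pointwise inequality: $X\preceq_L Y\iff S_X(u)\ge S_Y(u)$ for all $u\in[0,1]$, with equality at the endpoints.

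The key step is a duality identity: for every threshold $t\ge0$,
\[
\mathbb{E}[(t-X)_+]=\sup_{u\in[0,1]}\big(tu-S_X(u)\big)\Mydef S_X^{*}(t).
\]
To see this, observe that $\frac{d}{du}\big(tu-S_X(u)\big)=t-\psi_X(u)$ passes from nonnegative to nonpositive exactly as $\psi_X(u)$ crosses $t$, so the supremum is attained there and evaluates to $\int_{\{\psi_X<t\}}(t-\psi_X)=\mathbb{E}[(t-X)_+]$; the supremum form is convenient because it requires no explicit crossing point and hence tolerates atoms of $X$ (flats of $\psi_X$) automatically. Thus $S_X^{*}$ is the convex conjugate of $S_X$ (extended by $+\infty$ off $[0,1]$). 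Since the Legendre transform is an order-reversing involution on closed convex functions, $S_X\ge S_Y$ pointwise is \emph{equivalent} to $S_X^{*}\le S_Y^{*}$ pointwise, i.e.
\[
X\preceq_L Y\iff \mathbb{E}[(t-X)_+]\le\mathbb{E}[(t-Y)_+]\quad\forall t\ge0.
\]

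It remains to show that this ``put order'' is equivalent to the full convex order $\mathbb{E}[h(X)]\le\mathbb{E}[h(Y)]$ for all continuous convex $h$, under $\mathbb{E}[X]=\mathbb{E}[Y]$. One direction is immediate because $x\mapsto(t-x)_+$ is itself continuous convex. For the converse I would use the put--call identity $(t-x)_+=(x-t)_+-(x-t)$, which together with the equal-mean constraint shows the put order coincides with the call order $\mathbb{E}[(X-t)_+]\le\mathbb{E}[(Y-t)_+]$; then I represent any continuous convex $h$ on $\mathbb{R}_+$ as an affine function plus a nonnegative mixture $\int (x-t)_+\,d\nu_h(t)$, where $\nu_h$ is the (distributional) second derivative of $h$, integrate against the laws of $X$ and $Y$ using Tonelli (all integrands nonnegative), and note that the affine part contributes equally since $\mathbb{E}[X]=\mathbb{E}[Y]$. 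This yields $\mathbb{E}[h(X)]\le\mathbb{E}[h(Y)]$ and closes the loop.

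The main obstacle is the technical care in the conjugacy identity and the convex-function representation when $X,Y$ have atoms or unbounded support: one must verify $S_X^{**}=S_X$ (closedness/lower semicontinuity of $S_X$) and ensure $\mathbb{E}[h(X)]$ is well defined so that the mixture decomposition integrates legitimately. In the regime actually used in this paper the relevant random variables (the spectrum $\lambda$) have compact support $[a,b]$ with $a>0$, so $\psi_X,\psi_Y$ are bounded, $S_X,S_Y$ are finite convex functions on $[0,1]$, every convex $h$ is bounded on the support, and these integrability and closedness concerns evaporate; the two-step argument---Legendre duality, then put-to-convex---then goes through cleanly.
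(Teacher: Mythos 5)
The paper does not actually prove this lemma: it is imported verbatim from Arnold's monograph on the Lorenz order, with only the citation \citep{Arnold2018} standing in for a proof. So there is no in-paper argument to compare yours against; what you have written is a self-contained proof of the cited fact, and it is essentially the classical Hardy--Littlewood--P\'olya/stop-loss argument one finds in that literature. Your chain is correct and the directions all line up: with equal means, $X\preceq_L Y$ is the pointwise inequality $S_X\ge S_Y$ for the integrated quantile functions; the identity $\mathbb{E}[(t-X)_+]=\sup_{u\in[0,1]}\bigl(tu-S_X(u)\bigr)$ is right (the supremand's increments $t-\psi_X(y)$ change sign exactly once since $\psi_X$ is nondecreasing, and the sup form indeed absorbs atoms); Fenchel--Moreau on the closed convex extension of $S_X$ gives the equivalence with the put order (note that for $t<0$ both conjugates vanish because $X,Y\ge0$, so restricting to $t\ge0$ loses nothing); and the put--call parity plus the second-derivative mixture representation of a convex function upgrades this to the full convex order. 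The only soft spots are the ones you already flag: the mixture representation $h(x)=h(0)+h'(0^+)x+\int(x-t)_+\,\nu_h(dt)$ needs $h'(0^+)>-\infty$ (e.g.\ $h(x)=-\sqrt{x}$ requires an approximation from $\epsilon>0$ and a limit), and $\mathbb{E}[h(X)]$ may be $+\infty$ for general $X$ with finite mean, so the inequality should be read in $(-\infty,+\infty]$ with Tonelli justifying the swap. In the only place the paper uses the lemma (Appendix \ref{App:Proof_Impact_MSE}, applied to $h(\lambda)=C/(C+\lambda)$ with $\lambda$ supported in $[a,b]$, $a>0$), these issues are vacuous, as you note. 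In short: correct, standard route, honestly hedged where the fully general statement needs extra care.
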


\subsection{Impact on MSE}\label{Sec:impact_MSE}
Let $\Lambda_1\sim P_{\Lambda_1}$ and $\Lambda_2\sim P_{\Lambda_2}$ be two limiting eigenvalue distributions of $\bm{A}^\UT\bm{A}$. Let $V_{\Lambda_1}^\star$ and $V_{\Lambda_2}^\star$ denote the corresponding limiting values of $V_r^t$ (as $t \rightarrow \infty$) in \eqref{Eqn:SE_Gaussian} (proving that the iterations \eqref{Eqn:SE_Gaussian} converge to a fixed point is straightforward). The associated MSEs, denoted as $\mathsf{MSE}_{\Lambda_1}^\star$ and $\mathsf{MSE}_{\Lambda_2}^\star$, can be compared according to the following lemma. See Appendix \ref{App:Proof_Impact_MSE} for its proof.\vspace{3pt}

\begin{lemma}\label{Lem:impact}
Let $\delta>1$. Suppose $P_{\Lambda_1}$ is more spiky than $P_{\Lambda_2}$ in the Lorenz sense, i.e., $\Lambda_1\succeq_L \Lambda_2$. Define
\BE\label{Eqn:g_def}
G(v_r;\delta)\Mydef \max\big(g(v_r),\,0\big),\quad \forall v_r\in[0,1],
\EE
where $g(\cdot)$ is defined in \eqref{Eqn:Pv_def}.
We have
\begin{itemize}
\item If $G(v_r;\delta)$ is non-decreasing on $v_r\in[0,1]$, then $\mathsf{MSE}_{\Lambda_1}^\star\le\mathsf{MSE}_{\Lambda_2}^\star$;
\item If $G(v_r;\delta)$ is non-increasing on $v_r\in[0,1]$, then $\mathsf{MSE}_{\Lambda_1}^\star\ge\mathsf{MSE}_{\Lambda_2}^\star$;
\item If $G(v_r;\delta)$ is not monotonic, then the comparison of $\mathsf{MSE}_{\Lambda_1}^\star$ and $\mathsf{MSE}_{\Lambda_2}^\star$ is not definite.
\end{itemize}
\end{lemma}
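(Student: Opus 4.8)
The plan is to reduce the comparison of the two final MSEs to a comparison of the two fixed points $V_{\lambda_1}^\star$ and $V_{\lambda_2}^\star$ of Lemma~\ref{Lem:MSE_SE}, and then to order those fixed points through a single convexity/Lorenz inequality. First I would isolate how the spectrum enters the state evolution. In the function $P(v_r)$ of \eqref{Eqn:Pv_def}, the term $g(v_r)$ depends only on $\mmse_z$ and $\delta$ and is therefore \emph{identical} for both spectra, while the first term is exactly $\mathbb{E}[\phi(v_r)/(\phi(v_r)+\lambda)]=\mathsf{MSE}_\lambda(\phi(v_r))$. The key structural fact is that for any fixed $V_l>0$ the map $\lambda\mapsto V_l/(V_l+\lambda)$ is continuous and convex on $\mathbb{R}_+$. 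Since Assumption~(A.2) forces $\mathbb{E}[\lambda_1]=\mathbb{E}[\lambda_2]=\delta$, and since $\lambda_1\succeq_L\lambda_2$ is the same as $\lambda_2\preceq_L\lambda_1$, Lemma~\ref{Lem:convex_Lorenz} yields $\mathsf{MSE}_{\lambda_1}(V_l)\ge \mathsf{MSE}_{\lambda_2}(V_l)$ for every $V_l>0$. In words, at a fixed $v_r$ the spikier spectrum always produces the larger value of the $\lambda$-dependent term, hence $P_{\lambda_1}(v_r)\ge P_{\lambda_2}(v_r)$ for all $v_r\in[0,1]$.

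Next I would translate this pointwise ordering of $P$ into an ordering of the fixed points. Because $P_{\lambda_1}\ge P_{\lambda_2}$ pointwise, any $v$ for which $P_{\lambda_2}>0$ on $[v,1]$ also has $P_{\lambda_1}>0$ on $[v,1]$; the defining set in \eqref{Eqn:gamma_leftFP} for $\lambda_1$ therefore contains the one for $\lambda_2$, and taking infima gives $V_{\lambda_1}^\star\le V_{\lambda_2}^\star$. The crucial step is then the identity $\mathsf{MSE}_\lambda^\star=G(V_\lambda^\star)$. Indeed, whenever $V_\lambda^\star\in(0,1)$, the infimum definition together with continuity of $P$ forces $P_\lambda(V_\lambda^\star)=0$, i.e. $\mathsf{MSE}_\lambda(\phi(V_\lambda^\star))=g(V_\lambda^\star)$; since the left-hand side is the final MSE (Lemma~\ref{Lem:MSE_SE}) and is nonnegative, $g(V_\lambda^\star)=\max(g(V_\lambda^\star),0)=G(V_\lambda^\star)$. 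Combining, $\mathsf{MSE}_{\lambda_i}^\star=G(V_{\lambda_i}^\star)$ with $V_{\lambda_1}^\star\le V_{\lambda_2}^\star$, and monotonicity of $G$ immediately delivers the first two bullets: $G$ increasing gives $\mathsf{MSE}_{\lambda_1}^\star\le\mathsf{MSE}_{\lambda_2}^\star$, while $G$ decreasing reverses the inequality. Note that this route never requires tracking the monotonicity of $\phi$, since the boundary identity collapses $\mathsf{MSE}_\lambda(\phi(\cdot))$ directly onto $G$.

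Two points need extra care, and I expect the continuity and edge-case analysis to be the main obstacle. First, the boundary cases $V_\lambda^\star\in\{0,1\}$ (perfect recovery, or no progress from initialization) fall outside the interior argument; here I would verify directly that the identity degenerates consistently, using $G(0)=0$, which follows from $g(0)\le 0$ — equivalently, from the perfect-reconstruction analysis behind Lemma~\ref{Lem:Gaussian_condition} — so that $\mathsf{MSE}_\lambda^\star=G(V_\lambda^\star)$ persists and the ordering of $G(V_{\lambda_1}^\star)$ and $G(V_{\lambda_2}^\star)$ is unaffected. The subtlety is ensuring $P_\lambda(V_\lambda^\star)=0$ really holds at the infimum rather than merely $P_\lambda(V_\lambda^\star)\ge 0$, which needs the continuity of $v_r\mapsto P_\lambda(v_r)$ and the observation that strict positivity at $V_\lambda^\star$ would contradict the infimum.

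Second, the third bullet is a negative statement that must be substantiated by example rather than derived from the chain above. When $G$ is non-monotonic, the ordering $V_{\lambda_1}^\star\le V_{\lambda_2}^\star$ still holds, but the sign of $G(V_{\lambda_1}^\star)-G(V_{\lambda_2}^\star)$ is no longer determined. I would therefore exhibit two admissible pairs $(\lambda_1,\lambda_2)$ sharing the same Lorenz ordering for which this difference takes opposite signs, by tuning the model so that the two fixed points land on an increasing branch of $G$ in one case and on a decreasing branch in the other. This explicit construction is the only component that does not follow mechanically from the convexity and fixed-point arguments.
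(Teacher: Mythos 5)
Your proposal is correct and takes essentially the same route as the paper's proof: convexity of $\lambda\mapsto\phi(v_r)/(\phi(v_r)+\lambda)$ together with Lemma~\ref{Lem:convex_Lorenz} (using $\mathbb{E}[\lambda_1]=\mathbb{E}[\lambda_2]=\delta$) to order the fixed points, the identity $\mathsf{MSE}_\lambda^\star=G(v_\lambda^\star)$, and the monotonicity of $G$ to conclude. The only difference is in how the boundary identity is verified --- you argue via continuity at the infimum plus edge cases at $v_\lambda^\star\in\{0,1\}$, while the paper rules out the alternative (that the first term of $P$ strictly dominates $G$ all the way down to $v_r=0$) through the dichotomy $\mathscr{D}(Z|f(Z))<1$ versus $\mathscr{D}(Z|f(Z))=1$ --- which amounts to the same reasoning.
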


\begin{figure}[htbp]
\begin{center}
\includegraphics[width=.5\textwidth]{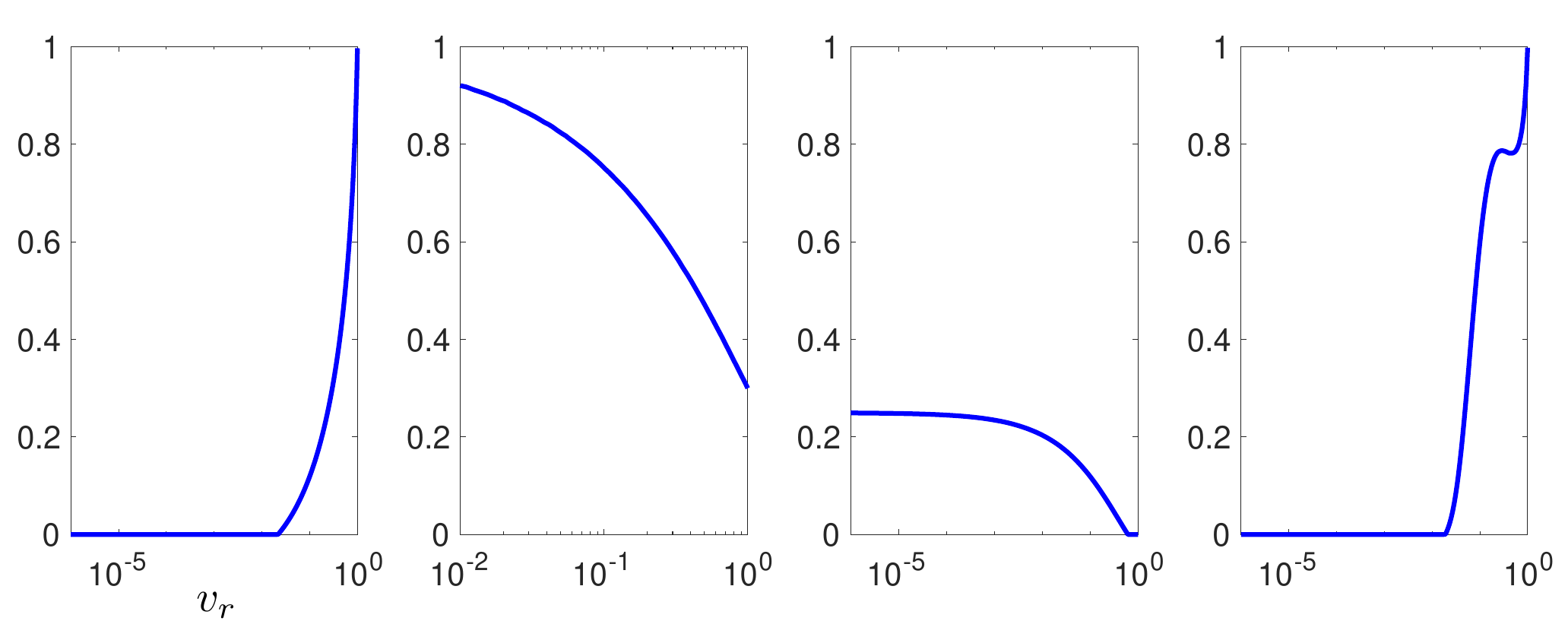}
\caption{Illustration of $G(v_r;\delta)$ for four choices of $f$. From left to right: $f(z)=|z|$, $f(z)=\max(-1,\min(z,1))$, $f(z)=\mr{sign}(z)$, $f(z)=|z|\mathbf{1}_{|z|<1}+(|z|-1)\mathbf{1}_{|z|\ge1}$. $\delta=1.1$.}\label{Fig:recon_functions}
\end{center}
\end{figure}
{
\begin{remark}
Notice that the function $G(v_r;\delta)$ depends on the sampling ratio $\delta$, as can be seen from the definitions in \eqref{Eqn:g_def} and \eqref{Eqn:Pv_def}. (To keep notation light, we do not make such dependency explicit for $g(v_r)$ and $P(v_r)$ though.) Hence, for a given $f$, the monotonicity of $G(v_r;\delta)$ could change as $\delta$ varies.
\end{remark}
}

Lemma \ref{Lem:impact} shows that the impact of the spectrum on the final MSE performance of {GLM-EP-app} depends on the monotonicity of the function $G(v_r;\delta)$ (which further depends on $f$). For a given $f$ and $\delta$, the function $G(v_r;\delta)$ can be numerically computed and its monotonicity can be easily checked. Below are four examples of $f$, corresponding to each of the cases discussed in Lemma \ref{Lem:impact}; see Fig.~ \ref{Fig:recon_functions}.

\textbf{Example 1:} It can be shown that that $G(v_r;\delta)$ of the following $f$ is non-decreasing for all $\delta>1$:
\[
f(z)=|z|.
\]
For such $f$, spiky spectrums are beneficial for MSE performance.

\textbf{Example 2:} The $G(v_r;\delta)$ of the following function is non-increasing for all $\delta>1$:
\[
f(z)=\mr{sign}(z).
\]
For this example, flatter spectrums are better.

\textbf{Example 3:} The $G(v_r;\delta)$ of the following function is non-increasing for all $\delta>1$:
\[
f(z)=\max(-1,\min(z,1)).
\]
For this example, flatter spectrums are better.

\textbf{Example 4:} Consider the following function
\BE\label{Eqn:f_example3}
f(z)=
\begin{cases}
|z|, & \text{if}\ |z|<1\\
|z|-1, & \text{if}\ |z|\ge1.
\end{cases}
\EE
In this case, $G(v_r;\delta)$ is not monotonic and the impact of the spectrum is not solely determined by the Lorenz order.

\subsection{Impact of spectrum on perfect recovery threshold}
We have shown that the impact of the spikiness of the spectrum on the {MSE performance} is related to the monotonicity of the function $G(v_r;\delta)$ which depends on the nonlinear function $f$ and the sampling ratio $\delta$. In this section, we will show that if our goal is to \textit{minimize the number of measurements required for perfect reconstruction}, then more spiky spectrum benefit GLM-EP-app \textit{for all $f$}. Furthermore, the information theoretic lower bound $\perfect_{\mr{opt}}$ can be reached (as close as we wish) if the spectrum of $\bm{A}$ is spiky enough. Theorem \ref{The:1}, whose proof can be found in Appendix \ref{App_Proof_Achie}, summarizes the above discussions.

\begin{theorem}\label{The:1}
 For a given nonlinearity $f$ and eigenvalue distribution $P_\Lambda$, let $\delta_{\Lambda}^{\mathsf{alg}}$ be the minimum $\delta$ required for perfectly recovering the signal, i.e.,
\BE
\delta_{\Lambda}^{\mathsf{alg}}\Mydef \inf\left\{\delta\,:\, \mathsf{MSE}_{\Lambda}^{\star}=0\right\},
\EE
where $\mathsf{MSE}_{\Lambda}^{\star}$ is defined in Lemma \ref{Lem:MSE_SE}. Let $\Lambda_1$ and $\Lambda_2$ denote two limiting eigenvalue distributions and $\delta_{\Lambda_1}^{\mathsf{alg}}$ and $\delta_{\Lambda_2}^{\mathsf{alg}}$ the corresponding thresholds for perfect reconstruction. We have $\delta_{\Lambda_1}^{\mathsf{alg}}\le \delta_{\Lambda_2}^{\mathsf{alg}}$ if $\Lambda_1\succeq_L \Lambda_2$.
\end{theorem}

\subsection{Noise Sensitivity Analysis}\label{Sec:noise}
Up to now, we only studied the performance of GLM-EP-app in the noiseless setting. In practice, it is also important to guarantee that the reconstruction performance does not significantly worsen due to the presence of a small amount of measurement noise. We consider the noisy model in \eqref{Eqn:model_noisy}. GLM-EP-app remains unchanged except that $\eta_z$ is replaced by a posterior mean estimator that takes the noise effect into consideration.

The following lemma analyzes the MSE performance of {GLM-EP-app} in the high SNR regime, and shows that its reconstruction is stable when $\delta$ is larger than the corresponding perfect recovery threshold. The proof of Lemma \ref{Lem:noise_sensitivity} and other details about GLM-EP-app in the noisy setting are provided in Section \ref{App:proof_noise}.

\begin{lemma}\label{Lem:noise_sensitivity}
Assume $d(Y\neq0$. Let $\delta>\delta_{\Lambda}^{\mathsf{alg}}$, where $\delta_{\Lambda}^{\mathsf{alg}}$ is defined in Theorem \ref{The:1}. Let $\mathsf{MSE}_\Lambda^{\star}(\sigma^2_w)\Mydef \lim_{t\to\infty}\mathsf{MSE}_\Lambda(V_l^t)$ be the MSE in the noisy setting. As $\sigma^2_w\to0$, we have
\[
\mathsf{MSE}_\Lambda^\star(\sigma^2_w) = C(\delta,f)\mathbb{E}\left[\Lambda^{-1}\right]\sigma^2_w\cdot(1+o(1)),
\]
where $0<C(\delta,f)<\infty$ is a constant depending only on $\delta$ and $f$.
\end{lemma}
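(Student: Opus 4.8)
The plan is to treat the noisy state evolution as a perturbation of the noiseless one and to analyze its fixed point through a joint expansion in $V_r$ and $\sigma_w^2$ near the origin. In the noisy model \eqref{Eqn:model_noisy} the only change to the recursion \eqref{Eqn:SE_Gaussian} is that $\mmse_z(V_r)=\mmse(Z,V_r^{-1}-1\mid f(Z))$ is replaced by $\mmse_z(V_r;\sigma_w^2)\Mydef\mmse(Z,V_r^{-1}-1\mid f(Z+W))$, with $W\sim\mathcal{N}(0,\sigma_w^2)$ independent of $Z$; the map $\Phi$ in \eqref{Eqn:SE_Gaussian_b}, which carries all of the spectral dependence, is untouched. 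By Lemma~\ref{Lem:MSE_SE} the limiting MSE equals $\mathsf{MSE}_\lambda(\phi(V_r^\star))$, so it suffices to locate the fixed point $V_r^\star(\sigma_w^2)$ and follow it as $\sigma_w^2\to0$. Since $\delta>\delta_\lambda^{\mathsf{p}}$, the noiseless fixed point is $V_r^\star(0)=0$; as the state-evolution maps depend continuously on $\sigma_w^2$ and admit the unique fixed point $0$ at $\sigma_w^2=0$, one first argues $V_r^\star(\sigma_w^2)\to0$, so the entire analysis is local near $(V_r,\sigma_w^2)=(0,0)$.

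First I would collapse the two fixed-point relations to a single scalar equation. Because $\lambda\ge a>0$ by Assumption~(A.2), one has $\mathbb{E}\!\left[V_l\lambda/(V_l+\lambda)\right]=V_l\bigl(1-V_l\,\mathbb{E}[\lambda^{-1}]+o(V_l)\bigr)$, hence $\Phi(V_l)=V_l/(\delta-1)+o(V_l)$ \emph{independently of the spectrum}. Substituting $V_r=\Phi(V_l)$ into $V_l=\phi(V_r;\sigma_w^2)$ reduces the system, to leading order, to $\mmse_z(V_r^\star;\sigma_w^2)=\tfrac{\delta-1}{\delta}\,V_r^\star$, which is exactly the condition $g(V_r^\star)=0$ with $g$ as in \eqref{Eqn:Pv_def}. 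The second ingredient is a small-$(V_r,\sigma_w^2)$ expansion of $\mmse_z(\cdot\,;\cdot)$. On the non-flat (strictly monotone) pieces of $f$, the observation $Y=f(Z+W)$ pins down $Z+W$ exactly once the AWGN channel resolves the finitely many branches (Assumption~(A.3)), a branch error contributing only an exponentially small term; this makes $Y$ a precision-$\sigma_w^{-2}$ Gaussian observation of $Z$, so combining prior, AWGN, and side-information precisions gives a conditional variance $V_r\sigma_w^2/(V_r+\sigma_w^2)$. On the flat pieces $Y$ only localizes $Z$ to an interval, and at high SNR the conditional variance is AWGN-limited, equal to $V_r(1+o(1))$. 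Averaging with weights $d\Mydef d(f(Z))$ and $1-d$ yields
\[
\mmse_z(V_r;\sigma_w^2)=d\,\frac{V_r\sigma_w^2}{V_r+\sigma_w^2}+(1-d)\,V_r+o(V_r+\sigma_w^2),
\]
which is asymptotically positively homogeneous of degree one in $(V_r,\sigma_w^2)$.

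Inserting this expansion into $\mmse_z(V_r^\star;\sigma_w^2)=\tfrac{\delta-1}{\delta}V_r^\star$ and cancelling $V_r^\star$ gives $d\,\sigma_w^2/(V_r^\star+\sigma_w^2)=d-1/\delta$, so that $V_r^\star=\sigma_w^2/(d\delta-1)\,(1+o(1))$. Here $d\delta-1>0$ because $\delta>\delta_\lambda^{\mathsf{p}}\ge\perfect_{\mr{opt}}=1/d$ and $d\ne0$, so the fixed point is a genuine $O(\sigma_w^2)$ quantity with a strictly positive, spectrum-free prefactor. Consequently $V_l^\star=\phi(V_r^\star;\sigma_w^2)=(\delta-1)V_r^\star(1+o(1))=\tfrac{\delta-1}{d\delta-1}\,\sigma_w^2\,(1+o(1))$. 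Finally, since $\mathsf{MSE}_\lambda(V_l)=\mathbb{E}[V_l/(V_l+\lambda)]$ with $\lambda\ge a>0$, this map is differentiable at $0$ with $\mathsf{MSE}_\lambda(V_l)=V_l\,\mathbb{E}[\lambda^{-1}](1+o(1))$, whence
\[
\mathsf{MSE}_\lambda^\star(\sigma_w^2)=\frac{\delta-1}{d\delta-1}\,\mathbb{E}[\lambda^{-1}]\,\sigma_w^2\,(1+o(1)),
\]
i.e.\ $C(\delta,f)=(\delta-1)/\bigl(d(f(Z))\,\delta-1\bigr)\in(0,\infty)$ depends only on $\delta$ and $f$, and the spectrum influences the noise sensitivity solely through $\mathbb{E}[\lambda^{-1}]$.

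The main obstacle is to make the expansion of $\mmse_z(V_r;\sigma_w^2)$ rigorous and uniform as both arguments vanish. Two points require care: (i) controlling the branch-resolution errors on the non-flat pieces together with the interval-edge effects on the flat pieces, so that the discarded terms are genuinely $o(V_r+\sigma_w^2)$ uniformly in the relevant regime $V_r\asymp\sigma_w^2$; and (ii) upgrading the leading-order self-consistency equation to a statement about the true fixed point, which calls for a monotonicity/stability argument in the spirit of Lemma~\ref{Lem:MSE_SE} guaranteeing that the perturbed fixed point is unique near the origin and varies smoothly with $\sigma_w^2$. The Gaussian precision-addition identity and the separation of the finitely many smooth pieces granted by Assumption~(A.3) are the tools that render both points tractable.
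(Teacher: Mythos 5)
Your proposal is correct in substance and follows the same skeleton as the paper's argument: localize the noisy fixed point at a small root of $g(\cdot,\sigma_w^2)$ (equivalently, use that $\Phi(V_l)=V_l/(\delta-1)+o(V_l)$ so the fixed-point condition collapses to $\mmse_z(v_r,\sigma_w^2)=\frac{\delta-1}{\delta}v_r$ up to the $O(\sigma_w^2)$ term $\mathbb{E}[\phi/(\phi+\lambda)]$), expand the noisy conditional MMSE jointly in $(v_r,\sigma_w^2)$ by splitting on whether $Z+\sigma_w W$ lands in a flat or non-flat piece of $f$, and convert $V_l^\star$ to MSE via $\mathsf{MSE}_\lambda(V_l)=V_l\,\mathbb{E}[\lambda^{-1}](1+o(1))$. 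Your two-term expansion $\mmse_z(v_r;\sigma_w^2)=d\,\frac{v_r\sigma_w^2}{v_r+\sigma_w^2}+(1-d)v_r+o(v_r+\sigma_w^2)$ is exactly the content of the paper's decomposition in \eqref{Eqn:MMSE_noisy_equiv} and $\mmse_{\mr{app}}$ in \eqref{Eqn:MMSE_app_def}. Where you genuinely diverge is in how much you extract from it: the paper only proves the one-sided bound $\mmse_z\le (1-d)v_r+o(v_r)+C\sigma_w^2$ (Lemma \ref{Lem:Aux1}) and, combined with the lower bound $\phi(v_r,\sigma_w^2)\ge\sigma_w^2$ from Lemma \ref{Lem:phi_noisy}, concludes only with the sandwich $\sigma_w^2(\mathbb{E}[\lambda^{-1}]+o(1))\le\mathsf{MSE}_\lambda^\star\le(\delta-1)C(\delta,f)\sigma_w^2(\mathbb{E}[\lambda^{-1}]+o(1))$, i.e.\ matching order but not matching constants; your sharper self-consistent solution $V_r^\star=\sigma_w^2/(d\delta-1)\cdot(1+o(1))$ yields the explicit constant $C(\delta,f)=(\delta-1)/(d\delta-1)$, which is what one actually needs to establish the lemma in its stated ``$=C(1+o(1))$'' form, and it exposes the (nice) fact that the small-noise sensitivity depends on $f$ only through $d(f(Z))$. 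The two gaps you flag are real and are precisely what the paper's auxiliary lemmas supply: the uniformity of the MMSE expansion in the regime $v_r\asymp\sigma_w^2$ is Lemma \ref{Lem:aux_noisy2} (dominated convergence on the normalized error as $v_r+\sigma_w^2\to0$), and the passage from the leading-order equation $g=0$ to the true fixed-point equation $P=0$ is handled by noting that $\mathbb{E}[\phi/(\phi+\lambda)]=O(\sigma_w^2)$ perturbs the root of $g$, viewed as a nondegenerate function of the ratio $v_r/\sigma_w^2$, only at relative order $o(1)$ — you should state this last step explicitly rather than leaving it at ``to leading order,'' but it does go through.
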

This lemma confirms that as long as $\delta>\delta_{\Lambda}^{\mathsf{alg}}$, {GLM-EP-app} can offer stable recovery. However, the minimum mean square error in this case depends on another feature of the spectrum, namely $\mathbb{E}\left[\Lambda^{-1}\right]$. The optimal sensing mechanism should be designed by considering both features based on the expected noise level in the system.

\section{{Simulation Results}}

We next provide some simulation results for the {\Alg} algorithm for a few instances of $f$. Note that all our simulations are carried out using the original {\Alg} algorithm. Our results will show that the state evolution predictions are very accurate even without the smoothing introduced in Lemma \ref{Eqn:LFD2}.

\subsection{Sensing Matrix Model}
Let $\bm{A}=\bm{U\Sigma V}^\UT$. In our experiments, we approximate the random orthogonal matrix $\bm{U}$ in the following way: 
\[
\bm{U}=\bm{P}_1 \bm{U}_{\mr{d}} \bm{P}_2 \bm{U}_{\mr{d}}^\UT \bm{P}_3
\]
where $\bm{P}_1,\bm{P}_2,\bm{P}_3$ are three diagonal matrices with entries independently chosen from $\pm 1$ with equal probability, and $\bm{U}_{\mr{d}}$ is a discrete cosine transform (DCT) matrix. Note that all matrices are square. The hope is that by injecting enough randomness in these matrices, we can make them look like Haar orthogonal matrices for GLM-EP. In addition, such constructions allow fast implementation of GLM-EP using the DCT. 

Following \cite{Vehkapera2014}, we consider a geometric distribution for the limiting empirical distribution of $\mr{diag}(\bm{\Sigma}^\UT\bm{\Sigma})$:
\BE\label{Eqn:eig_geo}
P_\Lambda(\lambda;\alpha,\beta)=
\begin{cases}
\frac{1}{\beta {\lambda}}, & \text{if }{\lambda}\in\left(\alpha A(\beta)e^{-\beta},\alpha A(\beta)\right],\\
0,&\text{otherwise},
\end{cases}
\EE
where $\alpha>0$ is the mean, $\beta\ge0$ controls the spikeness of the distribution (with $\beta=0$ corresponding to a flat spectrum), and $A(\beta)=\frac{\beta}{1-e^{-\beta}}$. In all of our numerical experiments, the empirical eigenvalues are independently sampled from this distribution.

\subsection{Accuracy of state evolution}

Figure~\ref{Fig:approach_limit} demonstrates the mean-square error (MSE) performances of GLM-EP for $f(z)=|z|$ and the function defined in \eqref{Eqn:f_example3}. Clearly, $d(Y)=1$ for both functions. As Theorem \ref{The:1}, shows, the {\Alg} algorithm could achieve perfect reconstruction as soon as $\delta>1$ with a very spiky sensing matrix. Here, we considered the geometric eigenvalue setup with $\beta=20$. From Fig.~\ref{Fig:approach_limit}, we see that {\Alg}  recovers the signal accurately when $\delta$ is only slightly larger than the lower bound ($\delta=1.01$). Note that both $f$ considered in Fig.~\ref{Fig:approach_limit} are even functions, and for such functions the state evolution has a fixed point at $(V_r,V_l)=(1,\infty)$ (see Lemma \ref{Lem:FP_uninformative}), commonly referred to as the uninformative fixed point. This implies that the {\Alg} algorithm does not work for these $f$ if $\bm{z}_r^{-1}$ is uncorrelated with the signal. In our experiments, to get rid of the uninformative fixed point issue, we set $\bm{z}_r^{-1}=(1+V)^{-1}(\bm{z}+\sqrt{V}\bm{n})$ where $\bm{n}$ is standard Gaussian and $V$ is a large constant (here we set $V=20$).

\begin{figure}[htbp]
\begin{center}
\subfloat{\includegraphics[width=.45\textwidth]
{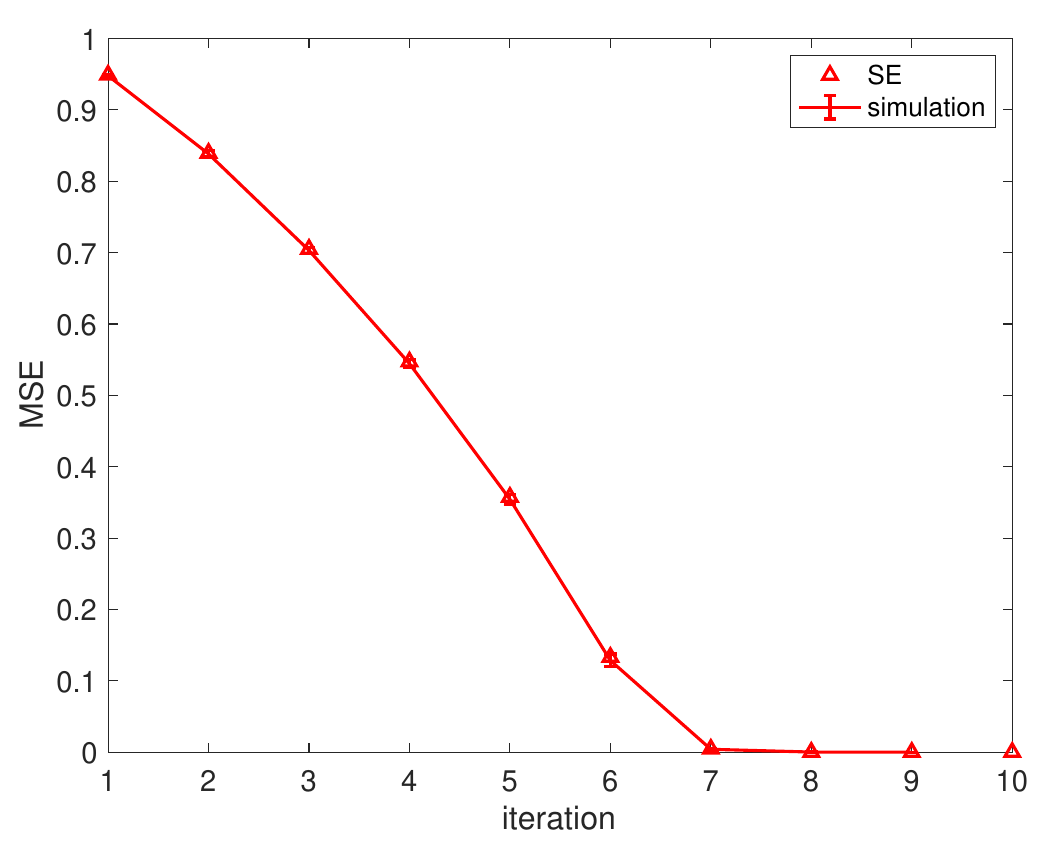}}\hfill
\subfloat{\includegraphics[width=.45\textwidth]
{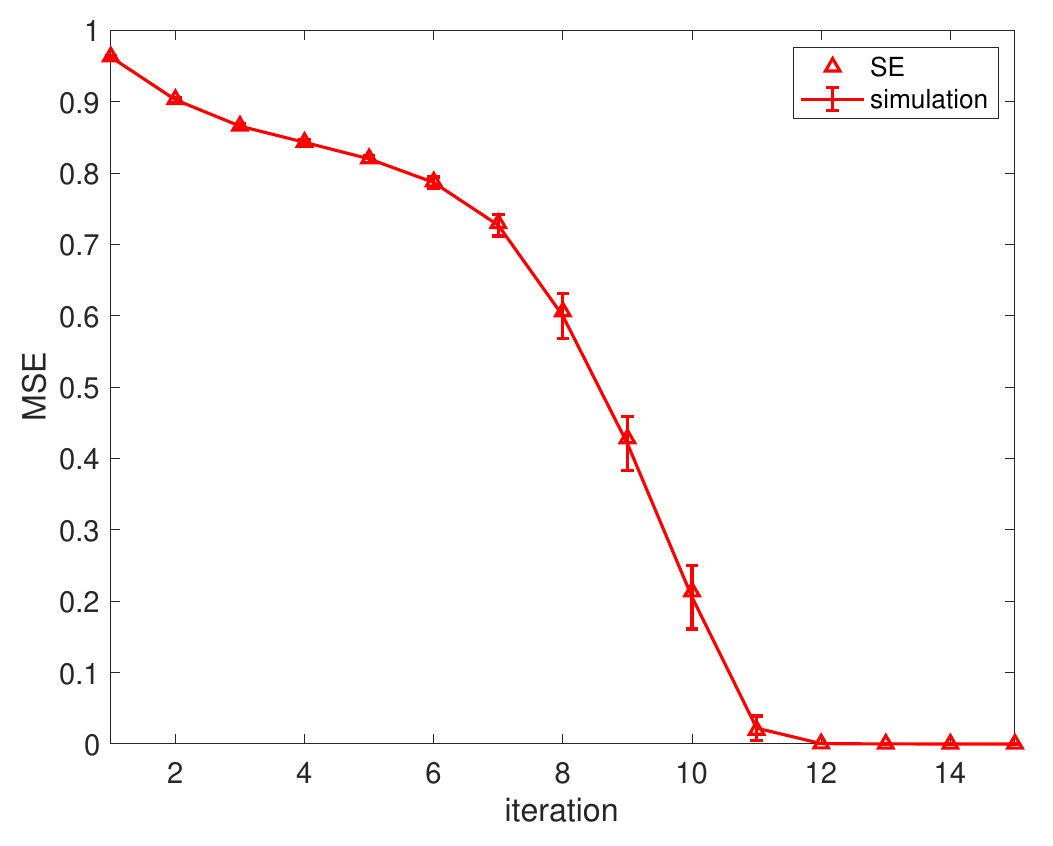}}
\caption{MSE performance of GLM-EP in the noiseless setting. \textbf{Left:} $f(z)=|z|$. \textbf{Right:} $f(z)$ defined in \eqref{Eqn:f_example3}. $n=2\times10^{5}$. $m=\lceil 1.01\cdot n\rceil$. $\beta=20$. $1000$ independent runs. The markers labeled `SE' are predictions obtained from state evolution. }\label{Fig:approach_limit}
\end{center}
\end{figure}

\subsection{Performance for medium-sized systems}
Fig.~\ref{Fig:medium_size} shows the performance of GLM-EP for  medium-sized sensing matrices ($n=5000$). Other settings are the same as Fig.~\ref{Fig:approach_limit}. In this case, we can observe a mismatch between the performance of GLM-EP and its theoretical predictions. Nevertheless, GLM-EP still achieve very good reconstruction result considering the fact that $\delta\approx1.01$ is very close to the information theoretical lower bound.


\begin{figure}[h]
\begin{center}
\subfloat{\includegraphics[width=.45\textwidth]
{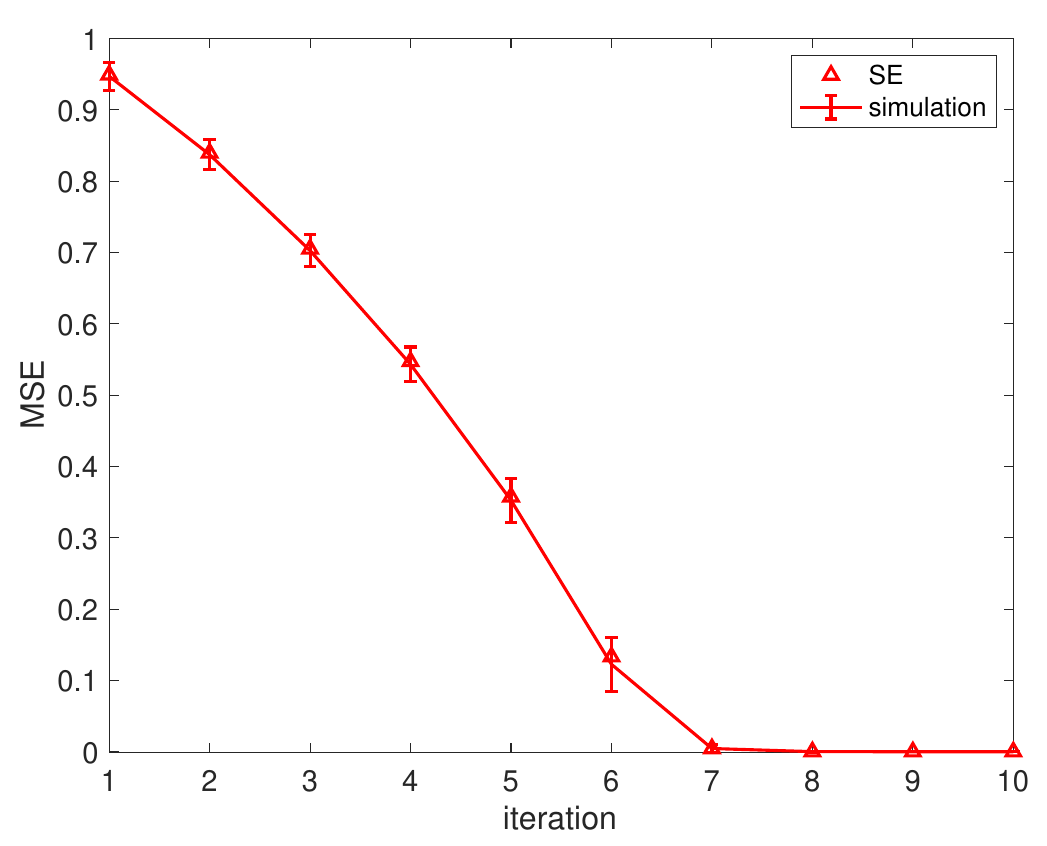}}\hfill
\subfloat{\includegraphics[width=.45\textwidth]
{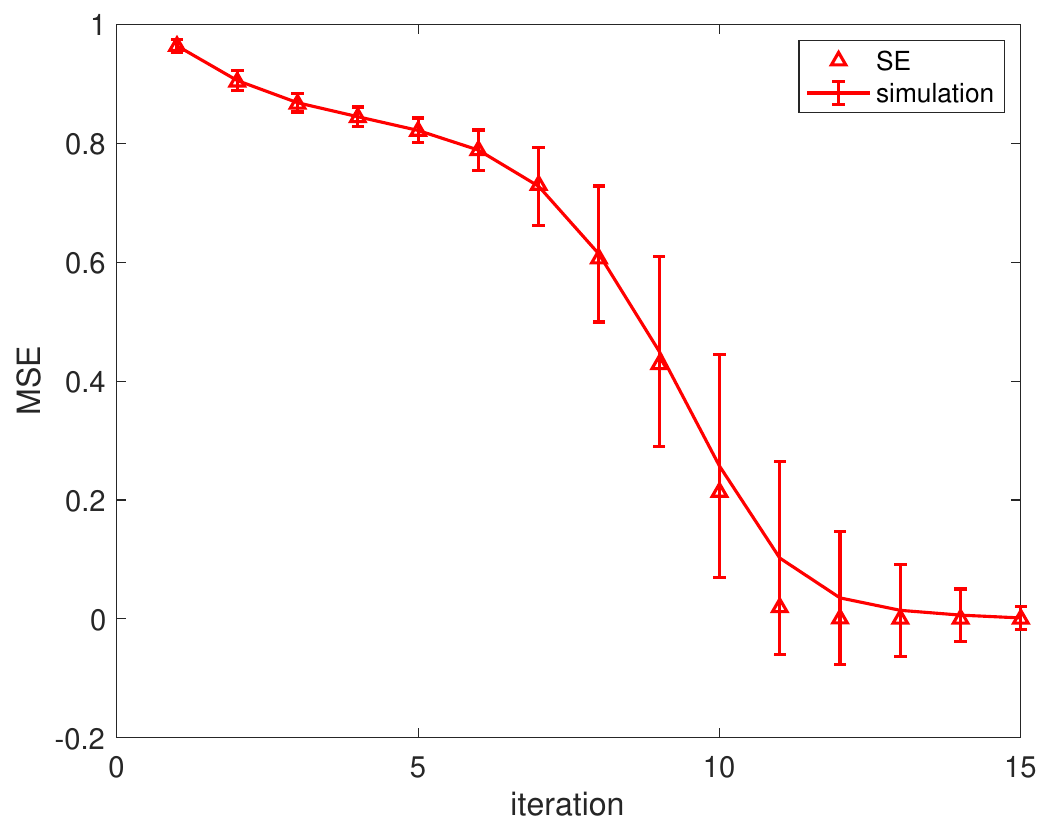}}
\caption{MSE performance of GLM-EP for medium-size systems. \textbf{Left:} $f(z)=|z|$. \textbf{Right:} $f(z)$ defined in \eqref{Eqn:f_example3}. $n=5000$. MSE are averaged over $1000$ independent runs. Other settings are the same as those of Fig.~\ref{Fig:approach_limit}. }\label{Fig:medium_size}
\end{center}
\end{figure}

\subsection{1-bit CS performance}
For the 1-bit compressed sensing (CS) problem, it is impossible to recover the signal accurately (namely, achieve zero MSE) at finite $\delta$. Tab.~\ref{Tab:1-bit} lists the MSE of GLM-EP for 1-bit CS under different values of $\delta$ and $\beta$. As expected, its performance improves as $\delta$ increases. Also, for each $\delta$, the MSE performances worsen as $\beta$ increases, which is consistent with our theoretical result about the impact of the spikeness.

\begin{table}[htbp]
\centering
 \begin{tabular}{| c|  c c c c c c c c |}
 \hline
$\delta$ & $1.5$ & $2$ & $2.5$ & $3$ & $3.5$ & $4$& $4.5$& $5$  \\[0.5ex]
 \hline
$\beta=0$ & 0.27 &  0.20 &  0.16 &  0.12  & 0.10   &  0.08    &0.07  &  0.06    \\
 \hline
 $\beta=5$ & 0.45   & 0.38  &  0.33 &   0.29 &   0.26  & 0.23 &   0.20  &  0.18  \\
 \hline
  $\beta=10$ & 0.62  &  0.58  &  0.55 &   0.52 &   0.50  &  0.48 &   0.45  &  0.44  \\
 \hline
 \end{tabular}  \\[0.5ex]
 \caption{MSE of GLM-EP for the 1-bit CS problem. $n=10^5$. The MSE is averaged over 100 independent runs. The number of iterations is 20.}\label{Tab:1-bit}
\end{table}

%
%
%

\subsection{Noisy measurements}

Lemma \ref{Lem:noise_sensitivity} analyzes the stability of the GLM-EP reconstruction for the noisy model $\bm{y}=f(\bm{Ax}+\bm{w})$. Tab.~\ref{Tab:PR_noisy} shows that the performance of GLM-EP for noisy phase retrieval. Here, the signal-to-noise ratio (SNR) is defined by
\[
\text{SNR}\Mydef \frac{\mathbb{E}[\|\bm{Ax}\|^2]}{\mathbb{E}[\|\bm{w}\|^2]}.
\]
Results in Tab.~\ref{Tab:PR_noisy} suggests that the performance of GLM-EP degrades gracefully as the noise variance increases.

\begin{table}[htbp]
\centering
 \begin{tabular}{| c | c c c c c |}
 \hline
 SNR &  30dB& 35dB  & 40dB &  45dB &  50dB   \\[0.2ex]
 \hline
MSE &1.28e-01 & 5.92e-02 & 2.18e-02 & 6.94e-03 & 2.14e-03\\
 \hline
 \end{tabular}  \\[0.5ex]
  \caption{MSE of GLM-EP for noisy phase retrieval. $\delta=1.1$. $n=10^5$. $\beta=10$. The MSE is averaged over 100 independent runs. The number of iterations is 10.} \label{Tab:PR_noisy}
\end{table}

\subsection{Phase transition}

To test the impact of the sensing spectrum on the performance of GLM-EP, we carry out phase transition study in Fig.~\ref{Fig:PT} under various values of $\beta$. We consider two instances of $f$, the absolute value function and that defined in \eqref{Eqn:f_example3}. We see that for both functions, the empirical perfect recovery threshold of $\delta$ improves as $\beta$ increases (corresponding to spikier spectrum), which is consistent with the claim of Theorem \ref{The:1}.

\begin{figure}[h]
\begin{center}
\subfloat{\includegraphics[width=.45\textwidth]
{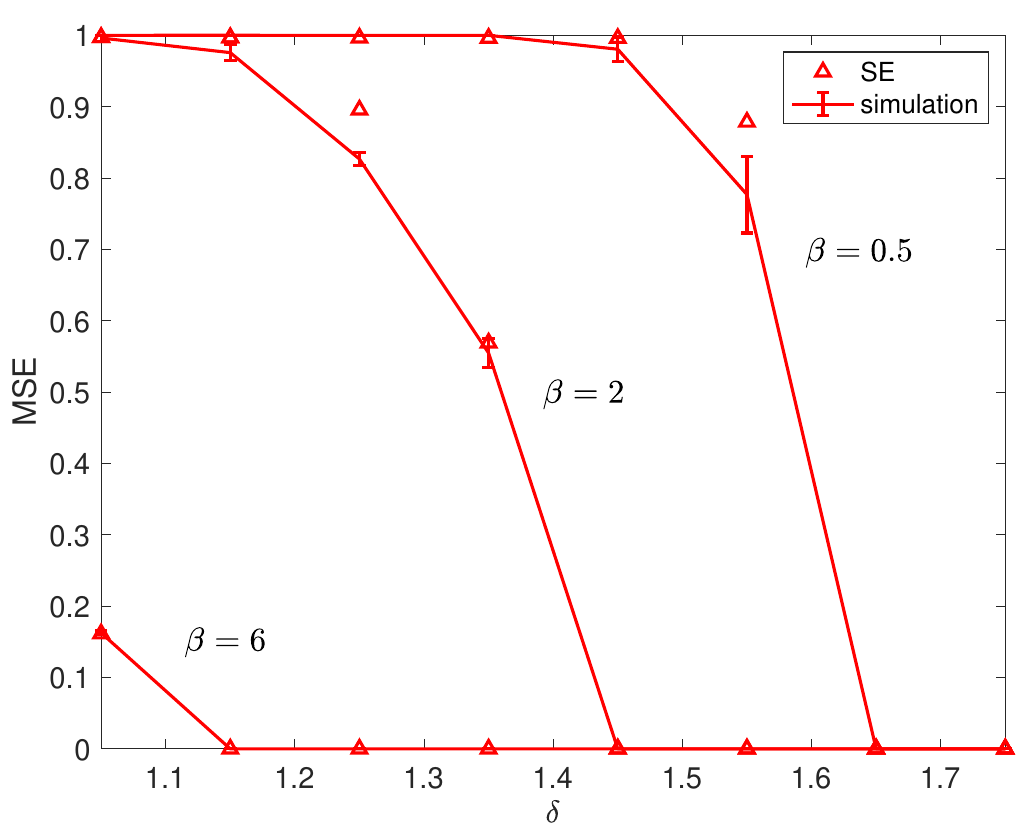}}\hfill
\subfloat{\includegraphics[width=.45\textwidth]
{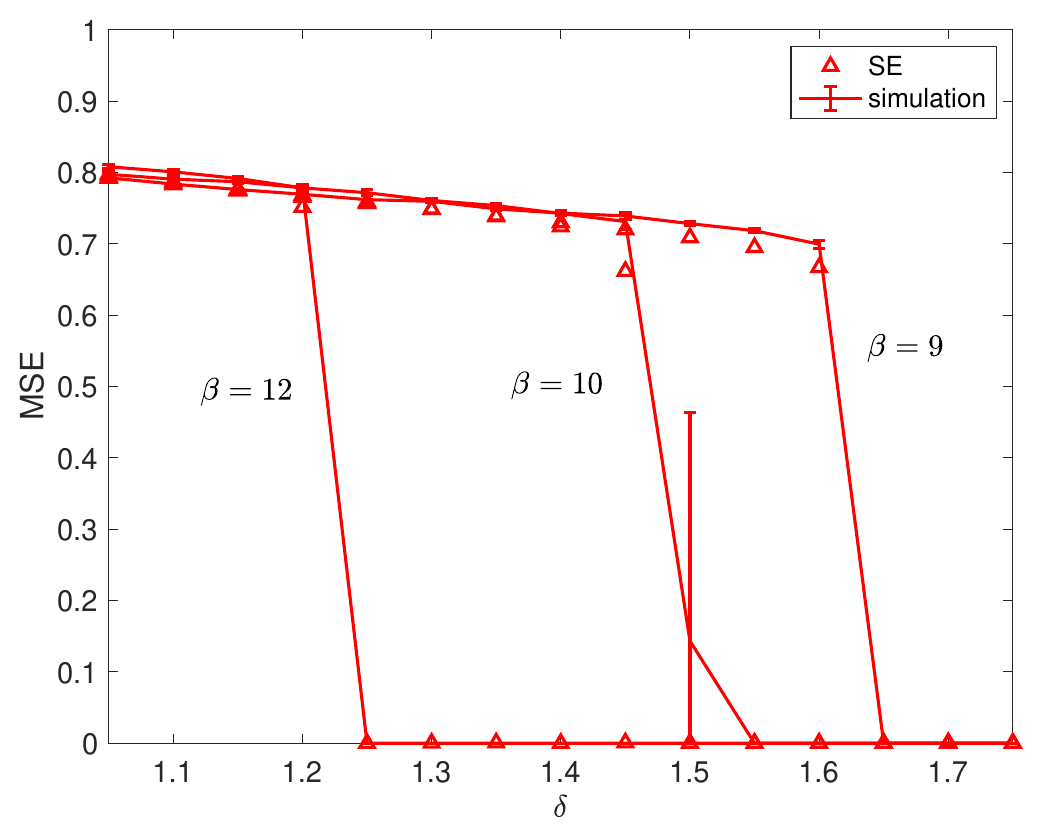}}
\caption{Phase transition of GLM-EP under various sensing matrix spectral. \textbf{Left:} $f(z)=|z|$. \textbf{Right:} $f(z)$ defined in \eqref{Eqn:f_example3}. $n=2\times10^5$. Error bars are calculated based on 100 independent runs.}\label{Fig:PT}
\end{center}
\end{figure}

\section{Conclusion and future work}
We studied the impact of the spectrum of the sensing matrix on the performance of the expectation propagation (EP) algorithm in recovering signals from the nonlinear model $\bm{y}=f(\bm{Ax})$. We defined a notion of spikiness of the distributions and showed that depending on $f (\cdot)$, the spikiness of the distribution can help or hurt the performance of EP. We also showed that spiky sensing matrices can always reduce the number of observations required for the exact recovery of $\bm{x}$ from $\bm{y}$. 

The results in this paper can serve as the first step towards the optimal design of sensing matrices. However, there are several directions that require further investigation before one can apply our results to real-world applications: (i) Since the structure of the signal is often used in recovery algorithms, the role of the structure should be studied more carefully when we deal with spiky sensing matrices. (ii) While we discussed the high-signal-to-noise ratio regime in the paper, some applications have low signal-to-noise ratios. The impact of the spectrum of the sensing matrix in such cases requires more careful considerations. 



\appendices

\section{Auxiliary results about MMSE dimension and the state evolution maps}\label{App:SE_maps}

In this section, after introducing the conditional MMSE dimension $\mathscr{D}(Z|Y)$, we present a few properties of $\mmse_z(\cdot)$ and the SE maps $\phi(\cdot)$, $\Phi(\cdot)$.

\subsection{MMSE Dimension and information dimension}\label{Sec:MMSE_dim_A}
The MMSE dimension $\mathscr{D}(Z)$ defined below characterizes the high SNR behavior of the MMSE $\mmse(Z,\snr)$. Similarly, $\mathscr{D}(Z|U)$ characterizes the high SNR behavior of $\mmse(Z,\snr|U)$.

\begin{definition}[MMSE dimension \cite{Wu11}]\label{Def:MMSE_dim}
The following limits, if exist, is called the MMSE dimension (resp. conditional MMSE dimension):
\BE
\begin{split}
\mathscr{D}(Z)&= \lim_{\snr\to\infty}\snr\cdot \mmse(Z,\snr),\\
\mathscr{D}(Z|U)&= \lim_{\snr\to\infty}\snr\cdot \mmse(Z,\snr|U).
\end{split}
\EE
\end{definition}\vspace{5pt}

The following lemma establishes the connection between the conditional MMSE dimension $\mathscr{D}(Z|Y)$ and the information dimension $d(Y)$ (see Section \ref{Sec:Renyi}).

\begin{lemma} \label{Lem:MMSE_D_ID}
Suppose Assumption (A.3) holds. Let $Z\sim\mathcal{N}(0,1)$ and $Y=f(Z)$. We have
\[
d(Y)=1-\mathscr{D}(Z|Y).
\]
\end{lemma}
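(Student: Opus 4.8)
The plan is to compute the conditional MMSE dimension $\mathscr{D}(Z|Y)$ by conditioning on the value of $Y=f(Z)$ and exploiting the fact that, for almost every realization $Y=y$, the conditional law $P_{Z|Y=y}$ is of a very simple type dictated by Assumption (A.3). First I would record the ``tower'' identity for the conditional MMSE: writing $N\sim\mathcal{N}(0,1)$ independent of $(Z,Y)$ and letting $\mmse(Z,\snr\,|\,Y=y)$ denote the MMSE of estimating a random variable with law $P_{Z|Y=y}$ through the channel $\sqrt{\snr}Z+N$, we have
\[
\mmse(Z,\snr\,|\,Y) = \mathbb{E}_Y\big[\mmse(Z,\snr\,|\,Y=y)\big].
\]
By Assumption (A.3), $\mathcal{Q}_f=\{y:f^{-1}(y)\text{ contains an interval}\}$ is finite. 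When $y\notin\mathcal{Q}_f$, the set $f^{-1}(y)$ is finite, so $P_{Z|Y=y}$ is purely discrete; when $y=y_q\in\mathcal{Q}_f$, the set $f^{-1}(y_q)$ contains intervals and $P_{Z|Y=y_q}$ is the standard Gaussian restricted to $f^{-1}(y_q)$, whose continuous component carries some weight $\rho_q\in(0,1]$.

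The key analytic step is to pass the factor $\snr$ and the limit $\snr\to\infty$ through the expectation over $Y$. For this I would use that the linear MMSE upper bounds the MMSE, which gives, for every $y$,
\[
\snr\cdot\mmse(Z,\snr\,|\,Y=y) \le \frac{\snr\cdot\var(Z\,|\,Y=y)}{1+\snr\cdot\var(Z\,|\,Y=y)} \le 1 .
\]
This uniform bound by the constant $1$ is integrable with respect to $P_Y$, so the dominated convergence theorem justifies
\[
\mathscr{D}(Z|Y)=\lim_{\snr\to\infty}\snr\cdot\mathbb{E}_Y\big[\mmse(Z,\snr\,|\,Y=y)\big]=\mathbb{E}_Y\Big[\lim_{\snr\to\infty}\snr\cdot\mmse(Z,\snr\,|\,Y=y)\Big],
\]
once the pointwise limits are shown to exist. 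Those pointwise limits are exactly the unconditional MMSE dimensions of the laws $P_{Z|Y=y}$, so I would invoke the Wu--Verd\'u identity $\mathscr{D}(X)=d(X)$ \citep{Wu11} for each such law: the discrete laws ($y\notin\mathcal{Q}_f$) have MMSE dimension $0$, and a law with continuous weight $\rho_q$ (the truncated Gaussian at $y_q\in\mathcal{Q}_f$) has MMSE dimension $\rho_q$. Hence $\mathscr{D}(Z|Y)=\mathbb{E}_Y[\mathscr{D}(Z\,|\,Y=y)]=\sum_{y_q\in\mathcal{Q}_f}\mathbb{P}\{Y=y_q\}\,\rho_q$.

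It then remains to identify this quantity with $1-d(Y)$. By the definition of $\rho_q$, the sum $\sum_q\mathbb{P}\{Y=y_q\}\rho_q$ equals $\mathbb{P}\{Z\text{ lands in a flat section of }f\}$. Up to the finitely many boundary points (a set of measure zero under $Z\sim\mathcal{N}(0,1)$), the flat and strictly monotonic sections partition $\mathbb{R}$, so this probability is $1-\mathbb{P}\{Z\text{ lands in a non-flat section}\}$; and by Lemma \ref{Lem:ID_property} the continuous weight $d(Y)$ of $Y=f(Z)$ is precisely $\mathbb{P}\{Z\text{ lands in a non-flat section}\}$. Combining these gives $\mathscr{D}(Z|Y)=1-d(Y)$, as claimed. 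The main obstacle I anticipate is the rigorous justification of the pointwise step: verifying that the regularity hypotheses of the Wu--Verd\'u MMSE-dimension theorem are met by the truncated-Gaussian conditional laws and that these limits genuinely exist. The DCT interchange itself is routine thanks to the uniform bound by $1$, so the delicacy is concentrated in the pointwise identification of $\mathscr{D}(Z\,|\,Y=y)$.
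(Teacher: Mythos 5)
Your proposal is correct and follows essentially the same route as the paper: both condition on $Y$, use the uniform bound $\snr\cdot\mmse(Z,\snr\,|\,Y=y)\le 1$ to justify a dominated-convergence interchange, invoke the Wu--Verd\'u MMSE-dimension results pointwise for the conditional laws $P_{Z|Y=y}$, and identify the resulting probability of the flat sections with $1-d(Y)$ via Lemma \ref{Lem:ID_property}. The only cosmetic difference is that you allow the conditional laws at $y_q\in\mathcal{Q}_f$ to be mixed with continuous weight $\rho_q\le 1$, whereas (since isolated points of $f^{-1}(y_q)$ carry zero Gaussian mass) these conditional laws are in fact purely absolutely continuous, so $\rho_q=1$ and your bookkeeping reduces to the paper's.
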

\begin{proof}
The conditional MMSE dimension can be calculated as follows:
\BE
\begin{split}
\mathscr{D}(Z|Y) &= \lim_{\snr\to\infty}\snr\cdot \mmse(Z,\snr|Y)\\
&=\lim_{\snr\to\infty} \snr\cdot \mathbb{E}\left[\left(Z-\mathbb{E}[Z|\sqrt{\snr}Z+N,Y\right)^2\right]\\
&\overset{a}{=}\lim_{\snr\to\infty} \snr\cdot \mathbb{E}\left[\left(Z-\mathbb{E}[Z_y|\sqrt{\snr}Z_y+N]\right)^2\right]\\
&\Mydef\lim_{\snr\to\infty} \snr\cdot \mathbb{E}_Y \left[\mmse(Z_y,\snr)\right]\\
\end{split}
\EE
where $Z_y\sim P_{Z|Z\in f^{-1}(y)}$ and $N$ is independent of $Z$. Note that $\mmse(Z_y,\snr) \le \snr^{-1}$ \cite{guo2005mutual}\footnote{This is true even when the moments of $Z_u$ do not exist. To see this, consider $\tilde{Y}=\sqrt{\snr}Z_u+N$ and the linear estimator $\tilde{Y}/\sqrt{\snr}$. The MSE of this linear estimator is $\snr^{-1}$ and hence $\mmse(Z_u,\snr)\le \snr^{-1}$.} and so $\snr\cdot\mmse(Z_u,\snr)\le1$. Hence, by Lebesgue's dominated convergence theorem we have
\BE\label{Eqn:cond_D_tmp}
\begin{split}
\mathscr{D}(Z|Y)  &= \mathbb{E}_Y \left[\lim_{\snr\to\infty} \snr\cdot\mmse(Z_y,\snr)\right]\\
&= \mathbb{E}_Y \left[\mathscr{D}(Z_y)\right],
\end{split}
\EE
provided that $\lim_{\snr\to\infty} \snr\cdot\mmse(Z_y,\snr)$ exists almost surely. From \cite[Theorem 10 and Theorem 11]{Wu11}, 
\[
\mathscr{D}(Z_y)=
\begin{cases}
0 & \text{if } P_{Z|Y=y} \text{ is discrete}\\
1 & \text{if } P_{Z|Y=y} \text{ is absolutely continuous}
\end{cases}
\]
This implies that
\[
\mathscr{D}(Z_y)=
\begin{cases}
0 & \text{if } y\in\mathbb{R}\backslash\mathcal{Q}_f\\
1 & \text{if } y\in\mathcal{Q}_f.
\end{cases}
\]
Hence,
\[
\mathscr{D}(Z|Y)=\mathbb{P}\{f(Z)\in\mathcal{Q}_f\}=1-d(Y),
\]
where the second identity follows from Lemma \ref{Lem:ID_property}.
\end{proof}

\subsection{A property of $\mmse_z(v_r)$}
Note that $\eta_z(z_r,y,v)$ in GLM-EP (see \eqref{Eqn:eta_z_def}) is an MMSE estimator:
\BE\label{Eqn:eta_z_def_repeat}
\begin{split}
\eta_z(z_r,y,v)&=\mathbb{E}[Z|Y=y,Z_r=z_r]\\
&= \frac{\int_{f^{-1}(y)}u\cdot \mathcal{N}(u;z_r,v)du}{\int_{f^{-1}(y)}\mathcal{N}(u;z_r,v)du},
\end{split}
\EE
where $(Z,Z_r)\sim\mathcal{N}(\mathbf{0},\bm{\Sigma})$ where
\BE\label{Eqn:App_cov}
\bm{\Sigma} \Mydef 
\begin{bmatrix}
1 & 1-v_r\\
1-v_r & 1-v_r
\end{bmatrix},
\EE
and $Y=f(Z)$. Recall that $\mmse_z(v_r)$ is defined as 
\BE\label{Eqn:MMSE_def_app}
\mmse_z(v_r) = \mathbb{E}\Big( Z-\mathbb{E}[Z|Z_r,Y] \Big)^2,
\EE


Lemma \ref{Lem:MMSE_z_conditional} below is a consequence of the covariance structure of $(Z,Z_r)$ defined in \eqref{Eqn:App_cov}.
\begin{lemma}\label{Lem:MMSE_z_conditional}
Let $\mmse_z(v_r)$ be the MMSE defined in \eqref{Eqn:MMSE_def_app}. Let $Z\sim\mathcal{N}(0,1)$, $Y=f(Z)$ and $v_r\in(0,1]$. We have
\[
\mmse_z(v_r) = \mmse(Z,v_r^{-1}-1|Y),
\]
where the right hand side is a conditional MMSE defined in \eqref{Eqn:cond_MMSE}.
\end{lemma}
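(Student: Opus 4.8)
The plan is to recognize that, under the covariance structure \eqref{Eqn:App_cov}, the companion variable $Z_r$ is nothing but a rescaled additive-white-Gaussian-noise (AWGN) observation of $Z$ at signal-to-noise ratio $\snr = v_r^{-1}-1$. Once this is established, conditioning on $(Z_r,Y)$ becomes equivalent to conditioning on $(\sqrt{\snr}\,Z+N,Y)$, and the two mean-square errors coincide by definition. Concretely, since $(Z,Z_r)$ is jointly Gaussian with the covariance in \eqref{Eqn:App_cov}, I would first write the exact Gaussian decomposition
\[
Z_r = (1-v_r)Z + \sqrt{v_r(1-v_r)}\,W, \qquad W\sim\mathcal{N}(0,1),\ W\Perp Z,
\]
and verify that it reproduces $\mathbb{E}[Z^2]=1$, $\mathbb{E}[Z_r^2]=1-v_r$, and $\mathbb{E}[ZZ_r]=1-v_r$, matching \eqref{Eqn:App_cov}. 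Because $Y=f(Z)$ is a deterministic function of $Z$ and $W\Perp Z$, the noise $W$ is in fact independent of the pair $(Z,Y)$, exactly as required of the noise $N$ in the definition \eqref{Eqn:cond_MMSE}.

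The second step is a rescaling. For $v_r\in(0,1)$ define the positive constant $c\Mydef 1/\sqrt{v_r(1-v_r)}$ and set $\tilde Y\Mydef c\,Z_r$. Then
\[
\tilde Y = \sqrt{\tfrac{1-v_r}{v_r}}\;Z + W = \sqrt{v_r^{-1}-1}\;Z + W,
\]
which is precisely the canonical AWGN observation $\sqrt{\snr}\,Z+N$ with $\snr=v_r^{-1}-1$. Since $c>0$, the map $Z_r\mapsto \tilde Y$ is an invertible linear bijection, so the two observations generate the same sigma-algebra and $\mathbb{E}[Z\mid Z_r,Y]=\mathbb{E}[Z\mid\tilde Y,Y]$ almost surely. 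The joint law of $(Z,\tilde Y,Y)$ is identical to that of $(Z,\sqrt{\snr}Z+N,Y)$ used in \eqref{Eqn:cond_MMSE}, whence
\[
\mmse_z(v_r)=\mathbb{E}\big[(Z-\mathbb{E}[Z\mid Z_r,Y])^2\big]=\mathbb{E}\big[(Z-\mathbb{E}[Z\mid \sqrt{\snr}Z+N,Y])^2\big]=\mmse(Z,v_r^{-1}-1\mid Y).
\]

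The remaining care concerns the boundary value $v_r=1$, where the rescaling degenerates. There the covariance \eqref{Eqn:App_cov} forces $Z_r=0$ almost surely, and correspondingly $\snr=v_r^{-1}-1=0$, so that $\sqrt{\snr}Z+N=N$ carries no information about $Z$; both sides reduce to $\mathbb{E}[(Z-\mathbb{E}[Z\mid Y])^2]$, and the identity persists. I do not expect any genuinely hard step here: the whole lemma is a structural recognition, and the only points demanding a line of justification are the measure-theoretic equivalence of the two conditionings (invariance of conditional expectation under the invertible scaling) and the verification that $W$ is independent of $(Z,Y)$. The main obstacle, such as it is, is simply bookkeeping of the degenerate endpoints $v_r\to 0$ and $v_r=1$ so that the stated range $v_r\in(0,1]$ is covered cleanly.
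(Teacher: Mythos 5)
Your proof is correct and follows exactly the route the paper intends: the paper states Lemma \ref{Lem:MMSE_z_conditional} as ``a consequence of the covariance structure'' in \eqref{Eqn:App_cov} without writing out the details, and your argument --- decomposing $Z_r=(1-v_r)Z+\sqrt{v_r(1-v_r)}\,W$, rescaling by the positive constant $1/\sqrt{v_r(1-v_r)}$ to obtain the canonical AWGN observation at $\snr=v_r^{-1}-1$, and invoking invariance of the conditional expectation under this bijection --- is precisely the omitted verification. Your handling of the endpoint $v_r=1$ (both sides degenerating to $\mathbb{E}[(Z-\mathbb{E}[Z\mid Y])^2]$) is also correct.
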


\subsection{Properties of the SE maps}

In this appendix, we discuss a few properties of the maps $\phi$ and $\Phi$ in \eqref{Eqn:SE_Gaussian}:
\BS\label{Eqn:SE_fixed}
\begin{align}
\phi(v_r)&=\left({ \frac{1}{\mmse_z\left(v_r\right)}-\frac{1}{v_r}}\right)^{-1}.\\
\Phi(v_l)&=\left({\frac{1}{\frac{1}{\delta}\cdot\mathbb{E}\left[ \frac{ v_l\Lambda}{v_l + \Lambda} \right]}-\frac{1}{v_l}}\right)^{-1},
\end{align}
\ES
where the expectation in $\Phi$ is over $\Lambda$, which is distributed according to the asymptotic eigenvalue distribution of $\bm{A}^{\UT}\bm{A}$, and
\[
\mmse_z(v_r) \Mydef  \mmse\left(Z,v_r^{-1}-1|f(Z)\right).
\]
The following lemmas collect some useful properties of the MMSE \cite{Guo2011}, and the maps $\phi$, $\Phi$.

\begin{lemma}[Properties of $\mmse(Z,\snr|U)$]\label{Lem:MMSE_z}
The following hold:
\begin{enumerate}
\item[(i)] Assume $Z\sim\mathcal{N}(0,1)$. Then, $\mmse(Z,\snr|U)\le \frac{1}{1+\snr}$, $\forall\snr>0$. Further, the inequality is strict if $U$ is not independent of $Z$.
\item[(ii)] $\frac{\mr{d}}{\mr{d}\snr}\mmse(Z,\snr|U)=-\mathbb{E}\left(\mr{var}^2[Z|\sqrt{\snr}Z+N,U]\right)$, where $\mr{var}[Z|\sqrt{\snr}Z+N,U]\Mydef \mathbb{E}[Z^2|\sqrt{\snr}Z+N,U]-\mathbb{E}^2[Z|\sqrt{\snr}Z+N,U]$, and $N\sim\mathcal{N}(0,1)$ is independent of $(Z,U)$.
\end{enumerate}
\end{lemma}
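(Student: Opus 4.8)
The plan is to handle the two parts with standard Gaussian-channel MMSE machinery. For part (i), write $Y=\sqrt{\snr}Z+N$ and recall that $\mathbb{E}[Z\mid Y,U]$ is the $L^2$-orthogonal projection of $Z$ onto functions of $(Y,U)$. Since $\mathbb{E}[Z\mid Y]$ is itself a (degenerate in $U$) function of $(Y,U)$, the optimal projection does at least as well, so $\mmse(Z,\snr|U)\le\mmse(Z,\snr)$. For $Z\sim\mathcal{N}(0,1)$ the posterior is Gaussian and the linear estimator is exact, giving $\mmse(Z,\snr)=1/(1+\snr)$; this yields the claimed bound. For the strictness claim I would use the orthogonality of the two errors: writing $Z-\mathbb{E}[Z\mid Y]=(Z-\mathbb{E}[Z\mid Y,U])+(\mathbb{E}[Z\mid Y,U]-\mathbb{E}[Z\mid Y])$, the two summands are orthogonal in $L^2$ (the first is orthogonal to every function of $(Y,U)$), so
\[
\mmse(Z,\snr)-\mmse(Z,\snr|U)=\mathbb{E}\big[(\mathbb{E}[Z\mid Y,U]-\mathbb{E}[Z\mid Y])^2\big]\ge0 .
\]

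Equality forces $\mathbb{E}[Z\mid Y,U]=\mathbb{E}[Z\mid Y]$ almost surely, i.e. the side information $U$ never moves the posterior mean of $Z$. I would then argue this cannot happen when $U$ is not independent of $Z$: conditioning on $U=u$ reduces the problem to estimating $Z_u\sim P_{Z\mid U=u}$ in Gaussian noise, and a nondegenerate dependence of the conditional law on $u$ produces a $u$-dependent posterior mean, contradicting the equality. The main obstacle in part (i) is exactly making this last step rigorous in full generality; in the application at hand $U=f(Z)$ is a deterministic function of $Z$, for which $P_{Z\mid U=u}$ is supported on $f^{-1}(u)$ and the $u$-dependence of the posterior mean is manifest.

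For part (ii) I would reduce the conditional MMSE to an average of unconditional MMSEs. Because $N\perp(Z,U)$, conditioning on $U=u$ gives $\mmse(Z,\snr|U)=\mathbb{E}_U[\mmse(Z_U,\snr)]$ with $Z_u\sim P_{Z\mid U=u}$, exactly as in the proof of Lemma \ref{Lem:MMSE_D_ID}. I would then invoke the Guo--Shamai--Verd{\'u} derivative identity for the \emph{unconditional} AWGN MMSE, $\frac{\mr{d}}{\mr{d}\snr}\mmse(Z_u,\snr)=-\mathbb{E}[\var^2(Z_u\mid\sqrt{\snr}Z_u+N)]$, differentiate under the expectation over $U$, and recombine $\var(Z_U\mid\sqrt{\snr}Z_U+N)$ into $\var(Z\mid\sqrt{\snr}Z+N,U)$ to obtain
\[
\frac{\mr{d}}{\mr{d}\snr}\mmse(Z,\snr|U)=-\mathbb{E}\big[\var^2(Z\mid\sqrt{\snr}Z+N,U)\big].
\]
If a self-contained derivation of the unconditional identity is desired, I would represent the posterior mean and variance through the score of the output density (Tweedie/Stein identities for Gaussian noise) and differentiate in $\snr$ using the heat-equation structure of the channel, whereupon the posterior third cumulant appears and cancels and the square of the posterior variance remains. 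The principal technical obstacle here is justifying the interchange of differentiation with the two expectations (over $U$ and over the channel), which I would control by dominated convergence, using the finite moments of $Z\sim\mathcal{N}(0,1)$ together with uniform-in-$\snr$ bounds on the posterior variance and its $\snr$-derivative on compact SNR intervals.
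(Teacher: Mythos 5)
The paper does not actually prove this lemma: it is stated as a collection of known facts and attributed wholesale to the cited reference (Guo et al.), so there is no in-paper argument to match yours against. Your proposal is a correct self-contained derivation along the standard lines, and it is essentially the argument contained in that reference. For part (i), the chain ``conditioning reduces MMSE'' plus the Gaussian value $\mmse(Z,\snr)=1/(1+\snr)$ is right, and your orthogonality decomposition correctly reduces strictness to showing that $\mathbb{E}[Z\mid Y,U]=\mathbb{E}[Z\mid Y]$ a.s.\ forces $U\perp Z$. You flag this as the main obstacle; the clean way to close it is via the score (Tweedie) identity $\mathbb{E}[\sqrt{\snr}Z\mid Y=y,U=u]=y+\partial_y\log p_{Y\mid U=u}(y)$: a $u$-independent posterior mean makes $\partial_y\log p_{Y\mid U=u}$ independent of $u$, normalization then makes $p_{Y\mid U=u}$ itself independent of $u$, and injectivity of Gaussian convolution (characteristic functions) pushes this back to $P_{Z\mid U=u}$, i.e.\ $U\perp Z$; real-analyticity of the output densities handles the ``a.e.\ $y$'' issue. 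This is stronger and cleaner than arguing from a ``$u$-dependent conditional law produces a $u$-dependent posterior mean,'' which as phrased is not obviously true pointwise. For part (ii), the disintegration $\mmse(Z,\snr|U)=\mathbb{E}_U[\mmse(Z_U,\snr)]$ followed by the unconditional derivative identity is exactly how the conditional version is obtained, and it is also consistent with how the paper itself uses the identity (e.g.\ in the proof of its Lemma on the properties of $\phi$, where $\mmse_z'(v_r)$ is computed through this formula). The dominated-convergence justification for interchanging $\tfrac{\mr{d}}{\mr{d}\snr}$ with $\mathbb{E}_U$ is the right tool; note that $\var[Z\mid\sqrt{\snr}Z+N,U]\le\var[Z\mid U]$ and $\mathbb{E}[\var[Z\mid U]^2]$ need not be finite for general $Z$, but for $Z\sim\mathcal{N}(0,1)$ one has the uniform bound $\var[Z\mid\sqrt{\snr}Z+N,U]\le 1/(1+\snr)$ in the sense of expectations, and the squared conditional variance is dominated by an integrable envelope on compact $\snr$-intervals, so the interchange is legitimate in the setting of the paper. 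In short: no gap of substance, and your write-up supplies a proof where the paper supplies only a citation.
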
\vspace{5pt}

\begin{lemma}[Properties of $\phi$ and $\Phi$]\label{Lem:phi}
The functions $\phi$ and $\Phi$ defined in \eqref{Eqn:SE_fixed} have the following properties:
\begin{enumerate}
\item[(i)] $\phi(v_r)$ is continuous and non-decreasing in $v_r\in(0,1)$. If $f(z)$ is not an invertible function, $\phi(v_r)$ is strictly increasing. Suppose that $f(Z)$ is not independent of $Z\sim\mathcal{N}(0,1)$. Then, $0\le\phi(v_r)<\infty$ and $\phi(0)=0$ if $d(f(Z))\neq0$, and $\phi(1)<\infty$ if $\mathbb{E}[Z|f(Z)]\neq0$;
\item[(ii)] $\Phi(v_l)$ is continuous and strictly increasing in $v_l\in(0,\infty)$. Further, $\Phi(0)=0$ and $\Phi(\infty)=1$. 
\end{enumerate}
\end{lemma}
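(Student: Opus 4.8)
The plan is to treat $\phi$ and $\Phi$ separately, reducing each monotonicity claim to an elementary convexity inequality applied to the reciprocal of the map. For $\phi$ I would first pass to the signal-to-noise variable $\snr\Mydef v_r^{-1}-1$, a continuous strictly decreasing bijection of $v_r\in(0,1)$ onto $\snr\in(0,\infty)$. Writing $m(\snr)\Mydef\mmse(Z,\snr\,|\,f(Z))$, so that $\mmse_z(v_r)=m(\snr)$ and $v_r^{-1}=1+\snr$, the definition becomes
\[
\frac{1}{\phi}=\frac{1}{m(\snr)}-(1+\snr),\qquad \phi=\frac{m(\snr)}{1-(1+\snr)\,m(\snr)}.
\]
By Lemma \ref{Lem:MMSE_z}(i), $m(\snr)<(1+\snr)^{-1}$ whenever $f(Z)$ is not independent of $Z$, so the denominator is strictly positive on $(0,1)$; this gives $0\le\phi<\infty$, and since $m$ is differentiable in $\snr$ (Lemma \ref{Lem:MMSE_z}(ii)), continuity of $\phi$ follows.

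Because the reparametrisation reverses order, it suffices to show $1/\phi$ is nondecreasing in $\snr$. Setting $V\Mydef \mathrm{var}\!\left(Z\,\middle|\,\sqrt{\snr}Z+N,\,f(Z)\right)$ and using the identities $m(\snr)=\mathbb{E}[V]$ and $m'(\snr)=-\mathbb{E}[V^2]$ (Lemma \ref{Lem:MMSE_z}(ii)), I would obtain
\[
\frac{d}{d\snr}\,\frac{1}{\phi}=-\frac{m'(\snr)}{m(\snr)^2}-1=\frac{\mathbb{E}[V^2]}{(\mathbb{E}[V])^2}-1\ge 0,
\]
the inequality being Jensen's, so $\phi$ is nondecreasing in $v_r$. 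For strictness when $f$ is not invertible I would argue that $V$ is not almost surely constant: there is a positive-probability event on which $f^{-1}(f(Z))$ contains at least two points, and on it the posterior of $Z$ given the finite-SNR Gaussian observation and $f(Z)$ is genuinely spread and varies with the observation, forcing $\mathbb{E}[V^2]>(\mathbb{E}[V])^2$. For the boundary values: letting $\snr\to\infty$, the numerator $m(\snr)\to0$ while the denominator tends to $1-\lim_{\snr\to\infty}\snr\,m(\snr)=1-\mathscr{D}(Z|f(Z))=d(f(Z))$ by the definition of conditional MMSE dimension and Lemma \ref{Lem:MMSE_D_ID}, which is positive when $d(f(Z))\neq0$, giving $\phi(0)=0$; letting $\snr\to0$, the extra observation is pure noise, so $m(0)=1-\mathrm{var}(\mathbb{E}[Z|f(Z)])$ and the denominator tends to $\mathrm{var}(\mathbb{E}[Z|f(Z)])$, positive exactly when $\mathbb{E}[Z|f(Z)]\not\equiv0$, whence $\phi(1)=m(0)/(1-m(0))<\infty$.

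For $\Phi$ I would set $F(v_l)\Mydef\mathbb{E}[\lambda v_l/(v_l+\lambda)]$, so that clearing reciprocals gives $\Phi=v_lF/(\delta v_l-F)$. Continuity and the limits follow by dominated convergence (the integrand is bounded by $\lambda\le b$): as $v_l\to0$, $F\sim v_l$, so $\Phi\to0$; as $v_l\to\infty$, $F\to\mathbb{E}[\lambda]=\delta$, so $\Phi\to1$; and $F\le v_l$ together with $\delta>1$ keeps $\delta v_l-F\ge(\delta-1)v_l>0$. For strict monotonicity I would show $1/\Phi=\delta/F-1/v_l$ is strictly decreasing, using
\[
v_l^2F'(v_l)=\mathbb{E}\!\left[\Big(\tfrac{\lambda v_l}{v_l+\lambda}\Big)^{2}\right]\ge\Big(\mathbb{E}\big[\tfrac{\lambda v_l}{v_l+\lambda}\big]\Big)^{2}=F^2
\]
(Cauchy--Schwarz) together with $\delta>1$ to get $\delta v_l^2F'>F^2$, i.e. $\tfrac{d}{dv_l}(1/\Phi)=v_l^{-2}-\delta F'/F^2<0$.

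The routine parts are the convexity inequalities and the reparametrisation. The delicate point is the \emph{strict} monotonicity of $\phi$ for non-invertible $f$: proving that the random posterior variance $V$ is non-degenerate at every finite $\snr$ requires a genuine (if short) argument about the law of $Z$ restricted to $f^{-1}(f(Z))$. One must also verify that the two boundary limits $\phi(0)$ and $\phi(1)$ are not of indeterminate $0/0$ type outside the explicitly excluded degenerate regimes ($d(f(Z))=0$, corresponding to purely flat $f$, and $f$ even, where $\mathbb{E}[Z|f(Z)]\equiv0$), which is precisely where the stated hypotheses $d(f(Z))\neq0$ and $\mathbb{E}[Z|f(Z)]\neq0$ enter.
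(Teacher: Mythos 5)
Your proposal is correct and follows essentially the same route as the paper's proof: monotonicity of $\phi$ via the derivative identity $\mmse'=-\mathbb{E}[\mathrm{var}^2]$ plus Jensen's inequality on the conditional variance (with equality only in the invertible case), $\phi(0)=0$ via $\lim_{\snr\to\infty}\snr\,\mmse(Z,\snr|Y)=\mathscr{D}(Z|Y)=1-d(f(Z))<1$, $\phi(1)$ via the explicit zero-SNR MMSE, and monotonicity of $\Phi$ via $\mathbb{E}[X^2]\ge(\mathbb{E}[X])^2$ combined with $\delta>1$. Working with $1/\phi$ in the $\snr$ variable and $1/\Phi$ in place of the paper's direct quotient-rule differentiation is only a cosmetic reparametrization, and your explicit verification of $\Phi(0)=0$, $\Phi(\infty)=1$ fills in limits the paper states without detail.
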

\begin{proof}

\textit{Proof of (i):} The continuity of $\phi(v_r)$ is due to the continuity of the function $\mmse_z(v_r)=\mmse(Z,\snr|Y)$, where $\snr=v_r^{-1}-1$ \cite{Guo2011}.

We next prove that $\phi$ is strictly increasing. Differentiation yields (see \eqref{Eqn:SE_fixed})
\BS
\begin{align}
\phi'(v_r) &= \frac{v^2_r \cdot\mmse_z'(v_r)-\mmse_z^2(v_r)}{\left( v_r-\mmse_z(v_r) \right)^2}.\label{Eqn:phi_prime}
\end{align}
\ES
Hence, we only need to prove
\BS
\begin{align}
\mmse_z'(v_r) &> \frac{1}{v_r^2}\cdot \mmse_z^2(v_r), \quad \forall v_r\in(0,1]. \label{Eqn:lem_phi_1}
\end{align}
\ES
Recall the definition
\[
\mmse_z(v_r)=\mmse(Z,\snr|Y),\quad \snr \Mydef v_r^{-1}-1,
\]
and the derivative formula of the conditional MMSE in Lemma \ref{Lem:MMSE_z}, we have
\BE\label{Eqn:mmse_z_dif}
\mmse_z'(v_r) = \frac{1}{v_r^2}\cdot \mathbb{E}\left(\mr{var}^2[Z|\sqrt{\snr}Z+N,Y]\right),\quad \forall v_r\in(0,1].
\EE
Further,
\BE\label{Eqn:mmse_var}
\mmse_z(v_r)=\mmse(Z,\snr|Y)=\mathbb{E}\left(\var[Z|\sqrt{\snr}Z+N,Y]\right)
\EE
Combining \eqref{Eqn:mmse_z_dif} and \eqref{Eqn:mmse_var}, and applying Jensen's inequality proves $\mmse_z'(v_r) \ge \frac{1}{v_r^2}\cdot \mmse_z^2(v_r)$, and equality holds only when $\var[Z|\sqrt{\snr}Z+N,Y]$ is constant with respect to realizations of $\sqrt{\snr}Z+N$ and $Y$. This is only possible when $Z_y\sim P_{Z|Y=y}$ is Gaussian with $\var[Z_y]$ invariant to $y$ (including the degenerate case where $\var[Z_y]=0$). Again, this is only possible when $f(z)$ is an invertible function for which $Z_y$ is a constant and $\var[Z_y]=0$). To summarize, when $f(z)$ is not an invertible function, \eqref{Eqn:lem_phi_1} holds and so $\phi$ is a strictly increasing function.

Finally, we verify $\phi(0)$ and $\phi(1)$. First, for any $v_r\in(0,1)$, we have
\BE\label{Eqn:MMSE_le_v}
\begin{split}
\mmse_z(v_r)&=\mmse(Z,\snr|Y)\quad (\snr=v_r^{-1}-1)\\
&\overset{(a)}{\le} \mmse(Z,\snr)\\
&= \frac{1}{1+\snr}\\
&=v_r
\end{split}
\EE
where step (a) is from the fact that conditioning reduces MMSE \cite[Proposition 11]{Guo2011}. Further, the inequality is strict for $v_r\neq1$ ($\snr>0$) whenever $f(Z)$ is not independent of $Z$.
It follows that
\[
\phi(v_r)=\left(\frac{1}{\mmse_z(v_r)}-\frac{1}{v_r}\right)^{-1}\in[0,\infty),\quad \forall v_r\in(0,1).
\]
Further, $\phi(v_r)$ is continuously increasing in $(0,1)$ and so the limit $\lim_{v_r\to 0_+}\phi(v_r)$ exists (which is defined to be $\phi(0)$). Hence, $\phi(0)\ge0$.

Lemma \ref{Lem:MMSE_D_ID} shows $d(Y)=1-\mathscr{D}(Z|Y)$. Hence, if $d(Y)\neq0$, we would have 
\[
\mathscr{D}(Z|Y)\Mydef\lim_{\snr\to\infty}\snr\cdot \mmse(Z,\snr|Y)<1.
\]
Then,
\[
\begin{split}
\phi(0)& \Mydef \lim_{v_r\to0}\phi(v_r)\\
&=\lim_{v_r\to0}\frac{\mmse_z(v_r)}{1-\frac{\mmse_z(v_r)}{v_r}}\\
&\overset{(a)}{=}\lim_{\snr\to\infty}\frac{\mmse(Z,\snr|Y)}{1-(\snr+1)\mmse(Z,\snr|Y)}\quad (\snr=v_r^{-1}-1)\\
&= 0
\end{split}
\]
where step (a) follows from the definition of $\mmse_z$ below \eqref{Eqn:SE_Gaussian}, and the fact that $\lim_{\snr\to\infty}\mmse(Z,\snr|Y)=0$ and $\lim_{\snr\to\infty}\snr\cdot\mmse(Z,\snr|Y)=\mathscr{D}(Z|Y)<1$.

Finally,
\BE\label{Eqn:phi1}
\begin{split}
\phi(1)&=\left(\frac{1}{\mmse_z(1)}-1\right)^{-1}\\
&= \left(\frac{1}{\mmse(Z,\snr=0|Y)}-1\right)^{-1}\\
&= \left(\frac{1}{ \mathbb{E}\left( \var[Z|Y]\right) }-1\right)^{-1}\\
&=  \left(\frac{1}{ \mathbb{E}\left( \mathbb{E}[Z^2|Y]-\mathbb{E}^2[Z|Y]\right) }-1\right)^{-1}\\
&=  \left(\frac{1}{1- \mathbb{E}\left(\mathbb{E}^2[Z|Y]\right) }-1\right)^{-1}\quad(\mathbb{E}[Z^2]=1),
\end{split}
\EE
where $\mathbb{E}[Z^2]=1$ since $Z\sim\mathcal{N}(0,1)$. Hence, $\phi(1)\ge0$ and $\phi(1)<\infty$ if $\mathbb{E}[Z|Y]\neq 0$. 

\textit{Proof of (ii): } Similar to the proof of part (i), to prove $\Phi(v_l)$ is increasing, we only need to verify
\BE\label{Eqn:Phi_mono_ineq}
\frac{1}{\delta}\mathbb{E}\left[\left( \frac{v_l\Lambda}{v_l+\Lambda}\right)^2 \right]> \left(\frac{1}{\delta}\mathbb{E}\left[\frac{v_l\Lambda}{v_l+\Lambda}\right]\right)^2, \quad \forall v_r\in(0,1]. 
\EE
When $\delta>1$, Jensen's inequality yields the result:
\[
\begin{split}
\frac{1}{\delta}\mathbb{E}\left[\left( \frac{v_l\Lambda}{v_l+\Lambda}\right)^2 \right]&> \frac{1}{\delta}\left(\mathbb{E}\left[\frac{v_l\Lambda}{v_l+\Lambda}\right]\right)^2\\
&>\left(\frac{1}{\delta}\mathbb{E}\left[\frac{v_l\Lambda}{v_l+\Lambda}\right]\right)^2, \quad \forall v_r\in(0,1]. 
\end{split}
\]
For $\delta\le1$, note that $P_\Lambda = (1-\delta)P_0+\delta P_{\tilde{\Lambda}}$ where $P_{\tilde{\Lambda}}$ denotes the asymptotic eigenvalue distribution of $\bm{AA}^\UT$ (we have $\mathbb{E}[\tilde{\Lambda}^2]=1$). Hence, \eqref{Eqn:Phi_mono_ineq} can be reformulated as
\[
\mathbb{E}\left[\left( \frac{v_l\tilde{\Lambda}}{v_l+\tilde{\Lambda}}\right)^2 \right]> \left(\mathbb{E}\left[\frac{v_l\tilde{\Lambda}}{v_l+\tilde{\Lambda}}\right]\right)^2, \quad \forall v_r\in(0,1],
\]
and holds due to Jensen's inequality.
\end{proof}

\begin{lemma}\label{Lem:FP_uninformative}
If $f(z)=f(-z),\forall z$, then $\mmse_z(1)=1$. Further, $(V_r,V_l)=(1,\infty)$ is a fixed point of the state evolution equations in \eqref{Eqn:SE_Gaussian}.
\end{lemma}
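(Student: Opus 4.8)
The plan is to prove the two claims in sequence. First I would show $\mmse_z(1) = 1$ by exploiting the interpretation of $\mmse_z$ as a conditional MMSE at zero SNR. Unfolding the definition $\mmse_z(v_r) \Mydef \mmse(Z, v_r^{-1}-1 \mid f(Z))$ at $v_r = 1$ gives $\snr = v_r^{-1} - 1 = 0$, so by \eqref{Eqn:cond_MMSE} we have $\mmse_z(1) = \E[(Z - \E[Z \mid \sqrt{0}\,Z + N, f(Z)])^2]$. The observation $\sqrt{0}\,Z + N = N$ is pure noise independent of $(Z, f(Z))$, so the conditional expectation reduces to $\E[Z \mid f(Z)]$ and $\mmse_z(1) = \E[(Z - \E[Z \mid f(Z)])^2]$. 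The crux is then a symmetry argument: since $Z \sim \mathcal{N}(0,1)$ satisfies $Z \overset{d}{=} -Z$ and $f$ is even, the joint law of $(Z, f(Z))$ coincides with that of $(-Z, f(Z))$; hence the conditional law of $Z$ given $f(Z) = y$ is symmetric about the origin, forcing $\E[Z \mid f(Z) = y] = -\E[Z \mid f(Z) = y]$ and therefore $\E[Z \mid f(Z) = y] = 0$ for almost every $y$. This yields $\mmse_z(1) = \E[Z^2] = 1$.

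With $\mmse_z(1) = 1$ in hand, I would verify that $(V_r, V_l) = (1, \infty)$ reproduces itself under both SE maps in \eqref{Eqn:SE_Gaussian}. Substituting $\mmse_z(1) = 1$ into \eqref{Eqn:SE_Gaussian_a} gives $\phi(1) = (1/\mmse_z(1) - 1)^{-1} = (1 - 1)^{-1} = \infty$, matching $V_l = \infty$ (consistent with Lemma \ref{Lem:phi}(i), whose finiteness criterion $\E[Z\mid f(Z)]\neq 0$ fails precisely in the even case). For the second map, Lemma \ref{Lem:phi}(ii) already records $\Phi(\infty) = 1$; equivalently, as $v_l \to \infty$ one has $\tfrac{1}{\delta}\E[\tfrac{v_l \lambda}{v_l + \lambda}] \to \tfrac{1}{\delta}\E[\lambda] = 1$ (using $\E[\lambda] = \delta$ from Assumption (A.2)) and $1/v_l \to 0$, so $\Phi(\infty) = (1/1 - 0)^{-1} = 1$, matching $V_r = 1$. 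Since both maps return $(V_r, V_l) = (1, \infty)$, it is indeed a fixed point of the state evolution.

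I expect the only genuinely conceptual step to be the symmetry argument establishing $\E[Z \mid f(Z)] = 0$; everything else is a routine evaluation of $\phi$ and $\Phi$ at boundary values. The one technical subtlety is that $V_l = \infty$ is not a finite state, so I would treat the map $\Phi$ via the limit $v_l \to \infty$ rather than by plugging in $\infty$ directly; this makes the interchange of limit and expectation transparent, since the integrand $\tfrac{v_l \lambda}{v_l + \lambda}$ is monotone increasing in $v_l$ and bounded by $\lambda$, so the limit is justified by monotone (or dominated) convergence.
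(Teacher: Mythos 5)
Your proof is correct and follows essentially the same route as the paper: evaluate $\mmse_z(1)$ as the zero-SNR conditional MMSE $\E\bigl[\mathrm{var}[Z\mid f(Z)]\bigr]$, use the evenness of $f$ together with the symmetry of $Z$ to get $\E[Z\mid f(Z)]=0$, and then check the fixed point by direct substitution. You actually supply more detail than the paper, which leaves the symmetry step implicit and dismisses the fixed-point verification as ``a simple calculation'' (and in doing so misprints the fixed point as $(V_r,V_l)=(1,0)$ rather than the $(1,\infty)$ of the lemma statement, which your limit computation $\phi(1)=\infty$, $\Phi(\infty)=1$ correctly confirms).
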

\begin{proof}
Recall that $\mmse_z(v_r)=\mmse(Z,v_r^{-1}-1|Y)$. Hence, $\mmse_z(1)=\mmse(Z,\snr=0|Y)$ and
\[
\begin{split}
\mmse(Z,\snr=0|Y)&=\mathbb{E}\Big(\mathbb{E}[|Z|^2|Y]-\mathbb{E}^2[Z|Y]\Big)\\
&=\mathbb{E}\left(\mathbb{E}[|Z|^2|Y]\right)\\
&=\mathbb{E}[|Z|^2]=1.
\end{split}
\]

A simple calculation shows that $(V_r,V_l)=(1,0)$ is a fixed point of \eqref{Eqn:SE_Gaussian}.
\end{proof}

\begin{lemma}\label{Lem:Z_perp_Y}
Consider two independent Gaussian RVs: $Z\sim\mathcal{N}(0,\tau)$ and $W\sim\mathcal{N}(0,1)$. Suppose $U=Z+\sigma_w W$ and $Y_\sigma\sim P_{Y_\sigma}$, where $P_{Y_\sigma}\propto P_U \cdot P_{Y_\sigma|U}$ and $P_{Y_\sigma|U}$ is an arbitrary distribution. Define ${Z}_u^{\perp}\Mydef Z-\frac{\tau}{\tau+\sigma^2_w} U$. Then, we have $Z_u^{\perp}\Perp (U,Y_\sigma)$, where $A \Perp B$ means that $A$ and $B$ are independent. 
\end{lemma}

\begin{proof}
It is straightforward to show $Z_u^{\perp}\Perp U$. Since $Y_\sigma$ is generated from $U$, we also have $Z_u^{\perp}\Perp Y_\sigma$. 
\end{proof}

\vspace{10pt}

The following lemma summarizes a few useful properties of $\phi(v_r,\sigma^2_w)$ (which is the noisy counterpart of $\phi(v_r)$).
\begin{lemma}\label{Lem:phi_noisy}\label{Lem:MMSE_noisy}
Define 
\BE\label{Eqn:MMSE_noisy}
\mmse_z(v_r,\sigma^2_w)\Mydef\mathbb{E}\left[\left(\mathbb{E}[Z|Y_\sigma,Z_r]-Z\right)^2\right],
\EE
where $Z_r=(1-v_r)Z+\sqrt{v_r(1-v_r)}N$, $Y_\sigma=f(Z+\sigma_w W)$, $Z,W,N$ are mutually independent standar Gaussian RVs. Define
\BE\label{Eqn:phi_noisy_app}
\phi(v_r,\sigma^2_w)=\left( \frac{1}{\mmse_z(v_r,\sigma^2_w)} -\frac{1}{v_r} \right)^{-1}.
\EE
For any $\sigma_w>0$ and $v_r\in(0,1)$, $\phi(v_r,\sigma^2_w)$ satisfies the following:
\begin{itemize}
\item[(i)] $\phi(v_r,\sigma^2_w)$ is continuous and increasing in $v_r\in[0,1)$. Further, $\phi(v_r,\sigma^2_w)\ge0$;
\item[(ii)] $\sigma^2_w \le \phi(v_r,\sigma^2_w)<\infty$, $\forall v_r\in[0,1)$.
\end{itemize}
\end{lemma}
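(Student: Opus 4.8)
The plan is to reduce the lemma to the conditional-MMSE calculus already developed for the noiseless maps in the proof of Lemma~\ref{Lem:phi}, treating $Y_\sigma=f(Z+\sigma_w W)$ as the side information, and then to handle the two bounds of part~(ii) by sandwiching $\mmse_z(v_r,\sigma_w^2)$ between Gaussian benchmarks. The first step is to recast $\mmse_z(v_r,\sigma^2_w)$ as a conditional MMSE for a standard AWGN channel. Setting $\snr\Mydef v_r^{-1}-1$, a direct factorization gives $Z_r=(1-v_r)Z+\sqrt{v_r(1-v_r)}\,N=\sqrt{v_r(1-v_r)}\,(\sqrt{\snr}\,Z+N)$, so $Z_r$ is a fixed positive multiple of the AWGN output $\sqrt{\snr}\,Z+N$ and conditioning on $Z_r$ is identical to conditioning on $\sqrt{\snr}\,Z+N$. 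Since $N$ is independent of $(Z,W)$, it is independent of $(Z,Y_\sigma)$, so $Y_\sigma$ occupies exactly the role of the conditioning variable $U$ in \eqref{Eqn:cond_MMSE}. This yields the noisy analogue of Lemma~\ref{Lem:MMSE_z_conditional}, namely $\mmse_z(v_r,\sigma_w^2)=\mmse(Z,\,v_r^{-1}-1\,|\,Y_\sigma)$, the identity that drives everything else.

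Given this identity, part~(i) is essentially a transcription of the proof of Lemma~\ref{Lem:phi}(i) with $f(Z)$ replaced by $Y_\sigma$. Continuity of $\phi$ follows from continuity of $\mmse(Z,\snr|Y_\sigma)$ in $\snr$. For monotonicity I would differentiate $\phi=v_r\mmse_z/(v_r-\mmse_z)$, substitute $\mmse_z'(v_r)=v_r^{-2}\,\mathbb{E}[\var^2[Z|\sqrt{\snr}Z+N,\,Y_\sigma]]$ from Lemma~\ref{Lem:MMSE_z}(ii), and apply Jensen's inequality $\mathbb{E}[\var^2]\ge(\mathbb{E}[\var])^2=\mmse_z^2$ to obtain $v_r^2\mmse_z'\ge\mmse_z^2$, hence $\phi'\ge0$. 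Non-negativity $\phi\ge0$ comes from the fact that conditioning cannot increase MMSE: $\mmse_z\le\mmse(Z,\snr)=1/(1+\snr)=v_r$, so that $1/\mmse_z-1/v_r\ge0$.

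The new content is part~(ii). For the lower bound $\phi\ge\sigma_w^2$ I would bound $\mmse_z$ from below by a data-processing argument: since $Y_\sigma=f(Z+\sigma_w W)$ is a deterministic function of $Z+\sigma_w W$, revealing the finer observation $Z+\sigma_w W$ can only decrease the error, so $\mmse_z(v_r,\sigma_w^2)\ge \mathbb{E}[(Z-\mathbb{E}[Z|Z+\sigma_w W,\,Z_r])^2]$. The right-hand side is a purely Gaussian estimation problem: the posterior of $Z$ given $Z_r$ already has variance $\var[Z|Z_r]=v_r$ (precision $1/v_r$), and the extra independent observation $Z+\sigma_w W$ contributes precision $1/\sigma_w^2$, so the posterior variance is $(1/v_r+1/\sigma_w^2)^{-1}=v_r\sigma_w^2/(v_r+\sigma_w^2)$. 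Thus $\mmse_z\ge v_r\sigma_w^2/(v_r+\sigma_w^2)$, which rearranges to $1/\mmse_z-1/v_r\le 1/\sigma_w^2$ and hence $\phi\ge\sigma_w^2$. Finiteness $\phi<\infty$ requires the strict inequality $\mmse_z<v_r$, i.e.\ that $Y_\sigma$ is genuinely informative about $Z$; this holds whenever $f$ has a non-flat section (equivalently $d(f(Z))\neq0$, the standing assumption of the noisy analysis), in which case conditioning strictly reduces the MMSE.

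The step I expect to be the main obstacle is the lower bound in part~(ii): one must justify the data-processing inequality with the composite side information $(Z_r,\cdot)$ carefully and then execute the Gaussian precision-addition computation correctly, since this is precisely where the noise level $\sigma_w^2$ enters and produces the characteristic floor $\sigma_w^2\le\phi$. The remaining pieces either copy the argument of Lemma~\ref{Lem:phi}(i) verbatim through the identity of the first paragraph, or reduce to routine rearrangements of the two sandwiching bounds.
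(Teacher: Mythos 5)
Your proof is correct, and for part (ii) it takes a genuinely different route from the paper's. For part (i) both you and the paper simply port the argument of Lemma \ref{Lem:phi}(i), you via the explicit identity $\mmse_z(v_r,\sigma_w^2)=\mmse\big(Z,v_r^{-1}-1\,|\,Y_\sigma\big)$ (which is valid since $Z_r$ is a fixed positive multiple of $\sqrt{\snr}Z+N$ and $N\Perp(Z,Y_\sigma)$). For part (ii) the paper proves the \emph{exact} decomposition
\[
\mmse_z(v_r,\sigma^2_w)=\Big(\tfrac{v_r}{v_r+\sigma^2_w}\Big)^2\,\mathbb{E}\big(U-\mathbb{E}[U|Z_r,Y_\sigma]\big)^2+\tfrac{v_r\sigma^2_w}{v_r+\sigma^2_w},\qquad U=Z+\sigma_w W,
\]
by writing $Z=Z_r+\tilde Z$ and splitting $\tilde Z$ into its projection onto $\tilde U=U-Z_r$ plus an orthogonal Gaussian piece independent of $(U,Z_r,Y_\sigma)$; both the floor $\mmse_z\ge v_r\sigma_w^2/(v_r+\sigma_w^2)$ and a closed form for $\phi$ (used for finiteness via the strict ``conditioning reduces MMSE'' inequality applied to $U$) fall out of that identity. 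You reach the same floor by the softer argument that $\sigma(Y_\sigma,Z_r)\subseteq\sigma(U,Z_r)$ plus the Gaussian precision-addition formula $\var[Z|Z_r,U]=(1/v_r+1/\sigma_w^2)^{-1}$, and you get finiteness from $\mmse_z<v_r$; this is more elementary and entirely sufficient for the lemma, though it yields only an inequality where the paper has an equality it reuses elsewhere. One small caveat: your sufficient condition for finiteness, $d(f(Z))\neq0$, is stronger than necessary --- the paper only needs $Y_\sigma$ not independent of $U$, i.e.\ $f$ non-constant, so e.g.\ $f=\mr{sign}$ with $d(f(Z))=0$ still gives $\phi<\infty$ --- but this is harmless since the noisy analysis invokes the lemma only under $d(f(Z))\neq0$.
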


\begin{proof}
\textit{Part (i):} Same as Lemma \ref{Lem:phi}-(i). 

\textit{Part (ii):} 
We will show that $\mmse(v_r,\sigma^2_w)$ can be rewritten  as
\BE\label{Eqn:MMSE_noisy_equiv}
\begin{split}
&\mmse_z(v_r,\sigma^2_w) \\
&= \left(\frac{v_r}{v_r+\sigma^2_w}\right)^2 \mathbb{E}\Big(U-\mathbb{E}[U|Z_r,Y_\sigma]\Big)^2 + \frac{v_r\sigma^2_w}{v_r+\sigma^2_w},
\end{split}
\EE
where $U=Z+\sigma_w W$, $Y_\sigma=f(U)$, and $(U,Z_r)\sim\mathcal{N}(\mathbf{0},\bm{\Sigma})$ where
\[
\bm{\Sigma}=
\begin{bmatrix}
1+\sigma^2_w & 1-v_r\\
1-v_r & 1-v_r
\end{bmatrix}.
\]
From \eqref{Eqn:MMSE_noisy_equiv} we have
\[
\mmse_z(v_r,\sigma^2_w)\ge\frac{v_r\sigma^2_w}{v_r+\sigma^2_w},
\]
which together with \eqref{Eqn:phi_noisy_app} yields $\phi(v_r,\sigma^2_w)\ge\sigma_w^2$. We next prove the boundedness of $\phi(v_r,\sigma^2_w)$. Substituting \eqref{Eqn:MMSE_noisy_equiv} into \eqref{Eqn:phi_noisy_app} and after straightforward calculations, we have
\[
\phi(v_r,\sigma^2_w)=\frac{v_r\cdot \mathbb{E}\Big(U-\mathbb{E}[U|Z_r,Y_\sigma]\Big)^2}{ v_r+\sigma^2_w-\mathbb{E}\Big(U-\mathbb{E}[U|Z_r,Y_\sigma]\Big)^2}.
\]
Since conditioning reduces MMSE \cite[Proposition 11]{guo2005mutual}, we have 
\[
\mathbb{E}\Big(U-\mathbb{E}[U|Z_r,Y_\sigma]\Big)^2\le \mathbb{E}\Big(U-\mathbb{E}[U|Z_r]\Big)^2=v_r+\sigma^2_w,
\]
where the inequality is strict whenever $Y_\sigma$ is not independent of $U$. All together, we have $\phi(v_r,\sigma^2_w)<\infty$.\vspace{5pt}

It only remains to prove \eqref{Eqn:MMSE_noisy_equiv}.
Let us write $Z={Z}_r+\tilde{Z}$, where $\tilde{Z}\sim\mathcal{N}(0,v_r)$ is independent of $Z_r$. We have
\[
\begin{split}
\tilde{U}\Mydef U -Z_r= \tilde{Z}+\sigma_wW.
\end{split}
\]
Define
\[
\tilde{Z}_{\tilde{u}}^{\perp}\Mydef \tilde{Z} - \frac{v_r}{v_r+\sigma^2_w}\tilde{U}.
\]
By construction, $\tilde{Z}_{\tilde{u}}^{\perp}\Perp \tilde{U}$. We also have $\tilde{Z}_{\tilde{u}}^{\perp}\Perp Z_r$\footnote{Throughout this paper, $A\Perp B$ denotes the random variables $A$ and $B$ are independent}, since $\tilde{Z}_{\tilde{u}}^{\perp}$ a linear combination of $\tilde{Z}$ and $W$ and the latter two RVs are independent of $Z_r$. Also, $\tilde{Z}_{\tilde{u}}^{\perp}\Perp Y_\sigma$ according to Lemma \ref{Lem:Z_perp_Y}. Hence,
\[
\begin{split}
\mathbb{E}[\tilde{Z}|Z_r,Y_\sigma]&=\frac{v_r}{v_r+\sigma^2_w}\cdot\mathbb{E}[\tilde{U}|Z_r,Y_\sigma]+\mathbb{E}[\tilde{Z}_{\tilde{u}}^{\perp}|Z_r,Y_\sigma]\\
&= \frac{v_r}{v_r+\sigma^2_w}\cdot\mathbb{E}[\tilde{U}|Z_r,Y_\sigma],
\end{split}
\]
where the last step is due to the independence of $\tilde{Z}_{\tilde{u}}^{\perp}$ and $(Z_r,Y_\sigma)$ and the fact that $\tilde{Z}_{\tilde{u}}^{\perp}$ is zero-mean Gaussian. Hence, we have
\[
\begin{split}
&\mathbb{E}\Big(Z-\mathbb{E}[Z|Z_r,Y_\sigma]\Big)^2\\
&=\mathbb{E}\Big(\tilde{Z}-\mathbb{E}[\tilde{Z}|Z_r,Y_\sigma]\Big)^2\\
&= \mathbb{E}\Big(\frac{v_r}{v_r+\sigma^2_w}\tilde{U}+\tilde{Z}_{\tilde{u}}^{\perp}-\mathbb{E}[\tilde{Z}|Z_r,Y_\sigma]\Big)^2\\
&= \mathbb{E}\Big(\frac{v_r}{v_r+\sigma^2_w}\tilde{U}+\tilde{Z}_{\tilde{u}}^{\perp}-\frac{v_r}{v_r+\sigma^2_w}\cdot\mathbb{E}[\tilde{U}|Z_r,Y_\sigma]\Big)^2\\
&\overset{(a)}{=} \left(\frac{v_r}{v_r+\sigma^2_w}\right)^2\cdot \mathbb{E}\Big(\tilde{U}-\mathbb{E}[\tilde{U}|Z_r,Y_\sigma]\Big)^2+\frac{v_r\sigma^2_w}{v_r+\sigma^2_w}\\
&=\left(\frac{v_r}{v_r+\sigma^2_w}\right)^2\cdot \mathbb{E}\Big(U-\mathbb{E}[U|Z_r,Y_\sigma]\Big)^2+\frac{v_r\sigma^2_w}{v_r+\sigma^2_w}
\end{split}
\]
where step (a) is due to the fact that $\tilde{Z}_{\tilde{u}}^\perp\Perp (\tilde{U},Z_r,Y_\sigma)$ and $\mathbb{E}[(\tilde{Z}_{\tilde{u}}^\perp)^2]=\frac{v_r\sigma^2_w}{v_r+\sigma^2_w}$.
\end{proof}
\vspace{5pt}

\section{Proof of Theorem \ref{Lem:converse2}}\label{App:Low_bound}

We first recall a few definitions and useful lemmas from \cite{Wu10,Wu12} in Section \ref{Sec:converse_def}. Then, we introduce our main technical lemma in Section \ref{Sec:converse_aux}, and finally prove Theorem \ref{Lem:converse2} in Section \ref{Sec:converse_main}.
\subsection{Minkowski dimension}\label{Sec:converse_def}
In the almost lossless analog signal compression framework developed in \cite{Wu10,Wu12}, the description complexity of bounded sets is gauged via their Minkowski dimension. Minkowski dimension is also called box-counting dimension \cite{pesin2008dimension} ({hence the subscript $B$ in the notation $\overline{\mr{dim}}_B$).

\begin{definition}[Minkowski Dimension]
Let $\mathcal{S}$ be a nonempty bounded subset of a metric space. The upper Minkowski dimension of $\mathcal{S}$ is defined as
\BE\label{Eqn:Mink}
\overline{\mr{dim}}_B(\mathcal{S})= \limsup_{\epsilon\to0}\frac{\log N_S(\epsilon)}{\log\frac{1}{\epsilon}},
\EE
where $N_S(\epsilon)$ is the $\epsilon$-covering number of $\mathcal{S}$, that is
\[
N_S(\epsilon)\Mydef \min\Big\{ k: \mathcal{S}\subset \bigcup_{i=1}^k B(x_i,\epsilon), X_i\in \mathcal{S} \Big\},
\]
where $B(x_i,\epsilon)$ denotes a ball centered at $x_i$ with radius $\epsilon$.
\end{definition}\vspace{5pt}


For a probability measure, we define its $\epsilon$-Minkowski dimension \cite{Wu12} as the smallest Minkowski dimension among all sets with measure at least $1-\epsilon$.

\begin{definition}[$\epsilon$-Minkowski Dimension]\label{Def:epsilon_Min}
Let $\mu$ be a probability measure on $\mathbb{R}^n$. Define the $\epsilon$-Minkowski dimension of $\mu$ as
\BE\label{Eqn:Mink_epsilon}
\overline{\mr{dim}}_B^\epsilon(\mu)=\inf\{\overline{\mr{dim}}_B(\mathcal{S}): \mu(\mathcal{S})\ge 1-\epsilon \}.
\EE
\end{definition}

An asymptotic version of the $\epsilon$-Minkowski dimension (called the Minkowski dimension compression rate) was introduced in \cite{Wu10}. Wu and Verd{\'u} \cite{Wu10} proved that the probability measure of an i.i.d. source concentrates on sets with Minkowski dimension approximately equal to the R\'enyi information dimension of the measure.

We will use the following lemma from \cite{Wu10} in the proof of the auxiliary lemma in Section \ref{Sec:converse_aux}.

\begin{lemma}[Minkowski dimension in Euclidean spaces]\label{Lem:aux_lem3}
Let $\mathcal{S}$ be a bounded subset in $(\mathbb{R}^n,\|\cdot\|_2)$. The Minkowski dimension satisfies
\BE\label{Eqn:Min_def_equiv}
\overline{\mr{dim}}_B(\mathcal{S}) =\limsup_{q\to\infty}\frac{\log\big|\langle \mathcal{S} \rangle_{2^q}\big|}{q}
\EE
where $\langle x \rangle_p\Mydef \lfloor px\rfloor/p$, and $\langle \mathcal{S} \rangle_p\Mydef\{\langle x \rangle_p:x\in\mathcal{S}\}$, and the logarithm uses base $2$.
\end{lemma}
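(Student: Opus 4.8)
The plan is to show that the dyadic grid count $|\langle\mathcal{S}\rangle_{2^q}|$ agrees, up to multiplicative constants depending only on $n$ and up to a bounded rescaling of the resolution, with the $\ell_2$-ball covering number $N_S(\epsilon)$ that defines $\overline{\mr{dim}}_B(\mathcal{S})$; such factors are annihilated after dividing by $\log(1/\epsilon)$ and passing to the limit. For $\epsilon>0$ write $M_S(\epsilon)$ for the number of half-open cubes of the grid $\epsilon\mathbb{Z}^n$ that meet $\mathcal{S}$. Since $x\mapsto\langle x\rangle_p=\lfloor px\rfloor/p$ sends each point to the scaled lower corner of the resolution-$1/p$ cube containing it, distinct images correspond exactly to occupied cubes, so $|\langle\mathcal{S}\rangle_{2^q}|=M_S(2^{-q})$; with base-$2$ logarithms this gives $\frac{\log|\langle\mathcal{S}\rangle_{2^q}|}{q}=\frac{\log M_S(\epsilon)}{\log(1/\epsilon)}$ at $\epsilon=2^{-q}$. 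It therefore suffices to prove that $\limsup_{\epsilon\to0}\frac{\log M_S(\epsilon)}{\log(1/\epsilon)}=\overline{\mr{dim}}_B(\mathcal{S})$ and that the $\limsup$ over the continuum $\epsilon\to0$ may be replaced by the $\limsup$ along the dyadic subsequence $\epsilon=2^{-q}$.

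The core is a pair of geometric inequalities. Each grid cube of side $\epsilon$ lies in a ball of radius $\frac{\sqrt n}{2}\epsilon$ about its center, so choosing one such ball per occupied cube covers $\mathcal{S}$ and yields $N_S\!\big(\tfrac{\sqrt n}{2}\epsilon\big)\le M_S(\epsilon)$. Conversely a ball of radius $\epsilon$ is contained in a cube of side $2\epsilon$, which meets at most $3^n$ cubes of $\epsilon\mathbb{Z}^n$, so an optimal cover by $N_S(\epsilon)$ such balls forces $M_S(\epsilon)\le 3^n N_S(\epsilon)$. Hence $N_S\!\big(\tfrac{\sqrt n}{2}\epsilon\big)\le M_S(\epsilon)\le 3^n N_S(\epsilon)$. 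Taking base-$2$ logs, dividing by $\log(1/\epsilon)$, and letting $\epsilon\to0$, the term $\frac{n\log 3}{\log(1/\epsilon)}\to0$ and the replacement of $\epsilon$ by $\frac{\sqrt n}{2}\epsilon$ only shifts $\log(1/\epsilon)$ by the additive constant $\log\frac{\sqrt n}{2}$, negligible in the quotient; the sandwich then forces $\limsup_{\epsilon\to0}\frac{\log M_S(\epsilon)}{\log(1/\epsilon)}=\limsup_{\epsilon\to0}\frac{\log N_S(\epsilon)}{\log(1/\epsilon)}=\overline{\mr{dim}}_B(\mathcal{S})$.

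Finally I would justify the dyadic reduction. Because $N_S(\cdot)$ is non-increasing in the radius, for $\epsilon\in(2^{-(q+1)},2^{-q}]$ it is squeezed between $N_S(2^{-q})$ and $N_S(2^{-(q+1)})$ while $\log(1/\epsilon)\in[q\log2,(q+1)\log2)$; the resulting two-sided bound shows the continuous $\limsup$ equals its value along $\epsilon=2^{-q}$. Combined with the identity $|\langle\mathcal{S}\rangle_{2^q}|=M_S(2^{-q})$ this establishes \eqref{Eqn:Min_def_equiv}. There is no deep difficulty here: the entire content is bookkeeping of the $n$-dependent constants and the bounded change of resolution, so the step requiring the most care is verifying that neither these constants nor the passage to the dyadic subsequence perturbs the $\limsup$, the latter resting on the monotonicity of $N_S$.
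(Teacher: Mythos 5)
Your argument is correct and is the standard ``equivalent definitions of box-counting dimension'' computation; the paper itself does not prove this lemma but imports it from Wu--Verd\'u, so there is no in-paper proof to compare against. The identification $|\langle\mathcal{S}\rangle_{2^q}|=M_S(2^{-q})$, the two-sided bound between cube counts and ball covering numbers, and the dyadic-subsequence reduction via monotonicity of $N_S$ are all sound. One small point to tighten: the paper's definition of $N_S(\epsilon)$ requires the ball centers to lie in $\mathcal{S}$, whereas the center of an occupied grid cube need not. To repair the lower bound, pick an arbitrary point of $\mathcal{S}$ inside each occupied cube and center the ball there, enlarging the radius from $\tfrac{\sqrt n}{2}\epsilon$ to $\sqrt n\,\epsilon$ (the cube's diameter); this only changes the $n$-dependent constant, which is absorbed in the limit exactly as in your argument, so the sandwich and the conclusion are unaffected.
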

Lemma \ref{Lem:aux_lem3} shows that in Euclidean spaces, we could replace $\epsilon$-balls by mesh cubes in defining covering number for Minkowski dimension. The similar forms of \eqref{Eqn:Min_def_equiv} and Definition \ref{Def:Renyi_ID} suggest the close relationship between Minkowski dimension and information dimension. Roughly speaking, Minkowski dimension counts the number of small pieces needed to cover the set while the information dimension also takes into account the probability of each piece and replaces the $\log N_S(\epsilon)$ term in \eqref{Eqn:Mink} by an entropy term.

\subsection{An Auxiliary Lemma}\label{Sec:converse_aux}
We introduce a few definitions. First, recall the definition
\[
\mathcal{Q}_f\Mydef \{y: f^{-1}(y) \text{ contains an interval}\},
\]
where $f^{-1}(y)\Mydef\{z: f(z)=y\}$. We assumed $\mathcal{Q}_f$ to be a finite set. For $\bm{y}\in\mathbb{R}^m$, let
\BE\label{Eqn:Spt_def}
\text{Spt}(\bm{y}) \Mydef  \{i=1,\ldots,m: \ y_i\in\mathbb{R}\backslash \mathcal{Q}_f\}
\EE
be a kind of generalized support of $\bm{y}$ \cite{Wu10} (i.e., locations of the components of $\bm{y}$ that do not fall into the ``flat'' sections of $f$).

For convenience, we introduce the following definitions:
\BE\label{Eqn:def_XY_ar}
\begin{split}
\mathcal{A}_{\alpha}& \Mydef \left\{ \bm{s}\in\mathbb{R}^n:\bm{y}=f(\bm{A}(\bm{s})),\frac{|\text{Spt}(\bm{y})|}{m} \le \alpha\right\},\\
 \mathcal{B}_{\alpha} &\Mydef \left\{ \bm{y}\in\mathbb{R}^m:\frac{|\text{Spt}(\bm{y})|}{m} \le \alpha\right\},\\
\mathcal{A}_{r} &\Mydef \left\{ \bm{s}\in\mathbb{R}^n:\bm{y}=f(\bm{A}(\bm{s})),\|\bm{y}\| \le r\right\},\\
\mathcal{B}_{r}& \Mydef \left\{ \bm{y}\in\mathbb{R}^m:\|\bm{y}\| \le r\right\}.
\end{split}
\EE
Further, let $\mathcal{A}$ and $\mathcal{B}$ be the set of signals and measurements that can be perfectly reconstructed under decoder $g$. More specifically,
\[
\mathcal{A}\Mydef \left\{ \bm{s}\in\mathbb{R}^n: g(f(\bm{As}))=\bm{s} \right\}\quad\text{and}\quad \mathcal{B}\Mydef \left\{ f(\bm{As}):\bm{s}\in\mathcal{A} \right\}.
\]
Clearly, the composite function $f\circ \bm{A}$ is invertible (with $g$ being the inverse function) if we restrict its domain and co-domain to $\mathcal{A}$ and $\mathcal{B}$ respectively. With the above definitions, we have
\[
\begin{split}
\mathcal{B}\cap\mathcal{B}_{\alpha}\cap \mathcal{B}_r=\left\{ \bm{y}:\bm{y}\in\mathcal{B},\ \frac{|\text{Spt}(\bm{y})|}{m} \le \alpha,\ \|\bm{y}\|\le r \right\},
\end{split}
\]
and
\[
\begin{split}
&g\left( \mathcal{B}\cap\mathcal{B}_{\alpha}\cap \mathcal{B}_r \right) \\
&=\left\{ g(\bm{y}):\bm{y}\in\mathcal{B},\ \frac{|\text{Spt}(\bm{y})|}{m} \le \alpha,\ \|\bm{y}\|\le r \right\}\\
&= \left\{ \bm{s}:\bm{s}\in\mathcal{A},\ \frac{|\text{Spt}(g^{-1}(\bm{s}))|}{m} \le \alpha,\ \|g^{-1}(\bm{s})\|\le r \right\}\\
&= \left\{ \bm{s}:\bm{s}\in\mathcal{A},\ \frac{|\text{Spt}(f(\bm{As}))|}{m} \le \alpha,\ \|f(\bm{As})\|\le r \right\}\\
&=\mathcal{A}\cap \mathcal{A}_\alpha\cap\mathcal{A}_r.
\end{split}
\]

Lemma \ref{Lem:converse}, which is a variation of \cite[Theorem 5]{Wu12}, is key to our proof of Theorem \ref{Lem:converse2}. Notice that Lemma \ref{Lem:converse} is a non-asymptotic result. Also, the radius $r$ of the boundedness constraint does not appear in \eqref{Eqn:converse_finite}.

\begin{lemma}\label{Lem:converse}
Let $P_X$ be an arbitrary absolutely continuous distribution with respect to the Lebesgue measure and $\bm{x}\sim \prod_{i=1}^n P_X(x_i)$ a random vector. Suppose that for some $\alpha\in(0,1]$, $r>0$ and $\epsilon\in(0,1)$, there exists a matrix $\bm{A}\in\mathbb{R}^{m\times n}$ and a Lipschitz continuous decoder $g:\mathcal{Y}^m\mapsto \mathbb{R}^n$ such that
\BE\label{Eqn:Error_Prob}
\mathbb{P}\big\{\bm{x}\in \mathcal{A}\cap\mathcal{A}_\alpha\cap\mathcal{A}_r \big\}\ge1-\epsilon,
\EE
where the probability is taken over $\bm{x}$. 
Then, necessarily we have
\BE\label{Eqn:converse_finite}
\frac{m}{n} \ge \frac{1-\epsilon}{\alpha}.
\EE
\end{lemma}

\begin{proof}
Our proof follows from the following chain of inequalities:
\[
\begin{split}
\alpha\cdot m &\ \overset{(a)}{\ge} \ \overline{\mr{dim}}_B(\mathcal{B}_{\alpha}\cap\mathcal{B}_r)\\
&\ \ge\  \overline{\mr{dim}}_B(\mathcal{B}_{\alpha}\cap\mathcal{B}_r\cap \mathcal{B})\\
&\ \overset{(b)}{\ge}\ \overline{\mr{dim}}_B(g(\mathcal{B}_{\alpha}\cap\mathcal{B}_r\cap \mathcal{B}))\\
&\ =\ \overline{\mr{dim}}_B(\mathcal{A}_{\alpha}\cap\mathcal{A}_r\cap \mathcal{A})\\
&\ \overset{(c)}{\ge}\  \overline{\mr{dim}}_B^{\epsilon}(P_{\bm{x}})\\
&\ \overset{(d)}{\ge}\ \bar{d}(\bm{x})-\epsilon n\\
&\ \overset{(e)}{=}\ (1-\epsilon) n
\end{split}
\]
where step (b) is from the fact that Minkowski dimension does not increase under Lipschitz mapping \cite[Proposition 2.5]{falconer2004fractal}, and step (c) is from the definition of $\epsilon$-Minkowski dimension (see Definition \ref{Def:epsilon_Min}) and $\mathbb{P}\left\{ \bm{x}\in \mathcal{A}\cap\mathcal{A}_{\alpha}\cap \mathcal{A}_r \right\}\ge 1-\epsilon$, step (d) is proved in \cite[Theorem 5]{Wu12}, and step (e) is from the fact that $d(\bm{x})=n\cdot d(X)=n$ when $\bm{x}\sim\prod_{i}^n P_X(x_i)$ and $P_X$ is absolutely continuous.

It remains to prove step (a).
Now, we use Lemma \ref{Lem:aux_lem3}:
\[
\overline{\mr{dim}}_B(\mathcal{B}_{\alpha}\cap\mathcal{B}_r)=\limsup_{M\to\infty} \frac{\log\big| \langle \mathcal{B}_{\alpha}\cap\mathcal{B}_r\rangle_M \big|}{\log M},
\]
where $\langle \mathcal{B}_{\alpha}\cap\mathcal{B}_r \rangle_M$ is a set obtained by applying the discretization operator $\langle y\rangle_M=\floor{My}/M$ (which has $2rM$ quantization levels in $y\in[-r,r]$) component-wisely to all the elements in $\mathcal{B}_{\alpha}\cap\mathcal{B}_r$. From our definition of $\mathcal{B}_{\alpha}\cap\mathcal{B}_r$, we have
\begin{equation}\label{Eqn:B_bound}
\begin{split}
&\big| \langle \mathcal{B}_{\alpha}\cap\mathcal{B}_r  \rangle_M \big| \\
&= \left| \left \{ \langle\bm{y}\rangle_M :  \bm{y}\in\mathbb{R}^m,\ \frac{|\text{Spt}(\bm{y})|}{m}\le \alpha,\  { \|\bm{y}\|\le r }\right \}\right|\\
&\le \left| \left \{ \langle\bm{y}\rangle_M :  \frac{|\text{Spt}(\bm{y})|}{m}\le \alpha,\  { \bm{y}\in[-r,r]^m }\right \}\right|\\
&\overset{(a)}{\le} \sum_{i=0}^{\lfloor\alpha m\rfloor} {m\choose i} (2rM)^{i}|\mathcal{Q}_f|^{m-i} \\
&\le \sum_{i=0}^{\lfloor\alpha m\rfloor} {m\choose i} (2rM)^{\lfloor\alpha m\rfloor}|\mathcal{Q}_f|^{m-\lfloor\alpha m\rfloor} \quad \text{for } M>|\mathcal{Q}_f|/(2r)\\
&\le 2^m (2rM)^{\lfloor\alpha m\rfloor}|\mathcal{Q}_f|^{m-\lfloor \alpha m\rfloor},
\end{split}
\end{equation}
where $\mathcal{Q}_f\Mydef \{y: f^{-1}(y) \text{ contains an interval}\}$, and we assumed $|\mathcal{Q}_f|<\infty$. Here are the detailed derivations for step (a). Denote
\[
\begin{split}
\mathcal{C}&:=\left \{ \langle\bm{y}\rangle_M :  \frac{|\text{Spt}(\bm{y})|}{m}\le \alpha,\  { \bm{y}\in[-r,r]^m }\right \}\\
&=\bigcup_{i=0}^{\lfloor \alpha m\rfloor}\bigcup_{S\subseteq \{1,\ldots,m\},|S|=i}\mathcal{C}_{i,S}\\
\end{split}
\]
where 
\[
\mathcal{C}_{i,S}:=\left \{ \langle\bm{y}\rangle_M :  \bm{y}_{S^c}\in \mathcal{Q}_f^{|S^c|},\  { \bm{y}\in[-r,r]^m }\right \},\quad |S|=i.
\]
In the above display, $\bm{y}_{S^c}$ denotes the vector formed by the entries of $\bm{y}$ in the index set $S^c$ (complement of $S$). Now, consider an arbitrary element $z\in \mathcal{C}_{i,S}$. Recall that $\langle\cdot\rangle_M$ denotes a quantization operation and the total number of possible quantized values in the interval $[-r,r]$ is $2rM$. Hence, for any $j\in[m]$, $z_j$ can take at most $2rM$ different values, due to the constraint $\bm{y}\in[-r,r]^m$. If $j\in S$, we know additionally that $z_j\in \langle \mathcal{Q}_f\rangle_M$, which can take at most $|\mathcal{Q}_f|$ different values. Hence, the cardinality of $\mathcal{C}_{i,S}$ can be upper bounded as
\begin{equation}\label{Eqn:B_iS12}
|\mathcal{C}_{i,S}|\le (2r M)^{|S|}|\mathcal{Q}_f|^{m-|S|}=(2r M)^i|\mathcal{Q}_f|^{m-i}.
\end{equation}
Hence,
\[
\begin{split}
|\mathcal{C}| &\le \sum_{i=0}^{\lfloor \alpha m\rfloor}\sum_{S\subseteq \{1,\ldots,m\},|S|=i}|\mathcal{C}_{i,C}| \\
&\le \sum_{i=0}^{\lfloor\alpha m\rfloor} {m \choose i}(2r M)^i|\mathcal{Q}_f|^{m-i},
\end{split}
\]
which leads to Step (a) of \eqref{Eqn:B_bound}.

As a consequence of \eqref{Eqn:B_bound}, we have
\[
\limsup_{M\to\infty} \frac{\log\big| \langle \mathcal{B}_{\alpha}\cap\mathcal{B}_r \rangle_M \big|}{\log M}\le \lfloor\alpha\cdot m\rfloor\le \alpha m.
\]
Combining all the above arguments yields
\[
\alpha\cdot m \ge (1-\epsilon) n,
\]
and hence the claimed lower bound on $m/n$.
\end{proof}\vspace{5pt}

In view of Lemma \ref{Lem:converse}, we can now prove the converse result in Theorem \ref{Lem:converse2}.

\subsection{Proof of Theorem \ref{Lem:converse2}}\label{Sec:converse_main}

Our proof relies on the following lemma, whose proof is postponed in Section \ref{Sec:proof_Lemma10}.

\begin{lemma}\label{Lem:converse_converge}
Let $\bm{z}\Mydef \bm{Ax}$. Under Assumptions (A.1)-(A.3), the following holds almost surely as $m,n\to\infty$ and $m/n\to\delta \in(1,\infty)$:
\BE\label{Eqn:App_edf_def}
\frac{1}{m}\sum_{i=1}^m \mathbb{I}\left(z_i\le t\right)\overset{a.s.}{\longrightarrow} \Phi(t), \quad \forall t\in\mathbb{R}.
\EE
\end{lemma}

From Lemma \ref{Lem:converse_converge}, as $m,n\to\infty$ with $m,n\to\delta\in(1,\infty)$, the empirical distribution of $\bm{z}=\bm{Ax}$ converges to standard Gaussian in probability
\[
\frac{1}{m}\sum_{i=1}^m \mathbb{I}\left(z_i\le t\right)\overset{P}{\to} \Phi(t), \quad \forall t\in\mathbb{R}.
\]
Consequently, 
\[
\begin{split}
\frac{| \{i=1,\ldots,m:\,z_i\in\mathbb{R}\backslash f^{-1}(\mathcal{Q}_f) | }{m} &\overset{P}{\to} 1-\mathbb{P}(Z\in f^{-1}(\mathcal{Q}_f))\\
&=d(Y).
\end{split}
\]
where $Z\sim\mathcal{N}(0,1)$, $\mathcal{Q}_f\Mydef\{y:f^{-1}(y) \text{ contains an interval}\}$ and the identity $1-\mathbb{P}(Z\in f^{-1}(\mathcal{Q}_f))=d(Y)$ is due to Lemma \ref{Lem:ID_property}. This is equivalent to (see \ref{Eqn:Spt_def})
\BE\label{Eqn:converse_converge_1}
\begin{split}
\frac{|\mr{Spt}(f(\bm{Ax}))|}{m}&=\frac{|\mr{Spt}(f(\bm{z}))|}{m}\\
&=\frac{| \{i=1,\ldots,m:\,z_i\in\mathbb{R}\backslash f^{-1}(\mathcal{Q}_f) | }{m}\\
&\overset{P}{\to} d(Y).
\end{split}
\EE
Hence, for any $\kappa>0$,
\[
\lim_{n\to\infty}\mathbb{P}\left\{ \left| \frac{|\mr{Spt}(f(\bm{Ax}))|}{m} - d(Y) \right|<\kappa \right\}=1.
\]
It is understood that in the above limit $m$ and $n$ tend to infinity with $m/n\to\delta$.
In view of the definition of $\mathcal{A}_\alpha$ in \eqref{Eqn:def_XY_ar}, we have
\[
\lim_{n\to\infty}\mathbb{P}\left\{\bm{x}\in \mathcal{A}_\alpha\right\} =1,\quad \text{for }\alpha=d(Y)+\kappa.
\]
Hence, for any $\epsilon>0$ and $\alpha=d(Y)+\kappa$, there exists sufficiently large $n,m$ such that
\[
\mathbb{P}\left\{\bm{x}\in \mathcal{A}_\alpha\right\} \ge 1-\frac{\epsilon}{3}.
\]
Further, since $\lim_{r\to\infty}\mathbb{P}\left\{\bm{x}\in \mathcal{A}_\alpha \cap\mathcal{A}_r\right\} =\mathbb{P}\left\{\bm{x}\in \mathcal{A}_\alpha \right\} $, there exists sufficiently large $r$ such that
\BE\label{Eqn:converse_alpha_r_prob}
\mathbb{P}\left\{\bm{x}\in \mathcal{A}_\alpha \cap\mathcal{A}_r\right\} \ge \mathbb{P}\left\{\bm{x}\in \mathcal{A}_\alpha \right\} -\frac{\epsilon}{3}\ge 1-\frac{2\epsilon}{3}.
\EE

Suppose that the decoding error probability does not exceed $\epsilon/3$, namely,
\BE
\mathbb{P}\{\bm{x}\in\mathcal{A}\}\ge 1-\frac{\epsilon}{3}.
\EE
For $\alpha=d(Y)+\kappa$, and sufficiently large $r$ and $m,n$, we have
\[
\begin{split}
\mathbb{P}\left\{\bm{x}\in\mathcal{A}\cap\mathcal{A}_\alpha\cap\mathcal{A}_r  \right\} & \ge\mathbb{P}\left\{\bm{x}\in\mathcal{A}  \right\}+\mathbb{P}\left\{\bm{x}\in \mathcal{A}_\alpha \cap\mathcal{A}_r\right\}-1\\
&\ge 1-\epsilon,
\end{split}
\]
where the second step is form \eqref{Eqn:converse_alpha_r_prob}. Now, using Lemma \ref{Lem:converse}, we must have
\[
\frac{m}{n}\ge\frac{1-\epsilon/3}{\alpha}=\frac{1-\epsilon/3}{d(Y)+\kappa}.
\]
Since $\kappa>0$ is arbitrary, $m/n\ge\frac{1-\epsilon/3}{d(Y)}$. Hence, a necessary condition for achieving vanishing decoding error as $m,n\to\infty$ with $m/n\to\delta$ is
\[
\delta\ge\frac{1}{d(Y)}.
\]
This finishes the proof of Theorem \ref{Lem:converse2}.
\subsection{Proof of Lemma \ref{Lem:converse_converge}}\label{Sec:proof_Lemma10}

Let the SVD of $\bm{A}$ be $\bm{A}=\bm{U\Sigma V}^\UT$, where $\bm{U}\in\mathbb{R}^{m\times m}$, $\bm{\Sigma}\in\mathbb{R}^{m\times n}$ and $\bm{V}\in\mathbb{R}^{n\times n}$. By rotational invariance of $\bm{U}$ and independence between $\bm{U}$ and $\bm{\Sigma V}^\UT\bm{x}$, 
\[
\begin{split}
\bm{Ax}&=\bm{U\Sigma V}^\UT\bm{x}\\
&\overset{d}{=}\|\bm{\Sigma V}^\UT\bm{x}\|\cdot \bm{Ue}_1\\
&=\|\bm{\Sigma V}^\UT\bm{x}\|\cdot \bm{u}_1\\
&\overset{d}{=}{\frac{\|\bm{\Sigma V}^\UT\bm{x}\|}{\|\bm{g}_1\|}}\cdot \bm{g}_1,\quad \bm{g}_1\sim\mathcal{N}(\bm{0},\bm{I}_m),
\end{split}
\]
where $\bm{u}_1$ denotes the first column of $\bm{U}$ and $\overset{d}{=}$ means that the random vectors on the left and right hand sides have the same distribution. To show the desired weak convergence, it suffices to prove
\[
\alpha\Mydef \frac{\|\bm{\Sigma V}^\UT\bm{x}\|}{\|\bm{g}_1\|}\overset{a.s.}{\longrightarrow}1.
\]
To see this, we note that weak convergence is equivalent to convergence under bounded Lipschitz continuous test function $\phi$ \cite[Lemma 2.2]{vaart_1998}:
\[
\frac{1}{m}\sum_{j=1}^m \phi(\alpha\cdot g_{1j})\overset{a.s.}{\longrightarrow}\mathbb{E}[\phi(G_1)],\quad G_1\sim\mathcal{N}(0,1).
\]
On the other hand,
\[
\begin{split}
&\Bigg|\frac{1}{m}\sum_{j=1}^m \phi(\alpha g_{1j})-\mathbb{E}[\phi(G_1)]\Bigg|\\
 &=\Bigg|\frac{1}{m}\sum_{j=1}^m \phi(\alpha g_{1j})-\frac{1}{m}\sum_{j=1}^m \phi( g_{1j})+\frac{1}{m}\sum_{j=1}^m \phi( g_{1j})-\mathbb{E}[\phi(G_1)]\Bigg|\\
&\le \frac{1}{m}\sum_{j=1}^m |\phi(\alpha g_{1j})-\phi(g_{1j})|+\left|\frac{1}{m}\sum_{j=1}^m |\phi( g_{1j})-\mathbb{E}[\phi(G_1)]\right|\\
&\le L_{\text{lip}}|\alpha-1| \frac{1}{m}\sum_{j=1}^m |g_{1j}|+\left|\frac{1}{m}\sum_{j=1}^m |\phi( g_{1j})-\mathbb{E}[\phi(G_1)]\right|
\end{split}
\]
where the last step follows from the Lipschitz continuity of $\phi$ and $L_{\text{lip}}$ denotes the Lipschitz constant. Clearly, the desired convergence holds if $\alpha\overset{a.s.}{\longrightarrow}1$.

From the above discussions, we just need to prove
\[
 \alpha:=\frac{\|\bm{\Sigma V}^\UT\bm{x}\|}{\|\bm{g}_1\|}\overset{a.s.}{\longrightarrow}1.
\]
Since $\bm{V}$ is Haar distributed, we have $\bm{V}^\UT\bm{x}\overset{d}{=}\|\bm{x}\|/\|\bm{g}_2\|\cdot \bm{g}_2$ where $\bm{g}_2\sim\mathcal{N}(\mathbf{0},\bm{I}_n)$. Hence,
\[
\begin{split}
 \frac{\|\bm{\Sigma V}^\UT\bm{x}\|}{\|\bm{g}_1\|}\overset{d}{=}\frac{\|\bm{x}\|}{\|\bm{g}_1\|\|\bm{g}_2\|} \cdot\|\bm{\Sigma g}_2\|.
\end{split}
\]
As $\bm{x}\in\mathbb{R}^n$, $\bm{g}_1\in\mathbb{R}^m$ and $\bm{g}_2\in\mathbb{R}^n$ all have i.i.d. entries with unit variance, $\sqrt{m}\|\bm{x}\|/(\|\bm{g}_1\|\|\bm{g}_2\|) \overset{a.s.}{\longrightarrow}1$. Hence, it remains to prove $m^{-1/2}\|\bm{\Sigma g}_2\|\overset{a.s.}{\longrightarrow}1$. To this end, we shall prove
\BE\label{Eqn:Wasserstein_final}
\frac{1}{n} \sum_{i=1}^n \sigma_i^2\cdot g_{1i}^2=\frac{1}{n} \sum_{i=1}^n \lambda_i\cdot g_{1i}^2\overset{a.s.}{\longrightarrow}\delta,
\EE
which, together with the continuous mapping theorem, implies the desired result. 

Similar to \cite[Corollary 1]{KeigoV2}, we use Lyons' strong law of large numbers \cite{Lyons1988StrongLO} to prove \eqref{Eqn:Wasserstein_final}. We first show that the following holds conditional on $\{\lambda_i\}$:
\BE\label{Eqn:SLLN_conditional}
\lim_{n\to\infty}\frac{1}{n} \sum_{i=1}^n \lambda_i\cdot g_{1i}^2-\frac{1}{n} \sum_{i=1}^n \lambda_i\overset{a.s.}{\longrightarrow}0.
\EE
From \cite[Theorem 6]{Lyons1988StrongLO}, it suffices to verify 
\BE\label{Eqn:summaribility}
\sum_{n=1}^\infty\frac{1}{n^2}\sqrt{\text{Var}\Big(\sum_{i=1}^n \lambda_i g_{1i}^2\Big)} <\infty.
\EE
Since $\{g_{1i}\}$ are i.i.d. standard Gaussian,
\[
\begin{split}
\text{Var}\Big(\sum_{i=1}^n \lambda_i g_{1i}^2\Big) &= \sum_{i=1}^n \text{Var}(\lambda_i g_{1i}^2) = 2\sum_{i=1}^n \lambda_i^2.
\end{split}
\]
We have assumed $\frac{1}{n}\sum_{i=1}^n \lambda_i^2\overset{a.s.}{\longrightarrow}\mathbb{E}[\Lambda^2]<\infty$. Hence, for any $C>\mathbb{E}[\Lambda^2]$, the following holds for all sufficiently large $n$
\[
\text{Var}\Big(\sum_{i=1}^n \lambda_i g_{1i}^2\Big) < 2nC.
\]
Hence, \eqref{Eqn:summaribility} is satisfied and so \eqref{Eqn:SLLN_conditional} holds. On the other hand, from Assumption (A.2), weak convergence together with $\frac{1}{n}\sum_{i=1}^n \lambda_i^2\overset{a.s.}{\longrightarrow}\mathbb{E}[\Lambda^2]$ implies convergence in Wasserstein distance of order two \cite{feng2021unifying}. This further implies convergence in Wasserstein distance of order one \cite{feng2021unifying}, and so $\frac{1}{n}\sum_{i=1}^n \lambda_i\overset{a.s.}{\longrightarrow}\mathbb{E}[\Lambda]=\delta$. Putting things together proves \eqref{Eqn:Wasserstein_final}.

\section{Proof of Theorem \ref{The:noise_opt}}\label{App:noise_opt}

We begin with an auxiliary lemma and then provide the main proof in Section \ref{Sec:main_lemma2}.

\subsection{An auxiliary lemma}
\begin{lemma}\label{Lem:noise_sen_aux1}
Let $\bm{a}$ be the first row of $\bm{A}$, and $z=(\bm{Ax})_1$, $y=f(z+w)$, where $w\sim\mathcal{N}(0,\sigma_w^2)$ and $\bm{w}\Perp (\bm{A,\bm{x}})$. We have
\[
\limsup_{\sigma_w\to0}\frac{I(z;y|\bm{a})}{\log(\sigma_w^{-2})}\le1-\mathbb{E}\left[ \mathbb{P}\{z_a\in f^{-1}(\mathcal{Q}_f)\} \right],
\]
where $z_a\sim\mathcal{N}(0,\|\bm{a}\|^2)$.
\end{lemma}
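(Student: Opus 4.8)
The plan is to condition on the row $\bm a$, reduce to a scalar problem, and establish the stronger \emph{additive} bound
\[
I(z;y\mid\bm a)\le \mathbb{P}\{z_a\in\mathcal{M}\}\cdot\tfrac12\log\tfrac{1}{\sigma_w^2}+O(1)\qquad(\sigma_w\to0),
\]
where $\mathcal{M}\Mydef\mathbb{R}\setminus f^{-1}(\mathcal{Q}_f)$ is the union of the non-flat (locally monotone) pieces of $f$ and the $O(1)$ term is uniform in $\sigma_w$. Since $\bm x\sim\mathcal{N}(\mathbf{0},\bm I)$, for fixed $\bm a$ we have exactly $z_a=\bm a^{\UT}\bm x\sim\mathcal{N}(0,\|\bm a\|^2)$, so it suffices to bound the scalar quantity $I(z_a;y)$ with $y=f(z_a+w)$ and then average over $\bm a$. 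Dividing this additive bound by $\tfrac12\log(1/\sigma_w^2)$ yields the tight ratio $1-\mathbb{P}\{z_a\in f^{-1}(\mathcal{Q}_f)\}$ used in \eqref{Eqn:rate_distort_bound3}, and a fortiori the form stated with denominator $\log(\sigma_w^{-2})$, so the normalization is harmless.

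For the scalar bound I would use a genie decomposition that separates the flat and monotone behaviour of $f$. Let $S\Mydef\mathbf{1}\{y\in\mathcal{Q}_f\}$; since the flat output values are distinguishable, $S$ is a deterministic function of $y$ (indeed $S=\mathbf{1}\{z_a+w\in\mathcal{F}\}$ with $\mathcal{F}\Mydef f^{-1}(\mathcal{Q}_f)$), so
\[
I(z_a;y)=I(z_a;S)+\mathbb{P}(S{=}1)\,I(z_a;y\mid S{=}1)+\mathbb{P}(S{=}0)\,I(z_a;y\mid S{=}0),
\]
with $I(z_a;S)\le\log 2$. On the event $S=1$ the observation $y$ takes values in the finite set $\mathcal{Q}_f$, so $I(z_a;y\mid S{=}1)\le H(y\mid S{=}1)\le\log|\mathcal{Q}_f|=O(1)$; thus the flat part contributes only $O(1)$, and it remains to control $\mathbb{P}(S{=}0)\,I(z_a;y\mid S{=}0)$.

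The technical core is the monotone part. On $S=0$ we have $\tilde z\Mydef z_a+w\in\mathcal{M}$; applying the data-processing inequality ($y=f(\tilde z)$) and introducing the finitely many monotone-piece indices as a second genie reduces $I(z_a;y\mid S{=}0)$ to $\max_k I(z_a;\tilde z\mid \tilde z\in\mathcal{I}_k)+O(1)$ over the monotone pieces $\mathcal{I}_k$, on each of which $f$ is invertible and hence loses no information. Writing $I(z_a;\tilde z\mid \tilde z\in\mathcal{I}_k)=h(\tilde z\mid \tilde z\in\mathcal{I}_k)-h(\tilde z\mid z_a,\tilde z\in\mathcal{I}_k)$ with $h$ the differential entropy, the first term is $O(1)$ (a variable of variance at most $\|\bm a\|^2+\sigma_w^2$ has bounded differential entropy, even on an unbounded piece), while the second term equals $\tfrac12\log(2\pi e\,\sigma_w^2)+O(1)$ because, away from the piece boundaries, $\tilde z\mid z_a$ is essentially an untruncated $\mathcal{N}(z_a,\sigma_w^2)$. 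Hence $I(z_a;y\mid S{=}0)\le\tfrac12\log(1/\sigma_w^2)+O(1)$, and multiplying by $\mathbb{P}(S{=}0)=\mathbb{P}\{\tilde z\in\mathcal{M}\}$, which tends to $\mathbb{P}\{z_a\in\mathcal{M}\}$ as $\sigma_w\to0$ (the boundary $\partial\mathcal{M}$ is a finite set), gives the additive scalar bound.

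To pass from the scalar statement to $I(z;y\mid\bm a)=\mathbb{E}_{\bm a}[I(z_a;y)]$ I would invoke reverse Fatou, which is legitimate because the ratio is dominated by the crude bound $I(z_a;y)\le I(z_a;\tilde z)=\tfrac12\log(1+\|\bm a\|^2/\sigma_w^2)$ together with the uniform boundedness of $\|\bm a\|$ (its limit is $1$ and the spectrum has compact support). The $\bm a$-average of $\mathbb{P}\{z_a\in\mathcal{M}\}=1-\mathbb{P}\{z_a\in f^{-1}(\mathcal{Q}_f)\}$ reproduces exactly the right-hand side. The hardest part will be making the $O(1)$ corrections genuinely uniform near the interfaces between flat and monotone pieces: for $z_a$ within $O(\sigma_w)$ of a boundary the noisy $\tilde z$ can straddle two pieces and the truncated-Gaussian entropy $h(\tilde z\mid z_a,\tilde z\in\mathcal{I}_k)$ departs from $\tfrac12\log(2\pi e\sigma_w^2)$. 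One controls this by noting that these boundary neighbourhoods carry probability $O(\sigma_w)$, so even the worst-case per-sample information $O(\log(1/\sigma_w^2))$ they generate is $O(\sigma_w\log(1/\sigma_w^2))=o(1)$ and is absorbed into the $O(1)$ term.
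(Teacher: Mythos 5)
Your proposal is correct and shares the paper's main skeleton: the same genie decomposition (your $S$ is the paper's $Q=\mathbb{I}(p_a\in f^{-1}(\mathcal{Q}_f))$, contributing $\le 1$ bit), the same $\log|\mathcal{Q}_f|$ bound on the flat branch, the same reverse-Fatou passage from the scalar bound to the $\bm{a}$-average, and the same observation that $\mathbb{P}(S{=}0)\to\mathbb{P}\{z_a\in\mathbb{R}\setminus f^{-1}(\mathcal{Q}_f)\}$. Where you genuinely diverge is the core step bounding $I(z_a;y\mid S{=}0)$. You split the non-flat region into monotone pieces, invoke invertibility of $f$ on each, and compute $I=h(\tilde z\mid\cdot)-h(\tilde z\mid z_a,\cdot)$ with truncated-Gaussian differential entropies; the paper instead runs the channel backwards, writing $z_a=\alpha p_a+\beta N$ with $N\sim\mathcal{N}(0,1)$ independent of $p_a$ (hence still independent after conditioning on $Q=0$, since $Q$ is a function of $p_a$), applies data processing to $f(\tilde p_a)\to\tilde p_a\to\tilde z_a$, and bounds $I(\tilde z_a;\tilde p_a)\le\frac{1}{2}\log(1+\alpha^2\mathbb{E}[\tilde p_a^2]/\beta^2)$ with $\beta^2=\Theta(\sigma_w^2)$. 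The paper's reverse-channel trick buys exactly what you identify as your hardest obstacle: it never has to control $h(\tilde z\mid z_a,\tilde z\in\mathcal{I}_k)$ for a conditioned, truncated Gaussian, so the boundary/interface analysis disappears. Your route is workable but the loose end is slightly broader than you state: the problematic $z_a$ are not only those within $O(\sigma_w)$ of a piece boundary but any $z_a$ far from $\mathcal{I}_k$, for which the truncated law of $\tilde z$ concentrates in an $O(\sigma_w^2/d)$-width sliver and its differential entropy drops to roughly $\log\sigma_w^2$ rather than $\frac{1}{2}\log\sigma_w^2$; this is rescued only because such $z_a$ receive exponentially small weight under $P_{z_a\mid\tilde z\in\mathcal{I}_k}$, so you should make the weighting, not just the $O(\sigma_w)$ boundary measure, carry the argument. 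You also correctly flag the $\frac{1}{2}\log$ versus $\log$ normalization mismatch between the lemma statement and its use, which the paper glosses over.
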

\begin{IEEEproof}
We use $z_{a}$ to denote a random variable with distribution $P_{z|\bm{a}}$,  $y_a=f(z_a+w)$ and $z_a\Perp w$. Note that $z=\bm{a}^\UT\bm{x}$, and so $z_{{a}}\sim \mathcal{N}(\mathbf{0},\|\bm{a}\|^2)$. Then, by the reverse Fatou lemma, we have
\[
\begin{split}
\limsup_{\sigma_w\to0}\frac{I(z;y|\bm{a})}{\frac{1}{2}\log\sigma_w^{-2}}&=\limsup_{\sigma_w\to0}\mathbb{E}\left[\frac{I(z_a;y_a)}{\frac{1}{2}\log\sigma_w^{-2}}\right]\\
&\le \mathbb{E}\left[\limsup_{\sigma_w\to0}\frac{I(z_a;y_a)}{\frac{1}{2}\log\sigma_w^{-2}}\right].
\end{split}
\]
In what follows, we prove
\BE\label{Eqn:STEP_2}
\limsup_{\sigma_w\to0}\frac{I(z_a;y_a)}{\frac{1}{2}\log\sigma_w^{-2}} \le \mathbb{P}\{ z_{a} \in\mathbb{R}\backslash \mathcal{Q}_f \},
\EE
where $z_a\in\mathcal{N}(0,\|\bm{a}\|^2)$. For convenience, define
\BE\label{Eqn:Z_sigma_def}
\begin{split}
p_{a} \Mydef z_{a}+w.
\end{split}
\EE
Using this notation, $y_{a}=f(p_{a})$. We introduce an auxiliary random variable
\[
Q\Mydef \mathbb{I}(p_{a} \in f^{-1}(\mathcal{Q}_f)).
\]
Then,
\BE\label{Eqn:STEP_2_main}
\begin{split}
&I(z_a;f(z_a+w)) \\
&\le I(z_a;f(z_a+w),Q)\\
&= I(z_a;Q) + I(z_a;f(z_a+w)|Q)\\
&= I(z_a;Q) +\mathbb{P}\{Q=1\}\cdot I(z_a;f(z_a+w)|Q=1)\\
&+\mathbb{P}\{Q=0\}\cdot I(z_a;f(z_a+w)|Q=0)\\
&\le 1 + \log(|\mathcal{Q}_f|)+\mathbb{P}\{Q=0\}\cdot I(z_a;f(z_a+w)|Q=0)
\end{split}
\EE
where the last inequality follows from the fact that $I(z_a;Q)\le 1$ for the binary random variable $Q$, and $I(z_a;f(z_a+w)|Q=1)\le \log(|\mathcal{Q}_f|)$ since $f(z_a+w)$ takes at most $|\mathcal{Q}_f|$ different values conditional on $Q=1$.

Since $z_a\in\mathcal{N}(0,\|\bm{a}\|^2)$ and $w\sim \mathcal{N}(0,\sigma_w^2)$, we can represent $p_a=z_a+w$ as
\[
z_a \overset{d}{=}\underbrace{\frac{\|\bm{a}\|^2}{\|\bm{a}\|^2+\sigma_w^2}}_{\alpha}\cdot p_a + \underbrace{\sqrt{\frac{\|\bm{a}\|^2\sigma_w^2}{\|\bm{a}\|^2+\sigma_w^2}}}_{\beta}\cdot N
\]
where $N$ is standard Gaussian and independent of $p_a$. Hence, $N$ is still independent of $p_a$ conditioned on $Q=0$. Hence, the conditional distribution of $(z_a,p_a)$ is characterized by
\[
\tilde{z}_a=\alpha\cdot \tilde{p}_a + \beta N,
\]
where $\tilde{p}_a \sim P_{p_a|Q=0}=P_{p_a|p_a\in\mathbb{R}\backslash f^{-1}(\mathcal{Q}_f)}$. Since $f(\tilde{p}_a)\to \tilde{p}_a\to z_a$ forms a Markov chain, by data processing inequality, we have
\[
\begin{split}
I(z_a;f(z_a+w)|Q=0)&=I(\tilde{z}_a;f(\tilde{p}_a))\le I(\tilde{z}_a;\tilde{p}_a)\\
&=\frac{1}{2}\log\left( 1 + \frac{\alpha^2}{\beta^2}\mathbb{E}[\tilde{p}_a^2] \right)\\
&=\frac{1}{2}\log\left(1+\frac{\|\bm{a}\|^2}{\|\bm{a}\|^2+\sigma_w^2} \frac{\mathbb{E}[\tilde{p}_a^2]}{\sigma_w^2}\right) .
\end{split}
\]
It is easy to show that $\mathbb{E}[\tilde{p}_a^2]$ converges to a positive constant as $\sigma_w\to0$, and
\BS \label{Eqn:STEP_2_main2}
\BE
\limsup_{\sigma_w\to0}\frac{I(z_a;f(z_a+w)|Q=0)}{\frac{1}{2}\log(\sigma_w^{-2})}\le 1,
\EE
and
\BE
\lim_{\sigma_w\to0}\mathbb{P}\{Q=0\}=\mathbb{P}\{z_a\in\mathbb{R}\backslash \mathcal{Q}_f\}.
\EE
\ES
Combining \eqref{Eqn:STEP_2_main} and \eqref{Eqn:STEP_2_main2} proves Lemma \ref{Lem:noise_sen_aux1}.
\end{IEEEproof}

\subsection{Main proof for Theorem \ref{The:noise_opt}}\label{Sec:main_lemma2}

The proof is analogous to \cite[Theorem 9]{Wu12}. Notice that we assumed $\bm{x}\sim\mathcal{N}(\mathbf{0},\bm{I})$ in Theorem \ref{The:noise_opt}.

Let $R_X(D)$ be the rate distortion functions of $P_X$ with mean square error distortion:
\[
\begin{split}
R_X(D)=\underset{\mathbb{E}[d(X,\hat{S})]\le D,X\sim P_X}{\inf}\ I(X;\hat{S}),
\end{split}
\]
where $d(X,\hat{S}):=(X-\hat{S})^2$, $I(X;\hat{S})$ denotes the mutual information between $X$ and $\hat{S}$, and the infimum in the above definition is over the transition probability $P_{\hat{S}|X}$ subject to average distortion constraint. Notice that $R_X(D)$ can be equivalently defined as \cite[Theorem 9.6.1]{gallager1968information}
\[
R_X(D)=\underset{\mathbb{E}[d_n(\bm{x},\hat{\bm{s}})]\le D,\{x_i\}\overset{i.i.d.}{\sim} P_X}{\inf}\ \frac{1}{n}I(\bm{x};\hat{\bm{s}}),
\]
where $d_n(\bm{x},\hat{\bm{s}}):=\frac{1}{n}\sum_{i=1}^n (x_i-\hat{s}_i)^2$.

Consider the MMSE estimator $\hat{\bm{x}}=\mathbb{E}[\bm{x}|\bm{y},\bm{A}]$ with mean square distortion
\[
D_n(\sigma_w)\Mydef \frac{1}{n}\mmse(\bm{x}|\bm{y},\bm{A})=\frac{1}{n}\sum_{i=1}^n\mathbb{E}(x_i-\hat{x}_i)^2,
\]
where the expectation is over the joint distribution of $\bm{x}$ and $\hat{\bm{x}}$.
In what follows, we will sometimes write $D_n(\sigma_w)$ as $D_n$ for notational convenience. By the definition of rate distortion functions,
\BE\label{Eqn:rate_distort_0}
\begin{split}
n\cdot R_X(D_n) & \le I(\bm{x},\hat{\bm{x}}).
\end{split}
\EE
Denote by $I(\hat{\bm{x}};\bm{A},\bm{x})$ the mutual information between $\hat{\bm{x}}$ and $(\bm{A},\bm{x})$. We have
\[
\begin{split}
I(\bm{x};\hat{\bm{x}}) &\le I(\bm{x};\bm{A},\hat{\bm{x}}) \\
&= \underbrace{I(\bm{x};\bm{A})}_{0}+I(\bm{x};\hat{\bm{x}}|\bm{A}).
\end{split}
\]
Hence,
\BE\label{Eqn:rate_distort_02}
n\cdot R_X(D_n) \le I(\bm{x};\hat{\bm{x}}|\bm{A}).
\EE

Denote $\bm{z}\Mydef\bm{Ax}$ and $\bm{y}=f(\bm{z}+\bm{w})$. For every realization of $\bm{A}$, we have the following Markov chain:
\[
\bm{x}\to\bm{z}_A\to\bm{y}_A\to\hat{\bm{x}}_A,
\]
where the subscript ``$_A$'' is added to emphasize the fact that $\bm{A}$ is fixed.
From data processing inequality \cite[Theorem 4.3.3]{gallager1968information}, we have $I(\bm{x};\hat{\bm{x}}_A)  \le I\left(\bm{z}_A;\bm{y}_A\right)$. Further averaging over $\bm{A}$ yields
\BE\label{Eqn:rate_distort_1}
\begin{split}
	I(\bm{x};\hat{\bm{x}}|\bm{A}) & \le I\left(\bm{z};\bm{y}|\bm{A}\right)\\
	&\le \sum_{i=1}^m I(z_i;y_i|\bm{A})\\
	&=\sum_{i=1}^m I(z_i;y_i|\bm{a}_i)\\
	&=m\cdot I(z;y|\bm{a}),
	\end{split}
\EE
where $\bm{a}_i$ denotes the $i$-th row of $\bm{A}$, the second inequality follows from \cite[Eq. (7.2.19)]{gallager1968information} (note that $\{y_{i}\}$ are conditionally independent given $\bm{z}$), and in the last inequality we dropped the subscripts as the joint distributions of $\{(\bm{z}_i,\bm{y}_i,\bm{a}_i)\}$ are identical due to the rotationally-invariance of $\bm{A}$. Combining \eqref{Eqn:rate_distort_02} and \eqref{Eqn:rate_distort_1} gives us
the following lower bound on $m/n$:
\BE\label{Eqn:rate_distort_15}
\frac{m}{n}\ge \frac{R_X(D_n)}{ I(z;y|\bm{a})}.
\EE

Now, suppose that
\[
M^\ast(X,f,\Lambda,\delta)=  \sup_{\sigma_w}\, \limsup_{n\to\infty}\frac{\frac{1}{n}\mmse(\bm{x}|\bm{y},\bm{A})}{\sigma^2_w}<\infty.
\]
Then, there exits $C>0$ such that the following holds for sufficiently large $n$
\[
D_n=\frac{1}{n}\mmse(\bm{x}|\bm{y},\bm{A})\le C\cdot\sigma_w^2,\quad\forall \sigma_w>0.
\]
The following arguments are similar to the proof of \cite[Theorem 9]{Wu12}. Let $R_X^{-1}$ be the inverse function of $R_X$. (Since $R_X$ is a monotonically decreasing function, its inverse exists.) We have
\[
R_X(D_n) \ge R_X(C\cdot \sigma_w^2),\quad\forall \sigma_w>0.
\]
Hence, the following holds for any $\sigma_w>0$,
\BE\label{Eqn:rate_distort_bound1}
\begin{split}
\frac{n}{m}&\le \frac{ I(z;y|\bm{a})}{R_X(C\cdot \sigma_w^2)}\\
&=  \frac{I(z;y|\bm{a})}{\frac{1}{2}\log \frac{1}{C\cdot \sigma_w^2}}\cdot \frac{\frac{1}{2}\log\frac{1}{C\cdot\sigma_w^2}}{ R_X(C\cdot \sigma_w^2)}.
\end{split}
\EE
When $X\sim\mathcal{N}(0,1)$, we have \cite{Wu10}
\BE\label{Eqn:rate_distort_bound2}
\lim_{\sigma_w\to0} \frac{R_X(C\cdot \sigma_w^2)}{\frac{1}{2}\log \frac{1}{C\cdot \sigma_w^2}} = 1.
\EE

Further, from Lemma \ref{Lem:noise_sen_aux1}, we have
\BE\label{Eqn:rate_distort_bound3}
\limsup_{\sigma_w\to0} \frac{ I(z;y|\bm{a})}{\frac{1}{2}\log\frac{1}{C\cdot\sigma_w^2}} \le 1-\mathbb{E}_{\bm{a}}[ \mathbb{P}\{ z_{{a}} \in f^{-1}(\mathcal{Q}_f )\} ],
\EE
where $z_a\sim\mathcal{N}(0,\|\bm{a}\|^2)$, and $\bm{a}$ has the same distribution as the first row of $\bm{A}$. Note that the proof of Lemma \ref{Lem:converse_converge} shows that $\|\bm{As}\|\overset{a.s.}{\to}1$ as $m,n\to\infty$ with $m/n\to\delta$, whenever $\|\bm{s}\|\to 1$. Hence, $\|\bm{a}\|\overset{a.s.}{\to}1$ where $\bm{a}$ is an arbitrary row of $\bm{A}$. It is easy to show that $\mathbb{P}\{z_a\in f^{-1}(\mathcal{Q}_f)\}$ is {a continuous function of $\|\bm{a}\|^2$}, and by continuous mapping theorem we have $\mathbb{P}\{z_a\in f^{-1}(\mathcal{Q}_f)\}\overset{a.s.}{\to} \mathbb{P}\{Z\in \mathcal{Q}_f\}$ where $Z\sim\mathcal{N}(0,1)$. Then by dominated convergence theorem, the following holds as $m,n\to\infty$ with $m/n\to\delta$,
\BE\label{Eqn:rate_distort_bound4}
\mathbb{E}_{\bm{a}}[ \mathbb{P}\{ z_{{a}} \in f^{-1}(\mathcal{Q}_f )\} ]\to \mathbb{P}\{Z\in \mathcal{Q}_f\}.
\EE
Combining \eqref{Eqn:rate_distort_bound1}-\eqref{Eqn:rate_distort_bound4} yields our desired result and concludes the proof of Theorem \ref{The:noise_opt}.

\section{Proof of Lemma \ref{Eqn:LFD2}} \label{App:Smoothing}

We use the smoothing argument of \cite[Theorem 1]{Zheng17}. Roughly speaking, we construct a sequence of smoothed functions $\tilde{\eta}_z$ (indexed by $\xi,M,\sigma$; see \eqref{Eqn:eta_smooth_final}), and show that the performance of the corresponding GLM-EP-app algorithm tends to the predicted performance of GLM-EP as $\xi,M,\sigma$ approaches a certain limit. This implies the performance of GLM-EP-app could be made arbitrarily close to the predicted one with proper choice of $\xi,M,\sigma$. 

\begin{remark}
We emphasize that the GLM-EP-app algorithm is introduced mainly for performance analysis purposes. In practice, GLM-EP is preferable. Our simulations suggest that the asymptotic prediction is accurate even for the original GLM-EP under wide choices of $f$ (including the quantization function).
\end{remark}

As many steps of the proof are the same as \cite[Theorem 1]{Zheng17}, we only sketch the main idea here.

\subsection{Constructions of $\tilde{\eta}_z$ and $C_t$}
For brevity, we omitted the argument $v$ in the notation $\eta_z(z_r,y,v)$ throughout this section. Let $\mathcal{Q}_f\Mydef \{y_q: 1\le q\le Q\}$ (where $Q<\infty$) be the set for which $f^{-1}$ contains an interval. Let $\xi<\frac{1}{2} \min\ \{|y_p-y_q|,p\neq q\}$ and define
\BS
\BE\label{Eqn:eta_smooth_1}
\eta_z^{\xi}(z_r,y)\Mydef 
\begin{cases}
\eta_{q}(z_r) & z_r\in\mathbb{R},y\in (y_q-\xi,y_q+\xi),\forall q\\
\eta_z(z_r,y) &z_r\in\mathbb{R},y\in \mathcal{Y}\backslash\bigcup_{q=1}^Q (y_q-\xi,y_q+\xi)\\
0&z_r\in\mathbb{R},y\in\mathbb{R}\backslash \mathcal{Y}
\end{cases}
\EE
where we denoted
\BE
\eta_q(z_r)\Mydef \eta_z(z_r,y_q).
\EE
\ES
Here, we extended the definition of $\eta_z$ at the isolated points $\{y_1,\ldots,y_Q\}$ to their neighborhoods. This treatment ensures $\eta_z^{\xi}(z_r,y)$ to be continuous at $(z_r,y_q)$, which is a useful property for our analysis. We apply an additional truncation to $\eta_z^{\xi}(z_r,y)$ (where $M>\max\{|y_1|,\ldots,|y_Q|\}$):
\BE\label{Eqn:eta_smooth_2}
\eta_z^{\xi,M}(z_r,y)\Mydef \eta_z^{\xi}(z_r,y)\cdot \mathbb{I}_{[-M,M]^2}(z_r,y),
\EE
where $\mathbb{I}_{[-M,M]^2}(z_r,y)$ is an indicator function that equals one when $(z_r,y)\in[-M,M]^2$ and zero elsewhere.
Finally, we smooth $\eta_z^{\xi,M}(z_r,y)$ by convolving it with a Gaussian kernel\footnote{The smoothing parameter $\sigma$ should not be confused with $\sigma_w$, which denotes the noise variance in Section \ref{Sec:noisy_converse_1}.}:
\BE\label{Eqn:eta_smooth_final}
\begin{split}
&\eta_z^{\sigma,\xi,M}(z_r,y)\\
&\Mydef \eta_z^{\xi,M}(z_r,y)\star \phi_\sigma(z_r,y) \\
&=\iint_{\mathbb{R}^2} \eta^{\xi,M}(s,t)\cdot \frac{1}{2\pi \sigma^2} \exp\left( -\frac{(s-z_r)^2+(t-y)^2}{2\sigma^2}\right)dsdt\\
\end{split}
\EE

Some useful properties of $\eta_z^{\sigma,\xi,M}(z_r,y)$ and $\eta_z$ are given in Section \ref{Sec:app_smooth_aux} (see Lemma \ref{Lem:eta_smooth_pro} and Lemma \ref{Lem:poly}).\vspace{5pt}

In the GLM-EP-app algorithm (see \eqref{Eqn:GLM_EP_app}), the function $\tilde{\eta}_z$ and the constant $C_t$ are given by
\BS\label{Eqn:choices_eta_C}
\begin{align}
&\tilde{\eta}_z(z_r,y,v_r) = \eta_z^{\sigma,\xi,M}(z_r,y,v_r),
\intertext{and}
&C_t = \frac{1}{1-\mathbb{E}\left[\eta'_z(Z_r^{t-1},Y,V_r^{-1})\right]}\label{Eqn:choices_eta_C2}
\end{align}
\ES
where the expectation in \eqref{Eqn:choices_eta_C2} are taken with respect to $Z_r^{t-1}\sim\mathcal{N}(0,1-V_r^{t-1})$, $Z=Z_r+\mathcal{N}(0,V_r^{t-1})$ and $Y=f(Z)$. Notice that $C_t$ depends on the the original function $\eta_z$, not the smoothed and truncated function $\eta_z^{\sigma,\xi,M}$. This choice is for the purpose of simplifying our analysis.

Before we move to the proof sketch in Section \ref{Sec:smooth_main}, we present some auxiliary results in the next section.
\subsection{Auxiliary results} \label{Sec:app_smooth_aux}

\begin{lemma}\label{Lem:eta_smooth_pro}
Let $v_r,\xi,M,\sigma>0$. The following hold
\begin{itemize}
\item[(P.1)] $\eta_z^\xi(z_r,y)$ is continuous a.e. with respect to the Lebesgue measure. Further, $\eta_z^{\xi,M}(z_r,y)$ is a.e. bounded;
\item[(P.2)] $\eta_z^{\sigma,\xi,M}(z_r,y)$ is Lipschitz continuous and bounded on $\mathbb{R}^2$;
\item[(P.3)] $\lim_{\sigma\to0} \eta_z^{\sigma,\xi,M}(z_r,y)=\eta_z^{\xi,M}(z_r,y)$ whenever $\eta_z^{\xi,M}$ is continuous at $(z_r,y)$.
\end{itemize}
\end{lemma}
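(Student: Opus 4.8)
The plan is to treat the three claims in sequence, since (P.2) and (P.3) are essentially standard facts about Gaussian mollification once (P.1) is in hand, so the real work lies in (P.1). For (P.1) I would first analyze the unsmoothed posterior-mean map $\eta_z(z_r,y)$ using the piecewise structure of $f$ granted by Assumption (A.3). Because $f$ decomposes into finitely many intervals on each of which it is $C^1$ and either strictly monotone or constant, the $y$-axis is partitioned by a finite set of critical values --- the images of the interval endpoints together with the finitely many flat values $\{y_1,\dots,y_Q\}=\mathcal{Q}_f$. On each open strip between consecutive critical values the preimage $f^{-1}(y)$ consists of a fixed number of branches moving $C^1$ in $y$, so the ratio of Gaussian integrals (or sums) defining $\eta_z$ has a strictly positive denominator and is jointly continuous in $(z_r,y)$. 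The extension $\eta_z^\xi$ is built precisely so that near each flat value $y_q$ it equals the ($y$-independent) quantity $\eta_q(z_r)=\eta_z(z_r,y_q)$, which is continuous in $z_r$; hence the only possible discontinuities of $\eta_z^\xi$ lie on the finitely many horizontal lines $\{y=c\}$ (critical values and the artificial edges $y=y_q\pm\xi$), to which the indicator in $\eta_z^{\xi,M}$ adds the lines $z_r=\pm M$ and $y=\pm M$. This is a finite union of lines, hence Lebesgue-null in $\mathbb{R}^2$, giving continuity a.e.

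For the boundedness half of (P.1) I would argue strip-by-strip on $[-M,M]^2$. Where $f^{-1}(y)$ contains an interval (the flat values), $\eta_z$ is a truncated-Gaussian conditional mean with location $z_r\in[-M,M]$, hence uniformly bounded even for semi-infinite intervals since the Gaussian tails decay; where $f^{-1}(y)$ is finite, $\eta_z$ is a convex combination of the branch values with weights $\mathcal{N}(\cdot;z_r,v)$ that suppress branches far from $z_r$, and one checks that the relevant branch values stay bounded on the square. Together with the indicator, which zeroes the function outside $[-M,M]^2$, this yields the claimed a.e. boundedness of $\eta_z^{\xi,M}$.

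With $\eta_z^{\xi,M}$ bounded and measurable, (P.2) follows from differentiating under the convolution: $\eta_z^{\sigma,\xi,M}=\eta_z^{\xi,M}\star\phi_\sigma$ is $C^\infty$ with $\nabla\eta_z^{\sigma,\xi,M}=\eta_z^{\xi,M}\star\nabla\phi_\sigma$, so $\|\nabla\eta_z^{\sigma,\xi,M}\|_\infty\le \|\eta_z^{\xi,M}\|_\infty\,\|\nabla\phi_\sigma\|_1<\infty$, which gives global Lipschitz continuity, while boundedness is immediate from $\int\phi_\sigma=1$. Finally, (P.3) is the approximate-identity statement: at a continuity point $(z_r,y)$ of the bounded function $\eta_z^{\xi,M}$, I would split the convolution integral into a small disc, on which continuity makes the integrand close to $\eta_z^{\xi,M}(z_r,y)$, and its complement, on which boundedness together with the vanishing Gaussian tail mass as $\sigma\to0$ forces a negligible contribution; shrinking the disc after sending $\sigma\to0$ gives $\eta_z^{\sigma,\xi,M}(z_r,y)\to\eta_z^{\xi,M}(z_r,y)$.

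The main obstacle is (P.1): correctly bookkeeping how the branch structure of $f^{-1}$ changes across critical values so as to confine discontinuities to a null set, and handling the boundedness of the finite-preimage case, where a branch could a priori escape to infinity --- this is where the piecewise-monotone hypothesis of (A.3) and the Gaussian weighting must be used carefully. Once the geometry of $\eta_z$ is pinned down, the smoothing steps (P.2)--(P.3) are routine.
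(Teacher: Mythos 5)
Your proposal follows essentially the same route as the paper: decompose $\mathcal{Y}\setminus\mathcal{Q}_f$ into finitely many intervals on which the branches of $f^{-1}(y)$ vary continuously so that discontinuities of $\eta_z^{\xi}$ are confined to a Lebesgue-null union of lines, then obtain Lipschitz continuity of the mollified function from the essential bound $\|\eta_z^{\xi,M}\|_\infty$ together with smoothness of the Gaussian kernel (the paper bounds the kernel difference directly by the mean value theorem rather than differentiating under the convolution, but these are equivalent), and prove (P.3) by the standard approximate-identity/dominated-convergence argument at continuity points. Your treatment of the boundedness half of (P.1) is in fact more explicit than the paper's, which simply asserts that a.e. continuity of $\eta_z^{\xi}$ plus the truncation to $[-M,M]^2$ yields a.e. boundedness; you correctly identify that the finite-preimage branches must be checked to remain controlled on the square, which is exactly where the paper's own argument is thinnest.
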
\vspace{5pt}

\begin{IEEEproof}
\textit{Proof of (P.1):} We note that $\eta_q(z_r)$ ($q=1,\ldots,Q$) is a continuous function of $z_r\in\mathbb{R}$:
\[
\eta_q(z_r)\Mydef\eta_z(z_r,y_q)=\frac{\int_{f^{-1}(y_q)}u\cdot \mathcal{N}(u;z_r,v_r)du}{\int_{f^{-1}(y_q)}\mathcal{N}(u;z_r,v_r)du}.
\]
By definition of $y_q$, $f^{-1}(y_q)$ contains an interval (could be union of intervals), and it is straightforward to show that $\eta_q(z_r)$ is continuous on $\mathbb{R}$. 

When $y\in\mathcal{Y}\backslash\mathcal{Q}_f$, $f^{-1}(y)$ is a finite set and we have (see \eqref{Eqn:eta_z_def})
\[
\begin{split}
\eta_z(z_r,y)= \frac{\sum_{u_i\in f^{-1}(y)} u_i\cdot \mathcal{N}(u_i;z_r,v)}{\sum_{u_i\in f^{-1}(y)}\mathcal{N}(u_i;z_r,v)}.
\end{split}
\]
By the piecewise smooth assumption of $f$, it can be shown that $\mathcal{Y}\backslash\mathcal{Q}_f$ can be further decomposed into several non-overlapping {intervals}, denoted as $\mathcal{Y}\backslash\mathcal{Q}_f=\bigcup_{j=1}^J\mathcal{Y}_j$ (where $J<\infty$), such that $\eta_z(z_r,y)$ is continuous on $\mathbb{R}\times \mathcal{Y}_j$, $\forall j$. This is due to the fact that each point of $f^{-1}(y)$ is a continuous function of $y$ for $y\in\mathcal{Y}_j$. Specifically, it is possible to write $f^{-1}(y)$ as
\[
f^{-1}(y) = \{ F^1_{j}(y),\ldots,F_{j}^{K_j}(y) \},\quad \forall y\in\mathcal{Y}_j,
\]
where $K_j<\infty$, and each $F_j^{k}(y)$ is a continuous function of $y$ (by piecewise continuity of $f$). Hence,
\[
\begin{split}
\eta_z(z_r,y)= \frac{\sum_{k=1}^{K_j} F^k_{j}(y)\cdot \mathcal{N}(F^k_{j}(y);z_r,v)}{\sum_{k=1}^{K_j} \mathcal{N}(F^k_{j}(y);z_r,v)},\, \forall (z_r,y)\in\mathbb{R}\times\mathcal{Y}_j,
\end{split}
\]
and it is continuous on \textit{the interior} of $\mathbb{R}\times\mathcal{Y}_j$. As an example, consider $f$ given in \eqref{Eqn:f_example3} (see illustration on the left panel of Figure \ref{Fig:finite_f}). In this case, 
\[
f^{-1}(y)=
\begin{cases}
\{-y-1,y+1\}, & \text{for } y>1\\
\{-y-1,y+1,-y,y\}, & \text{for } 0\le y\le1
\end{cases}
\]
It can be shown that $\eta_z(z_r,y)$ is continuous on $\mathbb{R}\times (1,\infty)$ and $\mathbb{R}\times (0,1)$. 

The claimed a.e. continuity of $\eta_z^\xi$ (see definition in \eqref{Eqn:eta_smooth_1}) follows from the above properties of $\eta_z(z_r,y)$.

Since $\eta_z^{\xi}(z_r,y)$ is continuous almost everywhere (with respect to the Lebesgue measure), $\eta_z^{\xi,M}(z_r,y)= \eta^{\xi}(z_r,y)\cdot \mathbb{I}_{[-M,M]^2}(z_r,y)$ is bounded almost everywhere. Let $M'<\infty$ denote this a.e. bound of $|\eta_z^{\xi,M}|$. Then, the smoothed function $|\eta_z^{\sigma,\xi,M}(z_r,y)|$ is upper bounded by $M'$ on $\mathbb{R}^2$.\vspace{3pt}

\textit{Proof of (P.2):}
With slight abuse of notations, let $\phi_\sigma(s,t)$ denote the bivariate and univariate Gaussian pdf functions respectively. Namely, $\phi_\sigma(s,t)=\phi_\sigma(s)\phi_h(t)$, where $\phi_\sigma(s):=\frac{1}{\sqrt{2\pi\sigma^2}}\exp(-s^2/(2\sigma^2))$. To prove Lipschitz continuity, note that
\[
\begin{split}
&|\eta_z^{\sigma,\xi,M}(z_1,y_1)-\eta_z^{\sigma,\xi,M}(z_2,y_2)| \\
&\le \iint \eta_z^{\xi,M}(s,t)\big|\phi_\sigma(z_1-s)\phi_\sigma(y_1-t)\\
&\quad -\phi_\sigma(z_2-s)\phi_\sigma(y_2-t)\big| dsdt\\
&\le 8M'M^2\|\phi_\sigma\|_\infty \|\phi_\sigma'\|_\infty\cdot \|(y_1,z_1)-(y_2,z_2)\|,
\end{split}
\]
where we used
\[
\begin{split}
&\left|\phi_\sigma(z_1-s)\phi_\sigma(y_1-t)-\phi_\sigma(z_2-s)\phi_\sigma(y_2-t)\right| \\
&\le\|\phi_\sigma\|_\infty \big(|\phi_\sigma(y_1-t)-\phi_\sigma(y_2-t)|\\
&\quad +|\phi_\sigma(z_1-t)-\phi_\sigma(z_2-t)|\big)\\
&\le \|\phi_\sigma\|_\infty \|\phi_\sigma'\|_\infty\cdot(|y_1-y_2|+|z_1-z_2|)\\
&\le2\|\phi_\sigma\|_\infty \|\phi_\sigma'\|_\infty\cdot \|(y_1,z_1)-(y_2,z_2)\|.
\end{split}
\]
Hence, the function $\eta^{\sigma,\xi,M}$ is Lipschitz continuous. The boundedness of $\eta_z^{\sigma,\xi,M}$ follows from the fact that $\eta_z^{\xi,M}$ is a.e. bounded (P.1), and the Gaussian convolution kernel is absolutely continuous w.r.t the Lebesgue measure.
\vspace{3pt}

\textit{Proof of (P.3):} Let $(z_0,y_0)$ be a point at which $\eta^{\xi,M}$ is continuous. Since $\eta^{\xi,M}$ is bounded almost everywhere, we could apply DCT to get
\[
\begin{split}
&\lim_{\sigma\to0} \eta^{\sigma,\xi,M}(z_0,y_0)\\
&=\lim_{\sigma\to0}\iint \eta^{\xi,M}(z_0+s,y_0+t) \frac{1}{2\pi \sigma^2} \exp\left( -\frac{s^2+t^2}{2\sigma^2}\right)dsdt\\
&=\lim_{\sigma\to0}\iint \eta^{\xi,M}(z_0+\sigma s,y_0+\sigma t) \frac{1}{2\pi} \exp\left( -\frac{s^2+t^2}{2}\right)dsdt\\
&=\iint\lim_{\sigma\to0}\eta^{\xi,M}(z_0+\sigma s,y_0+\sigma t) \frac{1}{2\pi} \exp\left( -\frac{s^2+t^2}{2}\right)dsdt\\
\end{split}
\]
Since $\eta^{\xi,M}$ is continuous at $(z_0,y_0)$, we have
\[
\lim_{\sigma\to0}\eta^{\xi,M}(z_0+\sigma s,y_0+\sigma t) = \eta^{\xi,M}(z_0,y_0),\, \forall (s,t)\in\mathbb{R}^2.
\]
Combining the two steps completes the proof.
\end{IEEEproof}
\vspace{10pt}
\begin{lemma}\label{Lem:poly}
Let $v_r>0$. There exists a constant $C>0$ such that
\BE\label{Eqn:}
\eta_z(z_r,f(z),v_r) < C\cdot \left( 1+ \|(z_r,z)\|\right),\quad \forall (z_r,z)\in\mathbb{R}^2.
\EE
\end{lemma}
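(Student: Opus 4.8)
The plan is to read $\eta_z(z_r,f(z),v_r)$ probabilistically: it is the posterior mean $\mathbb{E}[U\mid U\in S]$ of $U\sim\mathcal{N}(z_r,v_r)$ conditioned on $U\in S$, where $S\Mydef f^{-1}(f(z))$. By Assumption (A.3), $f$ has at most $K$ smooth monotone/constant pieces, so $S$ is a union of at most $K$ points (from strictly monotone pieces) and at most $K$ intervals (from constant pieces), and $\mathcal{Q}_f$ is finite. Crucially $z\in S$. When $f(z)\notin\mathcal{Q}_f$ the set $S$ is a finite point set and the ratio in \eqref{Eqn:eta_z_def} is a sum; when $f(z)\in\mathcal{Q}_f$ the Lebesgue integral picks out only the interval part $S_{\mathrm{int}}\subseteq S$. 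Writing $S_\ast$ for the effective support, I would first reduce the claim, by the conditional triangle inequality, to
\[
\bigl|\eta_z(z_r,f(z),v_r)-z_r\bigr|\le \mathbb{E}\bigl[\,|U-z_r|\ \big|\ U\in S_\ast\,\bigr],
\]
so that it suffices to bound the right-hand side by $\mathrm{dist}(z_r,S_\ast)+C$ and then to bound $\mathrm{dist}(z_r,S_\ast)$ itself.

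The core estimate is that the conditional mean of $|U-z_r|$ exceeds $m\Mydef\mathrm{dist}(z_r,S_\ast)$ only by a bounded amount. In the discrete case I would write $s_i\Mydef u_i-z_r$ and use the elementary inequality $t\,e^{-t^2/(2v_r)}\le\sqrt{v_r/e}$; comparing each of the (at most $K$) excess terms $(|s_i|-m)\,e^{-s_i^2/(2v_r)}$ against the single nearest term $e^{-m^2/(2v_r)}$ retained in the denominator yields $\mathbb{E}[|U-z_r|\mid U\in S_\ast]\le m+K\sqrt{v_r/e}$. In the interval case I would isolate the nearest interval $I^\ast\subseteq S_\ast$ (at distance $m$) and split the conditional mean into the contribution of $I^\ast$ and that of the remaining, strictly farther, intervals. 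The $I^\ast$-part is at most the Gaussian mean residual life at level $m$, which is decreasing in $m$ and hence bounded by $\sqrt{2v_r/\pi}$; the remaining part is exponentially suppressed relative to $\mathbb{P}(U\in I^\ast)$ because the competing flat pieces are separated from $I^\ast$ by a gap bounded below in terms of $f$, so its ratio stays bounded uniformly in $m$. Together these give $\mathbb{E}[|U-z_r|\mid U\in S_\ast]\le m+C_0$ with $C_0$ depending only on $f$ and $v_r$.

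It then remains to bound $m=\mathrm{dist}(z_r,S_\ast)$. If $z\in S_\ast$ (always true in the discrete case, and in the interval case whenever $z$ lies in a flat piece) then trivially $m\le|z-z_r|\le|z_r|+|z|$. The only remaining subtlety is the mixed case in which $z$ is an isolated preimage while the Lebesgue integral is governed by intervals $S_{\mathrm{int}}$ not containing $z$; there $S_{\mathrm{int}}$ is one of the \emph{finitely many} fixed interval-configurations indexed by $\mathcal{Q}_f$, so $m\le|z_r|+\mathrm{dist}(0,S_{\mathrm{int}})\le|z_r|+C_f$. In all cases
\[
|\eta_z(z_r,f(z),v_r)|\le|z_r|+m+C_0\le C\bigl(1+|z_r|+|z|\bigr)\le C\bigl(1+\|(z_r,z)\|\bigr),
\]
which is the desired bound (I prove the two-sided version, which also gives the stated one-sided one); the constant $C$ depends on $f$ and $v_r$, as allowed.

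I expect the main obstacle to be the core estimate of the second paragraph, specifically controlling the conditional mean when $\mathbb{P}(U\in S_\ast)$ is exponentially small, i.e.\ when $S_\ast$ sits far from $z_r$. Bounding the numerator and denominator of \eqref{Eqn:eta_z_def} separately produces a spurious factor $e^{m^2/(2v_r)}$ and fails; the resolution is never to separate them, but to compare the excess mass directly against the nearest component, exploiting the log-concavity of the Gaussian (monotone mean residual life) and the $f$-dependent separation of the flat pieces. The mixed, isolated-point case is the second place requiring care, and it is handled cleanly by the finiteness of $\mathcal{Q}_f$.
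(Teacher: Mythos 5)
Your proposal is correct and shares the paper's top-level structure: both split on whether $f(z)$ lands in a flat section ($\mathcal{Q}_f$) or not, both exploit that $z\in f^{-1}(f(z))$ so the nearest preimage to $z_r$ is within $|z-z_r|$, and both use the finiteness of $\mathcal{Q}_f$ and of the number of pieces of $f$. The detailed estimates differ, though. In the discrete case the paper bounds each term $|u_i|e^{-E_i+E_{\min}}$ directly via $|u_i|\le |z_r|+\sqrt{2v_rE_i}$ and the elementary inequality $t^2\log(1/t)\le 0.2$, whereas you subtract $z_r$ first and use $te^{-t^2/(2v_r)}\le\sqrt{v_r/e}$ after comparing each excess term to the nearest one; these are essentially equivalent, and your version is arguably cleaner. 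In the interval case the approaches genuinely diverge: the paper simply discards all but the $k$-th term from the denominator, reducing to single-interval conditional means, each of which is bounded by $C(1+|z_r|)$ using the fixed finite endpoints (or a Mills-ratio derivative bound for half-infinite intervals). You instead keep the full denominator and combine the decreasing mean residual life of the Gaussian with an exponential-suppression argument for the farther intervals. Your route yields the sharper bound $\mathrm{dist}(z_r,S_{\mathrm{int}})+C_0$, but the suppression step (``the ratio stays bounded uniformly in $m$'') is asserted rather than proved, and is the one place a referee would push back --- it does hold here because the interval configurations are drawn from a fixed finite list indexed by $\mathcal{Q}_f$, so the quantity $\mathbb{E}[\,|U-z_r|\mid U\in S_{\mathrm{int}}]-\mathrm{dist}(z_r,S_{\mathrm{int}})$ is a continuous function of $z_r$ with finite limits as $z_r\to\pm\infty$, hence bounded; but if you want to avoid that argument entirely, the paper's cruder term-by-term bound already suffices for the stated linear growth. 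Your handling of the mixed case ($z$ an isolated preimage while the integral sees only the interval part) is correct and is implicitly what the paper does, since its Case 1 bound never uses $z$.
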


\begin{IEEEproof}
We differentiate between two cases: $f(z)\in \mathcal{Q}_f$ and $f(z)\in\mathcal{Y}\backslash\mathcal{Q}_f$, where $\mathcal{Q}_f=\{y_1,\ldots,y_Q\}$ correspond to the flat sections of $f$.\vspace{5pt}

\textit{Case 1: $f(z)\in \mathcal{Q}_f$.} Assume $f(z)=y_i$ and denote $\mathcal{I}_i\Mydef f^{-1}(y_i)$. In what follows, we will prove
\[
\eta_z(z_r,y_i,v_r) =\frac{\int_{\mathcal{I}_i}u\cdot \mathcal{N}(u;z_r,v_r)du}{\int_{\mathcal{I}_i}\mathcal{N}(u;z_r,v_r)du}< C_i \left( 1+ |z_r|\right).
\]
Suppose $\mathcal{I}_i$ can be written as $\mathcal{I}_i=\bigcup_{k=1}^K (a_k,b_k)$, where $a_k$ could be $-\infty$ and $b_k$ could be $\infty$ (we do not index $a_k,b_k$ by $i$ to simplify notation.) Then, 
\[
\begin{split}
\eta_z(z_r,y_i,v_r) &=\frac{\int_{\mathcal{I}_i}u\cdot \mathcal{N}(u;z_r,v_r)du}{\int_{\mathcal{I}_i}\mathcal{N}(u;z_r,v_r)du}\\
&=\sum_{k=1}^K\left(\frac{\int_{(a_k,b_k)}u\cdot \mathcal{N}(u;z_r,v_r)du}{\sum_{j=1}^K\int_{(a_j,b_j)}\mathcal{N}(u;z_r,v_r)du}\right)
\end{split}
\]
We have
\BE
\begin{split}
|\eta_z(z_r,y_i,v_r)  | &\le \sum_{k=1}^K\left(\frac{\left|\int_{(a_k,b_k)}u\cdot \mathcal{N}(u;z_r,v_r)du\right|}{\sum_{j=1}^K\int_{(a_j,b_j)}\mathcal{N}(u;z_r,v_r)du}\right)\\
&\le \sum_{k=1}^K\left(\frac{\left|\int_{(a_k,b_k)}u\cdot \mathcal{N}(u;z_r,v_r)du\right|}{\int_{(a_k,b_k)}\mathcal{N}(u;z_r,v_r)du}\right)
\end{split}
\EE
We bound the terms inside the summation seperately. First, assume both $a_k$ and $b_k$ are finite. Then,
\[
\begin{split}
&\frac{\left|\int_{(a_k,b_k)}u\cdot \mathcal{N}(u;z_r,v_r)du\right|}{\int_{(a_k,b_k)}\mathcal{N}(u;z_r,v_r)du} \\
&=\frac{\left|\int_{(a_k,b_k)}u\cdot \mathcal{N}(u;z_r,v_r)du\right|}{\int_{(a_k,b_k)}\mathcal{N}(u;z_r,v_r)du} \\
&=\left|z_r+\frac{\int_{(a_k,b_k)-z_r}t\cdot \mathcal{N}(t;0,v_r)dt}{\int_{(a_k,b_k)-z_r}\mathcal{N}(t;0,v_r)dt}\right|\\
&\le |z_r| + \max\{ |a_k-z_r|,|b_k-z_r| \}\\
&\le C'(1+|z_r|)
\end{split}
\]
Now, suppose $a_k=-\infty$. (The argument is similar for the case $b_k=\infty$.) We have
\[
\begin{split}
\left|\frac{\int_{(-\infty,b_k-z_r)}t\cdot \mathcal{N}(t;0,v_r)dt}{\int_{(-\infty,b_k-z_r)}\mathcal{N}(t;0,v_r)dt}\right| &= \left|\sqrt{v_r}\frac{\phi_1\left( \frac{b_k-z_r}{\sqrt{v_r}} \right)}{\Phi_1\left(  \frac{b_k-z_r}{\sqrt{v_r}} \right) } \right| \\
&\le C''(1+|z_r|)
\end{split}
\]
where $\phi_1$ and $\Phi_1$ denote the pdf and cdf functions of standard Gaussian distribution, respectively, and the last step is from mean value theorem together with the following elementary result
\[
\left|\left(\frac{\phi_1(x)}{\Phi_1(x)}\right)'\right|\le 1,\quad \forall x\in\mathbb{R}.
\]

\textit{Case 2: $f(z)\in\mathcal{Y} \backslash\mathcal{Q}_f$.}  In this case, $f^{-1}(f(z))$ is a finite set, and
\BE
\begin{split}
\eta_z(z_r,f(z),v_r) &=\frac{\sum_{u_i\in f^{-1}(f(z))} u_i\cdot \exp\left(-E_i\right)}{ \sum_{u_i\in f^{-1}(f(z))}\exp\left(-E_i\right) },
\end{split}
\EE
where
\[
E_i\Mydef \frac{(u_i-z_r)^2}{2v_r}.
\]
Hence,
\BE
\begin{split}
|\eta_z(z_r,f(z),v_r)| &\le\frac{\sum_{u_i\in f^{-1}(f(z))} |u_i|\cdot \exp\left(-E_i\right)}{ \sum_{u_i\in f^{-1}(f(z))}\exp\left(-E_i\right) }\\
&\le \sum_{u_i\in f^{-1}(f(z))} |u_i|\cdot \exp\left(-E_i+E_{\min}\right),
\end{split}
\EE
where
\BE
E_{\min}=\min\ \{E_j\}
\EE
From the piecewise assumption of $f$, we have that $|f^{-1}(f(z))|<K$ for all $f(z)\in \mathcal{Y} \backslash\mathcal{Q}_f$. It suffices to prove the following for $1\le i\le|f^{-1}(f(z))|$:
\[
|u_i|\cdot \exp\left(-E_i+E_{\min}\right) <C\left( 1+ \|(z,z_r)\|\right)\quad \forall (z,z_r)\in\mathbb{R}^2.
\]
Denote
\[
t_i\Mydef \exp\left(-E_i+E_{\min}\right)=\exp\left(- \frac{(u_i-z_r)^2}{2v_r}+E_{\min}\right).
\]
(As $E_i\ge E_{\min}$, we have $0<t_i\le 1$.) From this definition,
\[
|u_i-z_r|=\sqrt{2v_r\cdot\left(E_{\min}+\log  \frac{1}{t_i}\right) }.
\]
Hence,
\[
|u_i|\le |z_r|+\sqrt{2v_r\cdot\left(E_{\min}+\log  \frac{1}{t_i}\right) }.
\]
Then,
\[
\begin{split}
&|u_i|\cdot \exp\left(-E_i+E_{\min}\right) \\
&=|u_i|\cdot t_i\\
&\le |z_r|\cdot t_i+ \sqrt{2v_r\cdot\left(t_i^2\cdot E_{\min}+t_i^2\cdot\log  \frac{1}{t_i}\right) }\\
&\overset{(a)}{\le} |z_r|\cdot t_i+ \sqrt{2v_r\cdot\left(t_i^2\cdot \frac{(z-z_r)^2}{2v_r}+t_i^2\cdot\log  \frac{1}{t_i}\right) }\\
&\overset{(b)}{\le}|z_r| + \sqrt{(z-z_r)^2+0.4v_r }\\
&< C\cdot \left(1+\|(z,z_r)\|\right),
\end{split}
\]
where step (a) is from the definition of $E_{\min}$ and the fact that $z\in f^{-1}(f(z))$), and step (b) is due to $0< t_i\le 1$ and $t_i^2\log(1/t_i)< 0.2$.
\end{IEEEproof}

\subsection{Proof sketch for Lemma \ref{Eqn:LFD2}}\label{Sec:smooth_main}

Our proof for Lemma \ref{Eqn:LFD2} follows the approach proposed in \cite[Theorem 1]{Zheng17}. \textit{As many steps are similar to Lemma \ref{Eqn:LFD2}, we will not provide the full details of the proof, and only sketch the main idea.} The proof has two main steps:
\begin{itemize}
\item[(1)] The smoothed function $\eta_z^{\sigma,\xi,M}$ is Lipschitz continuous, so the asymptotic MSE of GLM-EP-app could be characterized by a state evolution (SE) recursion;
\item[(2)] Using the SE platform, we show that the asymptotic MSE of GLM-EP-app converges to the (expected) MSE of GLM-EP, as $\sigma\to0$, and $\xi\to0,M\to\infty$ sequentially. This implies that, with proper choice of $\sigma,\xi,M$, the asymptotic performance of GLM-EP-app is arbitrarily close to that of GLM-EP.
\end{itemize}

Step 1 is a consequence of \cite[Theorem 1]{fletcher2017inference}. Note that the model considered in this paper is a special case of that adopted in \cite[Theorem 1]{fletcher2017inference}. Also, here we assumed $f$ to be Lipschitz continuous, as required by \cite[Theorem 1]{fletcher2017inference}. The crucial assumption of \cite[Theorem 1]{fletcher2017inference} is the Lipschitz continuity of $\eta_z^{\sigma,\xi,M}$, which we prove in Lemma \ref{Lem:eta_smooth_pro} (see Section \ref{Sec:app_smooth_aux}).

A caveat is that \cite[Theorem 1]{fletcher2017inference} assumes $\eta_z^{\sigma,\xi,M}(z_r,y,v_r)$ to be uniform Lipschitz (see definition in \cite{fletcher2017inference}) w.r.t. to $(z_r,y)$ and $v_r$. However, since GLM-EP-app uses the deterministic sequences $\{V_r^t,V_l^t\}_{t\ge0}$ instead of their empirical counterparts $\{v_r^t,v_l^t\}$, this additional uniform continuity assumption is not required here.

Step 2 follows the same argument as in \cite[Theorem 1]{Zheng17}. First, the state evolution of GLM-EP-app is slightly more complicated than that of GLM-EP, and involve four sequences $\{\alpha_l^t,\tau_l^t,\alpha_r^t,\tau_r^t\}_{t\ge0}$. (The SE of GLM-EP can be viewed as a special case of this more general SE.) \textit{Note that these sequences all depend on the parameters $\sigma,\xi,M$}, but to keep notation light we do not make such dependency explicit. Intuitively speaking, $(\alpha_l^t,\tau_l^t)$ describes the correlation matrix of the components of $(\bm{z},\bm{z}_l^t)$ (where $\bm{z}\Mydef\bm{Ax}$):
\[
\mr{Cov}(Z,Z_l^t)\Mydef
\begin{bmatrix}
\mathbb{E}[Z^2] & \mathbb{E}[ZZ_l^t]\\
\mathbb{E}[ZZ_l^t] & \mathbb{E}[(Z_l^t)^2]
\end{bmatrix}
=
\begin{bmatrix}
1 & \alpha_l^t\\
\alpha_l^t & \tau_l^t
\end{bmatrix}.
\]
Similarly, $(\alpha_r^t,\tau_r^t)$ describes the correlation of the components of $(\bm{z},\bm{z}_r^t)$

The SE describing the recursive relationship of $\{\alpha_l^t,\tau_l^t,\alpha_r^t,\tau_r^t\}_{t\ge0}$ is given by
\[
\begin{split}
\alpha_l^t = \phi_1^{\sigma,\xi,M}(\alpha_r^t,\sigma_r^t),\quad &\text{and}\quad \tau_l^t=\phi_2^{\sigma,\xi,M}(\alpha_r^t,\sigma_r^t),\\
\alpha_r^t = \Phi_1(\alpha_r^t,\sigma_r^t),\quad &\text{and}\quad \tau_r^t=\Phi_2(\alpha_r^t,\sigma_r^t),\\
\end{split}
\]
where GLM-EP and GLM-EP-app start from the same initializations, i.e., $\alpha_r^{-1}=\tau_r^{-1}=V_r^{-1}$. A formal definition of these functions may be found in, e.g., \cite{fletcher2017inference}. 

Our goal is to show that the limit of the covariance $\mr{Cov}(Z,Z_l^t)$ for GLM-EP and GLM-EP-app for all $t\ge0$. Note that if $\mr{Cov}(Z,Z_l^t)$ for GLM-EP and GLM-EP-app are the same, then $\mr{Cov}(Z,Z_r^t)$ would also be the same, as the second steps of the two algorithms are identical (cf. \eqref{Eqn:EP_Gaussian} and \eqref{Eqn:GLM_EP_app_b}).

As in \cite[Theorem 1]{Zheng17}, the argument proceeds inductively on $t$. Because the steps are straightforward, we do not provide the full details and only consider the first iteration. Basically, we need to prove the following:
\BS\label{Eqn:app_smooth_goal2}
\begin{align}
\lim_{\xi\to0,M\to\infty}\lim_{\sigma\to0}\mathbb{E}[Z \eta_z^{\sigma,\xi,M}(Z_r,Y)] &= \mathbb{E}[Z \eta_z(Z_r,Y)]\label{Eqn:app_smooth_goal2_a} \\
\lim_{\xi\to0,M\to\infty}\lim_{\sigma\to0}\mathbb{E}[Z_r \eta_z^{\sigma,\xi,M}(Z_r,Y)] &= \mathbb{E}[Z_r \eta_z(Z_r,Y)] \label{Eqn:app_smooth_goal2_b}\\
\lim_{\xi\to0,M\to\infty}\lim_{\sigma\to0}\mathbb{E}[\left( \eta_z^{\sigma,\xi,M}(Z_r,Y)\right)^2] &= \mathbb{E}[ \left(\eta_z(Z_r,Y)\right)^2] \label{Eqn:app_smooth_goal2_c}
\end{align}
\ES
where $Z\sim\mathcal{N}(0,1)$, $Y=f(Z)$. Note that these results hold for any $Z_r$ as long as it is joint Gaussian with $Z$ (non-degenerate). 

We next prove \eqref{Eqn:app_smooth_goal2_a}. Other results can be proved in the same way. We first calculate its limit of $\mathbb{E}[Z\cdot \eta_z^{\sigma,\xi,M}(Z_r,Y)]$ as $\sigma\to0$. From Lemma \ref{Lem:eta_smooth_pro}, the function $\eta_z^{\sigma,\xi,M}$ is bounded, and using dominated convergence theorem we get
\[
\begin{split}
\lim_{\sigma\to0}\mathbb{E}[Z\cdot \eta_z^{\sigma,\xi,M}(Z_r,Y)]&=\mathbb{E}\left[ Z\cdot \lim_{\sigma\to0}\eta_z^{\sigma,\xi,M}(Z_r,Y) \right]\\
&=\mathbb{E}\left[ Z\cdot \eta_z^{\xi,M}(Z_r,Y) \right],
\end{split}
\]
where the last step follows from
\BE\label{Eqn:app_smooth_as_con}
\lim_{\sigma\to0}\eta_z^{\sigma,\xi,M}(Z_r,Y)  = \eta_z^{\xi,M}(Z_r,Y)\quad \text{a.s.}
\EE
To see \eqref{Eqn:app_smooth_as_con}, note that Lemma \ref{Lem:eta_smooth_pro} shows $\lim_{\sigma\to0}\eta_z^{\sigma,\xi,M}(z_r,y)  = \eta_z^{\xi,M}(z_r,y)$ whenever $\eta_z^{\xi,M}$ is continuous at $(z_r,y)$. In particular, our construction of $\eta_z^{\xi,M}$ (see \eqref{Eqn:eta_smooth_1}) guarantees that $\eta_z^{\xi,M}$ is continuous at $(z_r,y)\in\mathbb{R}\times\{y_1,\ldots,y_Q\}$. Similar to the proof of Lemma \ref{Lem:eta_smooth_pro}-(P.1), it can be shown that the set of points at which $\eta_z^{\xi,M}$ is discontinuous has zero probability (with respect to the distribution of $(Z_r,Y)$). 

It remains to prove
\[
\begin{split}
\lim_{\xi\to0,M\to\infty}\mathbb{E}\left[ Z\cdot (\eta_z^{\xi,M}(Z_r,Y)-\eta_z(Z_r,Y))\right]=0.
\end{split}
\]
Similar to \ref{Lem:eta_smooth_pro}-(P.1), it can be shown that $\eta_z^{\xi,M}(Z_r,Y)$ is almost surely bounded w.r.t, the distribution of $Z_r,Y$. Also, by Lemma \ref{Lem:poly}, $\eta_z(Z_r,Y)=\eta_z(Z_r,f(Z))\le C\cdot (1+\|(Z_r,Z)\|)$. Hence, we could apply DCT to show
\[
\begin{split}
&\lim_{\xi\to0,M\to\infty}\mathbb{E}\left[ Z\cdot (\eta_z^{\xi,M}(Z_r,Y)-\eta_z(Z_r,Y))\right]\\
&=\mathbb{E}\left[ \lim_{\xi\to0,M\to\infty}Z\cdot (\eta_z^{\xi,M}(Z_r,Y)-\eta_z(Z_r,Y))\right]\\
&=0.
\end{split}
\]

\section{Proofs of Lemma \ref{Lem:MSE_SE} and Lemma \ref{Lem:Gaussian_condition}}\label{App:SE_conditions}

\subsection{Proof of Lemma \ref{Lem:MSE_SE}}\label{App:MSE_SE}
From \eqref{Eqn:SE_Gaussian}, the state evolution recursion for $V_r$ reads
\[
V_r^{t+1}=\Phi\left(\phi(V_r^t)\right),
\]
where $V_r^0=1$.
Lemma \ref{Lem:phi} in Appendix \ref{App:SE_maps} implies that the composite function $\Phi(\phi(V_r)))$ is continuously increasing in $[0,1]$. Further, $\Phi(\phi(V_r))\ge0$ for any $V_r\in[0,1]$. An induction argument shows that $\{V_r^t\}$ monotonically converges if and only if
\BE\label{Eqn:Gaussian_condition_1}
\Phi(\phi(V_r^1)) \le V_r^1,
\EE
which holds since $V_r^1=1$, $\phi(1)\ge0$ (see Lemma \ref{Lem:phi}) and $\Phi(v)\le1$ for all $v\ge0$. Further, if $\phi(1)\neq\infty$, then the sequence $\{V_r^t\}$ converges to $V_r^\star$, where $V_r^\star$ is the smallest $v$ so that the following holds for all $V_r\in[v,1]$, i.e.,
\[
V_r^\star = \inf\Big\{v\in[0,1]\,:\,\Phi\left(\phi(v_r)\right)< v_r, \forall v_r\in[v,1]\Big\}.
\]
Substituting in the definitions of $\phi$ and $\Phi$ in \eqref{Eqn:SE_Gaussian}, it is straightforward to show that the above definition of $V_r^\star$ is equivalent to that in \eqref{Eqn:gamma_leftFP}.

For the degenerate case where $\phi(1)=\infty$ (which corresponds to $\mmse_z(1)=1$ and happens when $f$ is an even function), $P(1)=0$ and so $V_r^\star$ in \eqref{Eqn:gamma_leftFP}) is not defined. Lemma \ref{Lem:MSE_SE} holds by defining $V_r^\star=1$ for this degenerate case.

\subsection{Proof of Lemma \ref{Lem:Gaussian_condition}}\label{App:Proof_Condition}

Throughout this paper, we denote $\phi(0)\Mydef\lim_{v_r\to0}\phi(v_r)$. In our discussions below, we shall exclude two cases for which the Lemma holds trivially: (1) $f$ is invertible, it is easy to show $P(v_r)=\delta-1>0$ and $\mathsf{MSE}^\star_\Lambda=0$; (2) $f(Z)$ is independent of $Z$ (e.g., $f(Z)$ is a constant). Clealry, $\mathsf{MSE}^\star_\Lambda=1$. At the same time, $\mmse_z(v_r)=v_r$, $\phi(v_r)=+\infty$, and $P(v_r)=0$ for all $v_r$.

\subsubsection{Proof of (i)}

Consider two following cases.
\begin{itemize}
\item \textit{Case 1:} $d(Y)=0$;
\item \textit{Case 2:} $d(Y)> 0$.
\end{itemize}

For Case 1, we next show that the condition \eqref{Eqn:Gaussian_perfect2} does not hold, and further perfect recovery is impossible, i.e., $\mathsf{MSE}^\star_\Lambda\neq0$. To see this, we consider $P(0)$:
\begin{equation}\label{Eqn:P_0_first}
\begin{split}
&P(0) = \mathbb{E}\left[\frac{\phi(0)}{\phi(0)+\Lambda}\right]-1+\delta\left[1-\lim_{v_r\to0}\frac{\mmse_z(v_r)}{v_r}\right]\\
&\overset{(a)}{=}\mathbb{E}\left[\frac{\phi(0)}{\phi(0)+\Lambda}\right]-1+\delta\left[1-\lim_{v_r\to0}\frac{\mmse(Z,v_r^{-1}-1|Y)}{v_r}\right]\\
&=\mathbb{E}\left[\frac{\phi(0)}{\phi(0)+\Lambda}\right]-1\\
&\quad +\delta\left[1-\lim_{\snr_{\text{eff}}\to\infty}(\snr_{\text{eff}}+1)\mmse(Z,\snr_{\text{eff}}|Y)\right]\\
&\qquad (\snr_{\text{eff}}:=\frac{1}{v}-1)\\
&\overset{(b)}{=}\mathbb{E}\left[\frac{\phi(0)}{\phi(0)+\Lambda}\right]-1+\delta\left[1-\mathscr{D}(Z|Y)\right]\\
&\overset{(c)}{=}\mathbb{E}\left[\frac{\phi(0)}{\phi(0)+\Lambda}\right]-1+\delta d(Y)\\
&\overset{(d)}{=}\mathbb{E}\left[\frac{\phi(0)}{\phi(0)+\Lambda}\right]-1\\
&<0,
\end{split}
\end{equation}
where step (a) is from the definition of $\mmse_z$ below \eqref{Eqn:SE_Gaussian}, and step (b) is from definition of $\mathscr{D}(Z|Y)$ (see Definition \eqref{Def:MMSE_dim}) and the fact that $\mmse_z(0)=0$, step (c) is from  Lemma \ref{Lem:MMSE_D_ID}, and the last step is from $0\le\phi(0)<+\infty$ (see Lemma \ref{Lem:phi}).

By continuity of $P$, there exists $\hat{v}_r\in(0,1]$ such that $P(\hat{v}_r)\le0$, implying that \eqref{Eqn:Gaussian_perfect2} does not hold. Further, from Lemma \ref{Lem:MSE_SE}, $\lim_{t\to\infty} V_r^{t}=V_r^{\star}\ge \hat{v}_r>0$. By the strict monotonicity of $\phi$, $\phi(V_r^{\star})>\phi(\hat{v}_r)>0$. Then, from \eqref{Eqn:MSE_lambda_def}, $\mathsf{MSE}^\star_\Lambda\Mydef \mathsf{MSE}_\Lambda(\phi(V_r^\star))>0$.

\vspace{5pt}

The proof for Case 2 is also straightforward. From Lemma \ref{Lem:MSE_SE}, we have the following equivalence:
\[
V_r^{\star}\Mydef\lim_{t\to\infty} V_r^{t}=0 \quad \Longleftrightarrow \quad \eqref{Eqn:Gaussian_perfect2} \text{ holds}.
\]
Furthermore, for Case 2, \eqref{Eqn:Gaussian_perfect2} guarantees $\phi(0)=0$. On the other hand,
\[
V_r^{\star}=0\text{ and }\phi(0)=0\  \Longleftrightarrow \  \mathsf{MSE}^\star_\Lambda\Mydef \mathsf{MSE}_\Lambda(\phi(V_r^\star))=0.
\]
This proves the equivalence between \eqref{Eqn:Gaussian_perfect2} and $\mathsf{MSE}^\star_\Lambda=0$.

\subsubsection{Proof of (ii)} 

From \eqref{Eqn:P_0_first},
\[
\begin{split}
P(0) &=\mathbb{E}\left[\frac{\phi(0)}{\phi(0)+\Lambda}\right]-1+\delta\cdot d(Y)\\
&=
\begin{cases}
\mathbb{E}\left[\frac{\phi(0)}{\phi(0)+\Lambda}\right] -1<0 & \text{if }d(Y)=0\\
-1+\delta\cdot d(Y) & \text{if }d(Y)\neq 0
\end{cases}
\end{split}
\]
where we used the fact that $\phi(0)=0$ when $d(Y)\neq 0$. Overall, if $\delta < 1/d(Y)$, we have $P(0)<0$. By continuity of $P$, \eqref{Eqn:Gaussian_perfect2} does not hold, which together with part (i) shows that $\mathsf{MSE}^\star_\Lambda\neq0$. Hence, $\delta\ge1/d(Y)$ is necessary for achieving $\mathsf{MSE}^\star_\Lambda=0$.

Now suppose $\delta > 1/d(Y)$. We next prove there exists a spectrum $P_\Lambda$ such that $\mathsf{MSE}^\star_\Lambda=0$. From part (i), this is equivalent to checking there exists $P_\Lambda$ such that $P(v_r)>0$ for all $v_r\in(0,1]$, which can be rewritten as (from \eqref{Eqn:Pv_def}):
\begin{align}\label{Eqn:exist_obj}
\mathbb{E}\left[\frac{\phi(v_r)}{\phi(v_r) +\Lambda}\right]  &>g(v_r), \quad\forall v_r\in(0,1].
\end{align}
We first prove
\BE\label{Eqn:g_strict_less_1}
\sup_{v_r\in(0,1]}g(v_r)<1.
\EE
where $g(v_r)$ is defined by (see \eqref{Eqn:Pv_def})
\[
g(v_r)\Mydef 1-\delta\left(1-\frac{\mmse_z(v_r)}{v_r}\right).
\]
As shown in \eqref{Eqn:MMSE_le_v} (Appendix \ref{App:SE_maps}), $\mmse_z(v_r)\le v_r$ for all $v_r\in(0,1)$. Further, the inequality is strict when $Y:=f(Z)$ and $Z$ are not independent, which was assumed to hold (see discussions at the start of this appendix). Therefore,
\[
g(v_r)< 1,\quad \forall v_r\in(0,1).
\]
Further, when $v_r=1$, $\mmse_z(1)<1$ (by assumption) and so $g(1)<1$. When $v_r\to0$,
\BE\label{Eqn:g_0}
\begin{split}
g(0)&\Mydef\lim_{v_r\to0_+}g(v_r)\\
&=\lim_{v_r\to0}1-\delta\left(1-\frac{\mmse_z(v_r)}{v_r}\right)\\
&=1-\delta(1-\mathscr{D}(Z|f(Z)))\\
&=1-\delta \cdot d(Y)\\
&<0,
\end{split}
\EE
where the last equality is due to Lemma \ref{Lem:MMSE_D_ID} and the last inequality is from the assumption $\delta > 1/d(Y)$.  Combining the above facts (together with the continuity of $g$) proves \eqref{Eqn:g_strict_less_1}.

We are now in the position to prove \eqref{Eqn:exist_obj}. Consider the following two-point distribution parameterized by $P\in(0,1)$ and $a\in(0,\delta)$:
\BE\label{Eqn:two_point}
P_\Lambda=
\begin{cases}
a & \text{with prob. } P\\
\frac{\delta - a P}{1-P} & \text{with prob. } 1-P.
\end{cases}
\EE
This distribution satisfies the normalization assumption $\mathbb{E}[\Lambda]=\delta$. Under this distribution, the left-hand side of \eqref{Eqn:exist_obj} becomes
\BE
\begin{split}
\mathbb{E}\left[\frac{\phi(v_r)}{\phi(v_r) +\Lambda}\right] &= P\cdot \frac{\phi(v_r)}{\phi(v_r)+a}+(1-P)\cdot \frac{\phi(v_r)}{\phi(v_r)+b}\\
&> P\cdot \frac{\phi(v_r)}{\phi(v_r)+a}\quad(b>0),
\end{split}
\EE
where
\[
 b \Mydef \frac{\delta - a P}{1-P}.
\]
We next show that there exists $a\in(0,\delta)$ and $P\in(0,1)$ for which the following holds,
\[
P\cdot \frac{\phi(v_r)}{\phi(v_r)+a} > g(v_r).\quad \forall v_r\in(0,1].
\]
Since $\phi(v_r)$ is non-negative (see Lemma \ref{Lem:phi}), It suffices to prove
\BE\label{Eqn:D_def}
a < \phi(v_r)\cdot\left(\frac{P}{g(v_r)}-1\right),
\EE
for all $v_r\in\mathbb{D}\Mydef \left\{ v_r\in(0,1]\,:\,  g(v_r)\ge0\right\}$. Consider an arbitrary $P\in\left(\sup_{v\in\mathbb{D}}\ g(v) ,1\right)$. Due to \eqref{Eqn:g_strict_less_1}, this choice of $P$ is valid.

Let
\[
a_{\min}(P)\Mydef\inf_{v_r\in\mathbb{D}}\ \phi(v_r)\cdot\left(\frac{P}{g(v_r)}-1\right).
\]
We conclude our proof by showing $a_{\min}(P)>0$ for $P\in\left(\sup_{v\in\mathbb{D}}\ g(v) ,1\right)$, and setting $a\in\left(0,\min(a_{\min}(P),\delta)\right)$. To this end, we note
\BS
\BE\label{Eqn:impact_spec_final1}
\inf_{v_r\in\mathbb{D}}\ \phi(v_r)>0,
\EE
and
\BE\label{Eqn:impact_spec_final2}
\inf_{v_r\in\mathbb{D}}\ \left(\frac{P}{g(v_r)}-1\right)>0.
\EE
\ES
Eq.~\eqref{Eqn:impact_spec_final1} is due to the following facts: (i) $\phi(v_r)>0$ for all $v_r\neq0$ when $f$ is not invertible (see Lemma \ref{Lem:phi}); and (ii) $\mathbb{D}\Mydef \left\{ v_r\in(0,1]\,:\,  g(v_r)\ge0\right\}\subset (\hat{v},1]$ for some $\hat{v}>0$. (Since $g(0)<0$ and $g$ is continuous.) Eq.~\eqref{Eqn:impact_spec_final2} is due to the definition $P\in\left(\sup_{v\in\mathbb{D}}\ g(v) ,1\right)$.

This completes the proof.

\section{Proof of Lemma \ref{Lem:impact}}\label{App:Proof_Impact_MSE}

Lemma \ref{Lem:MSE_SE} shows that the MSE of the {\Alg} algorithm is given by
\BE\label{Eqn:MSE_lambda}
\mathsf{MSE}^\star_\Lambda\Mydef\mathbb{E}\left[ \frac{\phi(V_\Lambda^\star)}{\phi(V_\Lambda^\star) + \Lambda} \right],
\EE
where
\BE\label{Eqn:funs_to_plot}
V_\Lambda^{\star}=\inf\left\{v\in[0,1]\,:\,\mathbb{E}\left[ \frac{\phi(v_r)}{\phi(v_r) + \Lambda} \right] > g(v_r), \forall v_r\in[v,1]\right\}.
\EE
Here, the subscript $(\cdot)_{\Lambda}$ is added to emphasize the dependency on the spectrum $P_\Lambda$. Since $\mathbb{E}\left[ \frac{\phi(v_r)}{\phi(v_r) + \Lambda} \right]\ge0$, $V_\Lambda^{\star}$ can be equivalently defined as
\BE\label{Eqn:v_hat_def}
{V}_\Lambda^{\star}=\inf\left\{v\in[0,1]\,:\,\mathbb{E}\left[ \frac{\phi(v_r)}{\phi(v_r) + \Lambda} \right] > G(v_r;\delta), \forall v_r\in[v,1]\right\}.
\EE
where $G(v_r;\delta)$ is defined in \eqref{Eqn:g_def}. We next prove that $v_\Lambda^{\star}$ must satisfy
\BE\label{Eqn:g_conditions}
\mathsf{MSE}^\star_\Lambda\Mydef\mathbb{E}\left[ \frac{\phi(V_\Lambda^{\star})}{\phi(V_\Lambda^{\star}) + \Lambda} \right]=G(V_\Lambda^\star;\delta).
\EE
Eq.~\eqref{Eqn:Gaussian_condition_1} implies $\mathbb{E}\left[\frac{\phi(1)}{\phi(1)+\Lambda}\right]\ge g(1)$. Further, $\phi(1)\ge0$, and thus $\mathbb{E}\left[\frac{\phi(1)}{\phi(1)+\Lambda}\right]\ge 0$. Together, we have $\mathbb{E}\left[\frac{\phi(1)}{\phi(1)+\Lambda}\right]\ge G(1;\delta)$. The only possibility \eqref{Eqn:g_conditions} does not hold is when
\BE\label{Eqn:g_conditions2}
\mathbb{E}\left[ \frac{\phi(v_r)}{\phi(v_r) + \Lambda} \right] > G(v_r;\delta)\quad \forall v_r\in[0,1].
\EE
We next show that \eqref{Eqn:g_conditions2} cannot hold. We only need to prove \eqref{Eqn:g_conditions2} cannot hold for $v_r=0$. We consider two case $d(Y)>0$ and $d(Y)=0$ separately. When $d(Y)>0$, Lemma \ref{Lem:phi} guarantees $\phi(0)=0$, and thus $
\mathbb{E}\left[ \frac{\phi(0)}{\phi(0) + \Lambda} \right]=0\le G(0;\delta),$
where the inequality is from the definition of $G(\cdot)$. When $d(Y)=0$, as shown in \eqref{Eqn:P_0_first}, we have
\[
\begin{split}
\lim_{v_r\to0}1-\delta\cdot\left[1-\frac{\mmse_z(v_r)}{v_r}\right]&=1-\delta\left[1-\mathscr{D}(Z|Y)\right]\\
&=1-\delta\cdot d(Y)\\
&=1.
\end{split}
\]
Hence, from \eqref{Eqn:g_def}, we have $G(0;\delta)=1$. On the other hand, $\mathbb{E}\left[ \frac{\phi(0)}{\phi(0) + \Lambda} \right]\le1$ since $\phi(0)\ge0$. Hence, \eqref{Eqn:g_conditions2} cannot hold at $v_r=0$. Combining the previous arguments proves \eqref{Eqn:g_conditions}.

\vspace{5pt}

At this point, we can compare $\mathsf{MSE}_{\Lambda_1}^\star$ and $\mathsf{MSE}_{\Lambda_2}^\star$. Note that $C/( C +  \Lambda )^{-1}$ is a convex function of $\Lambda$ for every $C>0$. Hence, Lemma \ref{Lem:convex_Lorenz} implies that the following holds for all $\gamma_l>0$:
\[
\Lambda_1 \succeq_L \Lambda_2\quad \Longrightarrow\quad \mathbb{E}\left[ \frac{\phi(v_r)}{ \phi(v_r) + \Lambda_1 } \right] \ge \mathbb{E}\left[ \frac{\phi(v_r)}{ \phi(v_r) + \Lambda_2 }  \right],
\]
where $\succeq_L$ means spikier in the Lorenz sense (see Definition \ref{Def:Lorenz}). From the definition of $V_\Lambda^\star$, we have
\BE\label{Eqn:impact_v}
\Lambda_1 \succeq_L \Lambda_2\quad \Longrightarrow\quad V_{\Lambda_1}^{\star}\le V_{\Lambda_2}^{\star}.
\EE
To compare $\mathsf{MSE}_{\Lambda_1}^\star$ and $\mathsf{MSE}_{\Lambda_2}^\star$, it is not very convenient to directly use \eqref{Eqn:MSE_lambda} since the expectation in \eqref{Eqn:MSE_lambda} itself depends on the distribution of $\Lambda$. Instead, due to \eqref{Eqn:g_conditions}, we only need to compare $G(V_{\Lambda_1}^\star;\delta)$ and $G(V_{\Lambda_2}^\star;\delta)$. Since $V_{\Lambda_1}^{\star}\le V_{\Lambda_2}^{\star}$, the claims in the lemma follow directly.

\section{Proof of Theorem \ref{The:1}}\label{App_Proof_Achie}


From Lemma \ref{Lem:Gaussian_condition}, the {\Alg} algorithm cannot achieve perfect recovery at finite $\delta$ if $d(Y)=0$. Therefore, we will only consider the case $d(Y)>0$. In this case, we have $\phi(0)=0$ (see Lemma \ref{Lem:phi}). \vspace{5pt}

When $\phi(0)=0$, we have $\mathsf{MSE}_\Lambda^\star=0$ if and only if $V_\Lambda^\star=0$, where $V_\Lambda^\star$ is defined in \eqref{Eqn:v_hat_def}. Therefore, we can equivalently define $\delta_\Lambda^{\mathsf{alg}}$ as
\BE\label{Eqn:delta_lambda2}
\delta_\Lambda^{\mathsf{alg}} = \inf\left\{\delta\,:\, V_\Lambda^\star=0\right\}.
\EE
We have proved in \eqref{Eqn:impact_v} that if $\Lambda_1 \succeq_L \Lambda_2$, then $ V_{\Lambda_1}^{\star}\le V_{\Lambda_2}^{\star}$ and hence $\delta_{\Lambda_1}^{\mathsf{alg}}\le\delta_{\Lambda_2}^{\mathsf{alg}}$ (from \eqref{Eqn:delta_lambda2}).

\section{Proof of Lemma \ref{Lem:noise_sensitivity}}
\label{App:proof_noise}
Throughout this appendix, we assume $\delta>\delta_\Lambda^\mathsf{alg}\ge1/d(Y)$, where the second inequality is a consequence of the necessary condition for perfect reconstruction given in Lemma \ref{Lem:Gaussian_condition}. 

We first collect some auxiliary lemmas in Section \ref{Sec:Aux} before we present our main proof in Section \ref{Sec:Noise_main}.

\subsection{Auxiliary Results}\label{Sec:Aux}

We denote
\BE\label{Eqn:App_RVs}
\begin{split}
Y &= f(Z),\\
Y_\sigma&=f(Z+\sigma_w W),\\
U_\sigma&= Z+\sigma_w W,\\
Z_r&=(1-v_r)Z+\sqrt{v_r(1-v_r)} N,\\
R&=\sqrt{v}_r Z - \sqrt{1-v_r} N,
\end{split}
\EE
where $Z,N,W,R$ are standard Gaussian RVs, and $(Z,N,W)$ are mutually independent and $R\Perp (Z_r,W)$. (Here, $A \Perp B$ denotes $A,B$ are independent RVs.) Notice that
\[
Z=Z_r + \sqrt{v_r}R.
\]

\begin{lemma}\label{Lem:aux_noisy2}
Let $\mmse_z(v_r,\sigma_w^2)$ be the noisy MMSE defined in \eqref{Eqn:MMSE_noisy}. Define
\BE\label{Eqn:MMSE_app_def}
\begin{split}
\mmse_{\mr{app}}(v_r,\sigma^2_w)  \Mydef  & v_r \mathbb{E}\left( \left( \frac{\sigma^2_wR}{v_r+\sigma^2_w} -\frac{\sqrt{v}\sigma_w W}{v_r+\sigma^2_w} \right)^2 \mathbb{I}(\mathcal{E}_1) \right)\\
&+v_r\mathbb{E}\left( R^2 \mathbb{I}(\mathcal{E}_1^c)\right).
\end{split}
\EE
where
\BE\label{Eqn:E1_def}
\mathcal{E}_1\Mydef\{U_\sigma\in \mathbb{R}\backslash\mathcal{Q}_f \},
\EE
$U_\sigma=Z+\sigma_wN$ and $\mathcal{E}_1^c$ is the complement of $\mathcal{E}_1$. Then, the following holds
\BE
\lim_{v_r+\sigma^2_w\to0} \frac{1}{v_r}\cdot\Big( \mmse_z(v_r,\sigma^2_w)-\mmse_{\mr{app}}(v_r,\sigma^2_w)  \Big)=0.
\EE
\end{lemma}

\begin{IEEEproof}
By the definitions of $\mmse_z$ and $\mmse_{\mr{app}}$, we have
\[
\begin{split}
&\frac{1}{v_r}\left(\mmse_z(v_r,\sigma_w^2)-\mmse_{\mr{app}}(v_r,\sigma_w^2)\right)\\
&= \mathbb{E}\bigg[ \frac{1}{v_r}\left( Z-\mathbb{E}[Z|Y_\sigma,Z_r] \right)^2-\\
&\ \left( \frac{\sigma^2_wR}{v_r+\sigma^2_w} -\frac{\sqrt{v_r}\sigma_w W}{v_r+\sigma^2_w}  \right)^2 \mathbb{I}(\mathcal{E}_1) -  R^2 \mathbb{I}(\mathcal{E}_1^c)\bigg].
\end{split}
\]
We bound the term inside the expectation by
\BE\label{Eqn:noisy_DCT}
\begin{split}
&\Bigg|\frac{1}{v_r}\left( Z-\mathbb{E}[Z|Y_\sigma,Z_r] \right)^2\\
&\ -\left( \frac{\sigma^2_wR}{v_r+\sigma^2_w} -\frac{\sqrt{v_r}\sigma_w W}{v_r+\sigma^2_w}  \right)^2 \mathbb{I}(\mathcal{E}_1) -  R^2 \mathbb{I}(\mathcal{E}_1^c)\Bigg|\\
&\overset{(a)}{=}\Bigg|(R-\mathbb{E}[R|Y_\sigma,Z_r])^2\\
&\  - \left( \frac{\sigma^2_wR}{v_r+\sigma^2_w} -\frac{\sqrt{v_r}\sigma_w W}{v_r+\sigma^2_w} \right)^2 \mathbb{I}(\mathcal{E}_1) -  R^2 \mathbb{I}(\mathcal{E}_1^c) \Bigg|\\
&\le 2(R^2 + \mathbb{E}^2[R|Y_\sigma,Z_r])\\
&\ +\left( \frac{\sigma^2_wR}{v_r+\sigma^2_w} -\frac{\sqrt{v_r}\sigma_w W}{v_r+\sigma^2_w} \right)^2 \mathbb{I}(\mathcal{E}_1) + R^2 \mathbb{I}(\mathcal{E}_1^c)\\
&\le 2(R^2 + \mathbb{E}^2[R|Y_\sigma,Z_r])+2\left(R^2+\frac{1}{4}W^2\right)  + R^2\\
\end{split}
\EE
where step (a) follows from the definition $Z = Z_r + \sqrt{v_r}R$. Since 
\[\mathbb{E}\left[2(R^2 + \mathbb{E}^2[R|Y_\sigma,Z_r])+\left(R^2+\frac{1}{4}W^2\right)  + R^2\right]<\infty,
\]
by dominated convergence theorem we have
\[
\begin{split}
&\lim_{v_r+\sigma^2_w\to0} \frac{1}{v_r}\mmse_z(v_r,\sigma^2_w)-\mmse_{\mr{app}}(v_r,\sigma^2_w)\\
&=\lim_{v_r+\sigma^2_w\to0} \frac{1}{v_r}\mathbb{E}\Big(Z-\mathbb{E}[Z|Y_\sigma,Z_r]\Big)^2-\mmse_{\mr{app}}(v_r,\sigma^2_w) \\
&= \mathbb{E}\Bigg[ \lim_{v_r+\sigma^2_w\to0} \frac{1}{v_r}\left( Z-\mathbb{E}[Z|Y_\sigma,Z_r] \right)^2\\
&\ -\left( \frac{\sigma^2_wR}{v_r+\sigma^2_w} -\frac{\sqrt{v}\sigma_w W}{v_r+\sigma^2_w} \right)^2 \mathbb{I}(\mathcal{E}_1) -  R^2 \mathbb{I}(\mathcal{E}_1^c)\Bigg]\\
&=\mathbb{E}[T_1] + \mathbb{E}[T_2],
\end{split}
\]
where
\BE\label{Eqn:T12_def}
\begin{split}
T_1 \Mydef &\lim_{v_r+\sigma^2_w\to0} \frac{1}{v_r}\left( Z-\mathbb{E}[Z|Y_\sigma,Z_r] \right)^2\mathbb{I}(\mathcal{E}_1)\\
&\ -\left( \frac{\sigma^2_wR}{v_r+\sigma^2_w} -\frac{\sqrt{v}\sigma_w W}{v_r+\sigma^2_w}  \right)^2 \mathbb{I}(\mathcal{E}_1)\\
T_2\Mydef & \lim_{v_r+\sigma^2_w\to0} \frac{1}{v_r}\left( Z-\mathbb{E}[Z|Y_\sigma,Z_r] \right)^2\mathbb{I}(\mathcal{E}_1^c)-R^2 \mathbb{I}(\mathcal{E}_1^c).
\end{split}
\EE
We next prove $\mathbb{E}[T_1]=0$ and $\mathbb{E}[T_2]=0$ separately. \vspace{5pt}

\textit{Analysis of $T_1$:} Direct calculations yield
\BS
\begin{align}
&\mathbb{E}[Z|Y_\sigma=y,Z_r=z_r] \\
&=\frac{\int_{f^{-1}(y)} \mathcal{N}(u;z_r,v_r+\sigma^2_w)\frac{v_r u +\sigma^2_w z_r}{v_r+\sigma^2_w} \mr{d} u}{\int_{f^{-1}(y)} \mathcal{N}(u;z_r,v_r+\sigma^2_w) \mr{d} u}\\
&=z_r + \frac{v_r}{v_r+\sigma^2_w}\frac{\int_{\mathcal{I}} u \mathcal{N}(u;0,v_r+\sigma^2_w)\mr{d} u}{\int_{\mathcal{I}} \mathcal{N}(u;0,v_r+\sigma^2_w)\mr{d} u},
\end{align}
\ES
where $\mathcal{N}(x;m,v)\Mydef \frac{1}{\sqrt{2\pi v}}\exp\left(-\frac{(x-m)^2}{2v}\right)$, $\mathcal{I}\Mydef f^{-1}(y)-z_r$, and the second step is due to a change of variable. We emphasize that $\mathcal{I}$ is indexed by $y$ and $z_r$, but to make notation light we did not make such dependency explicit. When $f^{-1}(y)$ is a discrete set, the integration is simply replaced by a summation. 

With slight abuse of notations, let $(z,w,n,y,z_r)$ be an instance of $(Z, W,N,Y_\sigma,Z_r)$. From \eqref{Eqn:App_RVs}, we have $z_r =(1-v_r)z+\sqrt{v_r(1-v_r)}n $ and $y=f(z+\sigma_w w)$. Then, 
\BE\label{Eqn:App_CondMean}
\begin{split}
&\frac{1}{v_r}\left( z-\mathbb{E}[Z|Y_\sigma=y,Z_r=z_r] \right)^2\\
&=\frac{1}{v_r}\left( z-z_r - \frac{v_r}{v_r+\sigma^2_w}\frac{\int_{\mathcal{I}} u \mathcal{N}(u;0,v_r+\sigma^2_w)\mr{d} u}{\int_{\mathcal{I}} \mathcal{N}(u;0,v_r+\sigma^2_w)\mr{d} u}\right)^2\\
&=\frac{1}{v_r}\left( v_r z - \sqrt{v_r(1-v_r)}n- \frac{v_r}{v_r+\sigma^2_w}\frac{\int_{\mathcal{I}} u \mathcal{N}(u;0,v_r+\sigma^2_w)\mr{d} u}{\int_{\mathcal{I}} \mathcal{N}(u;0,v_r+\sigma^2_w)\mr{d} u}\right)^2\\
&=\left( \sqrt{v_r}z - \sqrt{1-v_r}n- \frac{\sqrt{v_r}}{{v_r+\sigma^2_w}}\frac{\int_{\mathcal{I}} u \mathcal{N}(u;0,v_r+\sigma^2_w)\mr{d} u}{\int_{\mathcal{I}} \mathcal{N}(u;0,v_r+\sigma^2_w)\mr{d} u}\right)^2\\
&=\left( r- \frac{\sqrt{v_r}}{{v_r+\sigma^2_w}}\frac{\int_{\mathcal{I}} u \mathcal{N}(u;0,v_r+\sigma^2_w)\mr{d} u}{\int_{\mathcal{I}} \mathcal{N}(u;0,v_r+\sigma^2_w)\mr{d} u}\right)^2,
\end{split}
\EE
where the last step is due to the definition of the r.v. $R$ in \eqref{Eqn:App_RVs}.
Recall that $\mathcal{E}_1=\{Z+\sigma_\sigma W\in\mathbb{R}\backslash \mathcal{Q}_f\}$, where $\mathcal{Q}_f=\{z:f^{-1}(f(z)) \text{ contains an interval}\}$. Conditioned on $\mathcal{E}_1$, $f^{-1}(y)$ is a discrete set, and so is $\mathcal{I}\Mydef f^{-1}(y)-z_r$. Hence, conditioned on $\mathcal{E}_1$, the integration in the above formula is replaced by summation over the elements in $\mathcal{I}$. Since $y=f(z+\sigma_w w)$, we have $z+\sigma_w w\in f^{-1}(y)$. Further, $z=z_r+\sqrt{v_r}r$, and thus
\[
z+\sigma_w w - z_r = \sqrt{v_r} r+\sigma_w w \in f^{-1}(y)-z_r=\mathcal{I}.
\]
Let $\mathcal{E}_2$ be the event that there does not exist $x\in f^{-1}(y)$ and $x\neq z+\sigma_w w$ such that $|z+\sigma_w w|=|x-z_r|$. Then, on the event $\mathcal{E}_1\cap \mathcal{E}_2$,  
\[
\lim_{v_r+\sigma^2_w\to0} \frac{\int_{\mathcal{I}} u \mathcal{N}(u;0,v_r+\sigma^2_w)\mr{d} u}{\int_{\mathcal{I}} \mathcal{N}(u;0,v_r+\sigma^2_w)\mr{d} u}-(\sqrt{v_r}r+\sigma_ww)=0.
\]
{This is due to the fact that ${\mathcal{I}}$ is a discrete set and the term with minimum exponent dominates.} Hence,
\[
\begin{split}
&\lim_{v_r+\sigma^2_w\to0}\frac{\sqrt{v_r}}{{v_r+\sigma^2_w}}\frac{\int_{\mathcal{I}} u \mathcal{N}(u;0,v_r+\sigma^2_w)\mr{d} u}{\int_{\mathcal{I}} \mathcal{N}(u;0,v_r+\sigma^2_w)\mr{d} u}\\
&\ -\frac{\sqrt{v_r}(\sqrt{v_r}r+\sigma_w w) }{v_r+\sigma^2_w}\\
&=0,
\end{split}
\]
Hence, conditioned $\mathcal{E}_1\cap\mathcal{E}_2$, we have (see \eqref{Eqn:App_CondMean})
\[
\begin{split}
&\frac{1}{v_r}\left( z-\mathbb{E}[Z|Y_\sigma=y,Z_r=z_r] \right)^2\\
&=\left( r- \frac{\sqrt{v_r}}{{v_r+\sigma^2_w}}\frac{\int_{\mathcal{I}} u \mathcal{N}(u;0,v_r+\sigma^2_w)\mr{d} u}{\int_{\mathcal{I}} \mathcal{N}(u;0,v_r+\sigma^2_w)\mr{d} u}\right)^2+o(v_r+\sigma_w^2)\\
&=\left(r-\frac{\sqrt{v_r}(\sqrt{v_r}r+\sigma_w w) }{v_r+\sigma^2_w}\right)^2+o(v_r+\sigma_w^2)\\
&=\left(\frac{\sigma_w^2 r-\sqrt{v_r}\sigma_w w}{v_r+\sigma^2_w}\right)^2+o(v_r+\sigma_w^2)
\end{split}
\]

Since $\mathbb{P}(\mathcal{E}_2^c)=0$, overall we have
\[
\begin{split}
\mathbb{P}\left(T_1=0\right) &= \mathbb{P}\Bigg\{\lim_{v_r+\sigma^2_w\to0} \mathbb{I}(\mathcal{E}_1)\cdot\bigg[\frac{1}{v_r}\left( Z-\mathbb{E}[Z|Y_\sigma,Z_r] \right)^2\\
&\ -\Big( \frac{\sigma^2_wR}{v_r+\sigma^2_w} -\frac{\sqrt{v}\sigma_w W}{v_r+\sigma^2_w}  \Big)^2 \bigg]=0\Bigg\}\\
&=1.
\end{split}
\]
Hence, $\mathbb{E}[T_1]=0$.\vspace{5pt}

\textit{Analysis of $T_2$:} Let $(z,n,w,r,y,z_r)$ be an instance of $(Z,N,W,R,Y_\sigma,Z_r)$. From \eqref{Eqn:App_CondMean}, we have
\BS
\begin{align}
&\frac{1}{v_r}\left( z-\mathbb{E}[Z|y,z_r] \right)^2\\
&=\left( \sqrt{v_r}z - \sqrt{1-v_r}n- \frac{\sqrt{v_r}}{{v_r+\sigma^2_w}}\frac{\int_{\mathcal{I}} u \mathcal{N}(u;0,v_r+\sigma^2_w)\mr{d} u}{\int_{\mathcal{I}} \mathcal{N}(u;0,v_r+\sigma^2_w)\mr{d} u}\right)^2\\
&=\Big(\sqrt{v_r}z - \sqrt{1-v_r}n- \frac{\sqrt{v_r}}{{v_r+\sigma^2_w}}\frac{\int_{\hat{\mathcal{I}}} u \mathcal{N}(u;0,1)\mr{d} u}{\int_{\hat{\mathcal{I}}} \mathcal{N}(u;0,1)\mr{d} u}\Big)^2 \label{Eqn:noisy_MMSE_continu}
\end{align}
\ES
where $\hat{\mathcal{I}}\Mydef \frac{\mathcal{I}}{\sqrt{v_r+\sigma^2_w}}=\frac{f^{-1}(y)-z_r}{\sqrt{v_r+\sigma^2_w}}$.
Let $\mathcal{E}_3$ be the event that $z_r$ is not on the boundary of $f^{-1}(y)$. Consider the third term in \eqref{Eqn:noisy_MMSE_continu} under  $\mathcal{E}_1^c\cap \mathcal{E}_3$. From the definition of $\mathcal{E}_1^c$, $\hat{\mathcal{I}}$ only consists of intervals. If $0$ is an interior point of $\hat{\mathcal{I}}$, we have 
\[
\left| \int_{\hat{\mathcal{I}}} \mathcal{N}(u;0,1)\mr{d} u-1 \right|\le  \int_{\hat{\mathcal{I}}^c} \mathcal{N}(u;0,1)\mr{d} u = O\left(e^{-c/(v_r+\sigma^2_w))}\right).
\]
where $\hat{\mathcal{I}}^c=\mathbb{R}\backslash \hat{\mathcal{I}}$ and $c>0$ is some constant. Similarly, for the numerator,
\[
\left| \int_{\hat{\mathcal{I}}} u\mathcal{N}(u;0,1)\mr{d} u \right|\le  \int_{\hat{\mathcal{I}}^c} |u|\mathcal{N}(u;0,1)\mr{d} u = O\left(e^{-c/(v_r+\sigma^2_w))}\right).
\]
Hence, when $0$ is an interior point of $\hat{\mathcal{I}}$, we have
\[
\lim_{v_r+\sigma^2_w\to0}\frac{\sqrt{v_r}}{{v_r+\sigma^2_w}}\frac{\int_{\hat{\mathcal{I}}} u \mathcal{N}(u;0,1)\mr{d} u}{\int_{\hat{\mathcal{I}}} \mathcal{N}(u;0,1)\mr{d} u}=0.
\]
Next, we decompose $S\Mydef \left( z-\mathbb{E}[Z|y,z_r] \right)^2/v_r$ as
\[
S = S\cdot \mathbb{I}\left( 0\in (f^{-1}(y) - z_r) \right)+S\cdot\mathbb{I}\left( 0\notin (f^{-1}(y) - z_r) \right).
\]
We note that as $v_r+\sigma^2_w\to0$, we have $z_r\to z$. Further, $z\in f^{-1}(y)$. Therefore,
\[
\lim_{v_r+\sigma^2_w\to0} \mathbb{I}\left( 0\in (f^{-1}(y) - z_r) \right) = 1.
\]
We have shown in \eqref{Eqn:noisy_DCT} that $S<\infty$. Hence,
\[
\begin{split}
&\lim_{v_r+\sigma^2_w\to0} \frac{1}{v_r}\left( z-\mathbb{E}[Z|y,z_r] \right)^2\\
&=\lim_{v_r+\sigma^2_w\to0}S\cdot \mathbb{I}\left( 0\in (f^{-1}(y) - z_r) \right)\\
&\quad +\lim_{v_r+\sigma^2_w\to0}S\cdot \mathbb{I}\left( 0\notin (f^{-1}(y) - z_r) \right)\\
&=\lim_{v_r+\sigma^2_w\to0}S\cdot \mathbb{I}\left( 0\in (f^{-1}(y) - z_r) \right)\\
&= \left( \sqrt{v_r}z - \sqrt{1-v_r}n \right)^2.
\end{split}
\]
Since $\mathbb{P}(\mathcal{E}_3^c)=0$, we have
\[
\begin{split}
&\mathbb{P}\left(T_2=0\right) \\
&= \mathbb{P}\Bigg(\lim_{v_r+\sigma^2_w\to0} \frac{1}{v_r}\left( z-\mathbb{E}[Z|y,z_r] \right)^2\mathbb{I}(\mathcal{E}_2)\\
&\quad =\frac{1}{v_r}\left( v_r z - \sqrt{v_r(1-v_r)}n\right)^2\mathbb{I}(\mathcal{E}_2)\Bigg)\\
&=1.
\end{split}
\]
\end{IEEEproof}

\vspace{5pt}

\begin{lemma}\label{Lem:Aux1}
Suppose $\sigma_w^2\neq 0$ and $\delta>\delta_\Lambda^\mathsf{alg}\ge 1/d(Y)$. Define $v^\diamond\Mydef \inf\{v\in[0,1]:\ g(v_r)=0\}$. For arbitrary $\epsilon\in(0,v^\diamond)$, define \
\BE\label{Eqn:v_dia_def_formal}
v_\epsilon^\diamond(\sigma^2_w)\Mydef \sup\bigg\{ v\in(0,\epsilon):\, \mmse_z(v_r,\sigma^2_w)=\left(1-\frac{1}{\delta}\right)v_r \bigg\},
\EE
where $\mmse_z(v_r,\sigma_w^2)$ is defined in \eqref{Eqn:MMSE_noisy}.
Then, the following holds as $\sigma^2_w\to0$
\BE
v_\epsilon^\diamond(\sigma^2_w)\le C(\delta,f)\cdot \sigma^2_w,
\EE
where $0< C(\delta,f)<\infty$ is a constant depending on $\delta$ and $f$.
\end{lemma}
\begin{IEEEproof}
Our proof is mainly concerned with proving the following upper bound of $\mmse_z(v_r,\sigma^2_w)$ as $v_r+\sigma^2_w\to0$:
\BE\label{Eqn:App_mmse_bound}
\mmse_z(v_r,\sigma^2_w)\le v_r\cdot \mathscr{D}(Z|Y)+o(v_r)+C\cdot \sigma^2_w,
\EE
where $C$ is some constant depending on $\delta$ and $f$. Using this, we can upper bound $v^\diamond_\epsilon(\sigma^2_w)$ by the solution to the solution to the following equation:
\BE\label{Eqn:v_diag_upp}
v_r\cdot \mathscr{D}(Z|Y)+o(v_r)+C\cdot \sigma^2_w = \left( 1-\frac{1}{\delta} \right) v_r.
\EE
Namely,
\[
v_\epsilon^\diamond(\sigma^2_w) \le \frac{C\sigma^2_w}{ 1-\frac{1}{\delta}- \mathscr{D}(Z|Y) - o(1)},
\]
which yields the desired result.\vspace{10pt}

The rest of this section is devoted to the proof of \eqref{Eqn:App_mmse_bound}. Lemma \ref{Lem:aux_noisy2} shows that the following holds
\[
\lim_{v_r+\sigma^2_w\to0} \frac{1}{v_r}\left( \mmse(v_r,\sigma^2_w)-\mmse_{\mr{app}}(v_r,\sigma^2_w)  \right)=0.
\]
As a consequence,
\BE\label{Eqn:mmse_app_Err}
\mmse_z(v_r,\sigma^2_w) = \mmse_{\mr{app}}(v_r,\sigma^2_w)  + o(1)\cdot v_r.
\EE
In what follows, we prove that the following holds for all $v_r\in(0,1)$
\BE\label{Eqn:mmse_app_noiseErr}
\begin{split}
\mmse_{\mr{app}}(v_r,\sigma^2_w) &= \mmse_{\mr{app}}(v_r,0) +O(\sigma^2_w).
\end{split}
\EE
We first recall that $\mmse_{\mr{app}}$ is defined as
\BE\label{Eqn:mmse_app2}
\begin{split}
&\mmse_{\mr{app}}(v_r,\sigma^2_w) \\
 \Mydef  & \underbrace{v_r \mathbb{E}\left( \left( \frac{\sigma^2_wR}{v_r+\sigma^2_w} -\frac{\sqrt{v}\sigma_w W}{v_r+\sigma^2_w} \right)^2 \mathbb{I}(\mathcal{E}_1) \right)}_{\text{Part 1}}\\
 &\quad +\underbrace{v_r\mathbb{E}\left( R^2 \mathbb{I}(\mathcal{E}_1^c)\right)}_{\text{Part 2}}
\end{split}
\EE
Clearly, Part one is $O(\sigma^2_w)$. We next show that the difference between Part two and $\mmse_{\mr{app}}(v_r,0)$ is $O(\sigma^2_w)$. To this end, notice that $R$ is correlated with $U_\sigma$, and it is convenient to decompose it as
\[
R=\frac{\sqrt{v_r}}{1+\sigma^2_w} U_\sigma + \sqrt{\frac{1-v_r+\sigma_w^2}{  1 + \sigma^2_w } } S,
\] 
where $S\sim\mathcal{N}(0,1)$ and $S\Perp U_\sigma$. Then,
\[
\begin{split}
\text{Part 2}&=v_r\,\mathbb{E}\left( R^2 \mathbb{I}(\mathcal{E}_1^c)\right)\\
&=v_r\,\mathbb{E}\left( \left(\frac{\sqrt{v_r}}{1+\sigma^2_w} U_\sigma + \sqrt{\frac{1-v_r+\sigma_w^2}{  1 + \sigma^2_w } } S\right)^2 \mathbb{I}(\mathcal{E}_1^c)\right)\\
&=\frac{v_r^2}{1+\sigma_w^2}\mathbb{E}\left( U_\sigma^2\mathbb{I}(\mathcal{E}_1^c)\right)+{\frac{v_r(1-v_r+\sigma_w^2)}{  1 + \sigma^2_w } }\cdot\mathbb{P}(\mathcal{E}_1^c)
\end{split}
\]
We notice the following facts: (i) $U_\sigma = Z+\sigma_w W$; (ii) $\mathcal{E}_1^c=\mathbb{I}(U_\sigma \in \{x: f^{-1}(f(x)) \text{ is an interval}\})$.
It can be shown that there exists a constant $C<\infty$ such that the following hold for all $v_r\in(0,1)$
\[
\begin{split}
\mathbb{E}\left( U_\sigma^2\mathbb{I}(\mathcal{E}_1^c)\right) &\le \left.\mathbb{E}\left( U_\sigma^2\mathbb{I}(\mathcal{E}_1^c)\right)\right|_{\sigma_w=0} + C\cdot \sigma_w^2,\\
\mathbb{P}(\mathcal{E}_1^c) &\le \left.\mathbb{P}(\mathcal{E}_1^c)\right|_{\sigma_w=0} + C\cdot \sigma_w^2,
\end{split}
\]
as $\sigma^2_w\to0$.
We skip the details here. Combining the above arguments proves \eqref{Eqn:mmse_app_noiseErr}.

Finally, combining \eqref{Eqn:mmse_app_Err} and \eqref{Eqn:mmse_app_noiseErr}, we have 
\[
\mmse_z(v_r,\sigma^2_w)= \mmse_{\mr{app}}(v_r,\sigma_w=0)+o(1)v_r+O(\sigma^2_w),
\]
as $v_r+\sigma^2_w\to0$. Notice that
\[
\begin{split}
 \mmse_{\mr{app}}(v_r,\sigma_w=0) &= \mathbb{E}\left( v_r Z-\sqrt{v_r(1-v_r)}N \right)^2\mathbb{I}(\mathcal{E}_1^c)
 \end{split}
\]
where without slight abuse of notation $\mathcal{E}_1^c=\mathbb{I}(Z \in \{x: f^{-1}(f(x)) \text{ is an interval}\})$ (namely, it is the previous defined $\mathcal{E}_1^c$ at $\sigma_w=0$). 
This term has the same behavior as $\mmse_z(v_r)$ for small $v_r$. Here, the $O(v_r)$ term is
\[
v_r(1-v_r)\cdot \mathbb{E}[N^2 \mathbb{I}(\mathcal{E}_1^c)]=v_r(1-v_r)\cdot \mathscr{D}(Z|Y).
\]
Hence, overall we have
\[
\mmse_z(v_r,\sigma^2_w)\le v_r\cdot \mathscr{D}(Z|Y)+o(v_r)+C\cdot \sigma^2_w,
\]
as $v_r+\sigma^2_w\to0$.
\end{IEEEproof}

\vspace{5pt}

\subsection{Main Proof for Lemma \ref{Lem:noise_sensitivity}}\label{Sec:Noise_main}
Let $g(v_r,\sigma^2_w)$ and $P(v_r,\sigma^2_w)$ be the noisy counterparts of $g(v_r)$ and $P(v_r)$, respectively:
\BS
\begin{align}
g(v_r,\sigma_w^2)\Mydef 1-\delta\left(1-\frac{\mmse_z(v_r,\sigma^2_w)}{v_r}\right)\label{Eqn:App_g_def_noisy},\\
P(v_r,\sigma^2_w)\Mydef \mathbb{E}\left[ \frac{\phi(v_r,\sigma_w^2)}{\phi(v_r,\sigma_w^2) + \Lambda} \right] - g(v_r,\sigma^2_w),\label{Eqn:App_P_noisy}
\end{align}
\ES
where $\mmse_z(v_r,\sigma_w^2)$ and $\phi(v_r,\sigma_w^2)$ are defined in \eqref{Eqn:MMSE_noisy} and \eqref{Eqn:phi_noisy_app}, respectively.

The behaviors of $g$ and $P$ around $v_r=0$ are different under the noiseless and noisy settings. Specifically,
\[
\lim_{v_r\to0}g(v_r,\sigma_w^2)=
\begin{cases}
1-\delta\cdot d(Y)<0 & \text{ if $\sigma^2_w=0$},\\
1 &\text{ if $\sigma^2_w\neq0$},
\end{cases}
\]
which is from the definition of conditional MMSE dimension and the fact that the distribution $P_{Z|Y_\sigma}$ (where $Y_\sigma\Mydef f(Z+\sigma_w W)$) is absolutely continuous when $\sigma^2_w\neq0$. Further, from Lemma \ref{Lem:phi_noisy}, 
\[
0<\phi(0,\sigma^2_w)\Mydef \lim_{v_r\to0}\phi(v_r,\sigma^2_w)<\infty.
\]
Hence,
\[
\lim_{v_r\to0}P(v_r,\sigma_w^2)=
\begin{cases}
\delta \cdot d(Y)-1>0 & \text{ if $\sigma^2_w=0$},\\
\mathbb{E}\left[\frac{\phi(0,\sigma^2_w)}{\phi(0,\sigma^2_w)+\Lambda}\right]-1<0 &\text{ if $\sigma^2_w\neq0$},
\end{cases}
\]

Since $g(0)<0$ (where $g(v_r)$ is a shorthand for $g(v_r,0)$), there exists a neighbor of $v_r=0$ for which $g(v_r)<0$. Define 
\BE\label{Eqn:v_diam_noiseless_def}
v^\diamond\Mydef \inf\{v\in[0,1]:\ g(v_r)=0\}.
\EE
If $g(v_r)>0$ for all $v_r\in[0,1]$, we set $v^\diamond=1$.
Note that $P$ and $g$ are continuous functions of $\sigma^2_w\ge0$ whenever $v_r\neq0$. Let $\epsilon\in(0,v^\diamond)$ be an arbitrary constant. By continuity, for sufficiently small $\sigma^2_w$, we have
\BE\label{Eqn:noisy_P_g_tmp}
P(v_r,\sigma^2_w)>0,\quad \forall v_r\in(\epsilon,1),
\EE
and
\BE\label{Eqn:noisy_P_g_tmp2}
 g(\epsilon,\sigma^2_w) < 0.
\EE
Since $g(0,\sigma^2_w)=1$, $g(v_r,\sigma^2_w)=0$ has at least one solution in $v_r\in(0,\epsilon)$. Let $v_\epsilon^\diamond(\sigma^2_w)$ be the largest one, i.e., 
\BE\label{Eqn:App_v_diam}
v_\epsilon^\diamond(\sigma^2_w)\Mydef \sup\left\{ v\in(0,\epsilon):\, g(v_r,\sigma^2_w)=0 \right\}.
\EE
This definition ensures $g(v_r,\sigma^2_w)<0,\, \forall v_r\in(v_\epsilon^\diamond(\sigma^2_w),\epsilon)$ (see \eqref{Eqn:noisy_P_g_tmp2}). This further ensures $P(v_r,\sigma^2_w)>0$ for $v_r\in(v_\epsilon(\sigma^2_w),\epsilon)$, since the first term in \eqref{Eqn:App_P_noisy} is positive. Together with \eqref{Eqn:noisy_P_g_tmp}, we have
\BE\label{Eqn:noisy_p_tmp}
P(v_r,\sigma^2_w)>0\quad \forall v_r\in(v_\epsilon^\diamond(\sigma^2_w),1).
\EE

Now, let us define
\BE\label{Eqn:v_r_noisy_def}
v_r^\star(\sigma^2_w)=\sup\left\{v\in[0,1]:P(v_r,\sigma^2_w)=0\right\},
\EE
which is the fixed point reached by the state evolution.
As a consequence of \eqref{Eqn:noisy_p_tmp} and \eqref{Eqn:v_r_noisy_def}, we have (for small enough $\sigma_w^2$)
\[
v_r^\star(\sigma^2_w)\le v_\epsilon^\diamond(\sigma^2_w).
\]
By the monotonicity of $\phi(v_r,\sigma^2_w)$ with respect to $v_r$ (see Lemma \ref{Lem:phi_noisy}), we have the following for small $\sigma^2_w$
\BE\label{Eqn:App_phi_upp}
\begin{split}
\phi(v_r^\star(\sigma^2_w),\sigma^2_w)&\le \phi(v_\epsilon^\diamond(\sigma^2_w),\sigma^2_w)\\
&\overset{(a)}{=}(\delta-1)\cdot v_\epsilon^\diamond(\sigma^2_w)\\
&\overset{(b)}{\le }(\delta-1)\cdot C(\delta,f)\cdot\sigma^2_w,
\end{split}
\EE
where step (a) follows from \eqref{Eqn:App_g_def_noisy} and the fact that $v_\epsilon^\diamond(\sigma^2_w)$ is a solution to $g(v_r,\sigma^2_w)=0$, and step (b) is due to Lemma \ref{Lem:Aux1}. Together with Lemma \ref{Lem:phi_noisy}, we finally have
\BE\label{Eqn:v_star_bound}
\sigma^2_w\le\phi(v^\star(\sigma^2_w),\sigma^2_w)\le (\delta-1)\cdot C(\delta,f)\cdot\sigma^2_w.
\EE

Finally, for small $\sigma^2_w$, the MSE is given by
\[
\begin{split}
\mathsf{MSE}_\Lambda^{\star}(\sigma^2_w,\Lambda) &= \mathbb{E}\left[ \frac{\phi(v_r^{\star}(\sigma^2_w),\sigma_w^2)}{\phi(v_r^{\star}(\sigma^2_w),\sigma_w^2)+\Lambda}\right]\\
&=\phi\left(v^\star(\sigma^2_w),\sigma^2_w\right)\cdot \left(\mathbb{E}[\Lambda^{-1}]+o(1)\right).
\end{split}
\]
From \eqref{Eqn:v_star_bound}, we have
\[
\begin{split}
\sigma^2_w\cdot \left(\mathbb{E}[\Lambda^{-1}]+o(1)\right) &\le \mathsf{MSE}^{\star}_\Lambda(\sigma^2_w,\Lambda)\\
&\le (\delta-1) C(\delta,f)\sigma^2_w\left(\mathbb{E}[\Lambda^{-1}]+o(1)\right).
\end{split}
\]
This completes our proof of Lemma \ref{Lem:noise_sensitivity}.

{

\bibliographystyle{IEEEtran}    
\bibliography{IEEEabrv,Phase_retrieval3}         
}



\begin{IEEEbiographynophoto}{Junjie Ma}
received the B.E. degree from Xidian University, Xi’an, China, in 2010, and the Ph.D. degree from the City University of Hong Kong, Hong Kong, in 2015. He was a Postdoctoral Researcher with City University of Hong Kong, from 2015 to 2016, and with Columbia University from 2016 to 2019, and with Harvard University from 2019 to 2020. He has been an assistant professor at the Institute of Computational Mathematics and Scientific/Engineering Computing, AMSS, Chinese Academy of Sciences, since July 2020. His current research interests include message passing algorithms and their applications for high dimensional signal processing.
\end{IEEEbiographynophoto}

\begin{IEEEbiographynophoto}{Ji Xu}
holds a Ph.D. degree in Computer Science department
at Columbia University. Previously, he holds a M.A. in Statistics from
Columbia University, and a B.S. in Math with minor in Economics from
Peking University. He received Kathryn and Shelby Cullom Davis International Fellowship in 2015 and Inaugural Cheung-Kong Graduate School
of Business (CKGSB) Fellowship in 2018 and 2019. His research interests are in
algorithmic statistics, machine learning and high dimensional statistics. His
works have been accepted in top tier conferences including oral presentation
at NIPS 2016 and long talk presentation at ICML 2018.
\end{IEEEbiographynophoto}

\begin{IEEEbiographynophoto}{Arian Maleki}
is an associate professor in the Department of Statistics at
Columbia University. He received Ph.D. from Stanford University in 2010.
Before joining Columbia University, he was a postdoctoral scholar in the
department of Electrical and Computer Engineering at Rice University
\end{IEEEbiographynophoto}


\end{document}